\newlength{\strutheight}
\newcommand\arxive{\cite{wbhhk:stochastic-decision-petri-nets-arxiv}}
\newcommand{\short}[1]{}\newcommand{\full}[1]{#1}
\newcommand{\mN}{{\mathbb{N}}}
\newcommand{\mR}{{\mathbb{R}}}
\newcommand{\mP}{{\mathbb{P}}}
\newcommand{\mC}{{\mathbb{C}}}
\newcommand{\mE}{{\mathbb{E}}}
\newcommand\funcset[2]{\left({#1}\rightarrow{#2}\right)}
\newcommand\setto[1]{\{1,\dots,#1\}}
\newcommand\val{\mathbb{V}}
\newcommand\valfunc{V}
\newcommand{\transitions}{\mathit{tr}}
\newcommand{\places}{\mathit{pl}}
\DeclareMathOperator{\length}{len}
\DeclareMathOperator{\pref}{pre}
\DeclareMathOperator{\exte}{ext}
\newcommand{\firingsequences}[1]{\mathcal{FS}(#1)}
\newcommand{\validconfs}[1]{\mathcal{C}^\omega(#1)}
\newcommand{\validsubconfs}[1]{\mathcal{C}(#1)}
\newcommand{\bc}[1]{\mathit{BC}(#1)}
\newcommand\pnc{\ensuremath{\mathcal{N}}}
\newcommand{\np}{\mathsf{NP}}
\newcommand{\p}{\mathsf{P}}
\newcommand{\pp}{\mathsf{PP}}
\newcommand{\hashp}{\mathsf{\#P}}
\newcommand{\half}{\nicefrac{1}{2}}
\newcommand{\pre}[1]{\ensuremath{\!~^\bullet{#1}}}
\newcommand{\post}[1]{\ensuremath{{#1} {^\bullet}}}
\newcommand\set[1]{\{#1\}}
\newcommand\transition[3]{#1\rightarrow_{#2}#3}
\DeclareMathOperator{\supp}{supp}
\renewcommand{\phi}{\varphi}
\renewcommand{\epsilon}{\varepsilon}
\newcommand{\del}[1]{}
\newcommand{\dmap}{\ensuremath{\mathsf{D\mbox{-}MAP}}}
\newcommand{\dpr}{\ensuremath{\mathsf{D\mbox{-}PR}}}
\newcommand{\sat}{\ensuremath{\mathsf{SAT}}}
\newcommand{\threesat}{\ensuremath{\mathsf{3\mbox{-}SAT}}}
\newcommand{\valn}[1]{\ensuremath{\mathsf{#1\mbox{-}VAL}}}
\newcommand{\fconval}{\ensuremath{\mathsf{FCON\mbox{-}VAL}}}
\newcommand{\safcval}{\ensuremath{\mathsf{SAFC\mbox{-}VAL}}}
\newcommand{\poln}[1]{\ensuremath{\mathsf{#1\mbox{-}POL}}}
\newcommand{\fconpol}{\ensuremath{\mathsf{FCON\mbox{-}POL}}}
\newcommand{\safcpol}{\ensuremath{\mathsf{SAFC\mbox{-}POL}}}
\def\orcidID#1{\href{https://orcid.org/#1}{\includegraphics[height=8pt]{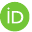}}}
\author{Florian Wittbold\inst{1}\orcidID{0000-0001-8307-503X}
   \and Rebecca Bernemann\inst{1}\orcidID{0000-0002-3240-0952}
   \and Reiko Heckel\inst{2}\orcidID{0000-0003-4719-0772}
   \and Tobias Heindel\inst{3}\orcidID{0000-0003-3371-8564}
   \and Barbara K\"onig\inst{1}\orcidID{0000-0002-4193-2889}}
\authorrunning{F. Wittbold et al.}  \institute{Universit\"at
  Duisburg-Essen, Duisburg, Germany \and University of Leicester, UK
  \and Heliax Technologies GmbH}
\title{Stochastic Decision Petri Nets}
\begin{document}

\maketitle


\begin{abstract}
  We introduce stochastic decision Petri nets (SDPNs), which are a
  form of stochastic Petri nets equipped with rewards and a control
  mechanism via the deactivation of controllable transitions.
  Such nets can be translated into Markov decision processes (MDPs),
  potentially leading to a combinatorial explosion in the number of
  states due to concurrency.
  Hence we restrict ourselves to instances where nets are either safe,
  free-choice and acyclic nets (SAFC nets) or even occurrence nets and
  policies are defined by a constant deactivation pattern.
  We obtain complexity-theoretic results for such cases via a close
  connection to Bayesian networks, in particular we show that for SAFC
  nets the question whether there is a policy guaranteeing a reward
  above a certain threshold is $\np^\pp$-complete.
  We also introduce a partial-order procedure which uses an SMT solver
  to address this problem.
\end{abstract}

\section{Introduction}\label{sec:introduction}

State-based probabilistic systems are typically modelled as Markov
chains~\cite{Tolver2016}, i.e., transition systems where transitions
are annotated with probabilities.
This admits an intuitive graphical visualization and efficient
analysis techniques~\cite{gs:markov-chains}.
By introducing additional non-determinism, one can model a system
where a player can make decisions, enriched with randomized choices.
This leads to the well-studied model of Markov decision processes
(MDPs)~\cite{bellman1957markovian,fs:handbook-mdps} and the challenge
is to synthesize strategies that maximize the reward of the player.

In this paper we study stochastic systems enriched with a mechanism
for decision making in the setting of concurrent systems.
Whenever a system exhibits a substantial amount of concurrency, i.e.,
events that may potentially happen in parallel, compiling it down to a
state-based system -- such as an MDP -- can result in a combinatorial
state explosion and a loss in efficiency of MDP-based methods.
We base our models on stochastic Petri nets~\cite{Marsan88}, where
Petri nets are a standard formalism for modelling concurrent systems,
especially such systems where resources are generated and consumed.
When considering the discrete-time semantics of such stochastic nets,
it is conceptually easy to transform them into Markov chains, but this
typically leads to a state space explosion.

There exist successful partial order methods for analyzing concurrent
systems that avoid explicit interleavings and the enumeration of all
reachable states.
Instead, they work with partial orders -- instead of total orders --
of events.
While such techniques are well understood in the absence of random
choices, leading for instance to methods such as unfoldings~\cite{eh:unfoldings-book}, there are considerable difficulties to
reconcile probability and partial order.
Progress has been made by the introduction of the concept
of branching cells~\cite{ab:true-concurrency-probabilistic} that
encapsulate independent choices, but to our knowledge there is no
encompassing theory that provides off-the-shelf partial order methods
for computing the probability of reaching a certain goal (e.g.
marking a certain place) in a stochastic net.

The contributions of this paper are the introduction of a new model:
stochastic decision Petri nets (SDPNs) and its connection to Markov
decision processes (MDPs).  The transformation of SDPNs into MDPs is
relatively straightforward, but may lead to state space explosion,
i.e., exponentially many markings, due to the concurrency inherent in
the Petri net.  This can make the computation of the optimal policy
infeasible.  We restrict ourselves to a subclass of nets which are
safe, acyclic and free-choice (SAFC) and to constant policies and
study the problem of determining a policy that guarantees a payoff
above some bound.  Our result is that the problem $\safcpol$ of
determining such a policy, despite the restrictions, is still
${\np}^{\pp}$-complete.  We reduce from the $\dmap$ problem for
Bayesian networks ~\cite{park2004complexity} (in fact the two problems
are interreducible under mild restrictions) and show the close
connection of reasoning about stochastic Petri nets and Bayesian
networks.  Furthermore, for SAFC nets, there is a partial-order
solution procedure via an SMT solver, for which we obtain encouraging
runtime results.  For the simpler free-choice occurrence nets, we
obtain an $\np$-completeness result.

Note that the \full{main body of the }paper contains some proof
sketches, while full proofs and an additional example can be found in
\short{\arxive}\full{the appendix}.

\section{Preliminaries}\label{sec:preliminaries}

By $\mN$ we denote the natural numbers without $0$, while $\mN_0$
includes $0$.

Given two sets $X,Y$ we denote by $\funcset{X}{Y}$ the set of all
functions from $X$ to $Y$.
Given a function $f\colon X\to\mN_0$ or $f\colon X\to \mR$ with $X$
finite, we define $\|f\|_\infty = \max_{x\in X} f(x)$ and
$\supp(f)=\set{x\in X\mid f(x)\neq 0}$.

\paragraph*{Complexity Classes:} In addition to well-known complexity
classes such as $\p$ and $\np$, our results also refer to $\pp$
(see~\cite{Papadimitriou94}).
This class is based on the notion of a probabilistic Turing machine,
i.e., a non-deterministic Turing machine whose transition function is
enriched with probabilities, which means that the acceptance function
becomes a random variable.
A language $L$ lies in $\pp$ if there exists a probabilistic Turing
machine $M$ with polynomial runtime on all inputs such that a word
$w\in L$ iff it is accepted with probability strictly greater than
$\half$.
As probabilities we only allow numbers $\rho$ that are efficiently
computable, meaning that the $i$-th bit of $\rho$ is computable in a
time polynomial in $i$. (See~\cite{ab:computational-complexity-modern}
for a discussion on why such probabilistic Turing machines have equal
expressivity with those based on fair coins, which is not the case if
we allow arbitrary numbers.)

Given two complexity classes $A,B$ and their corresponding machine
models, by $A^B$ we denote the class of languages that are solved by a
machine of class $A$, which is allowed to use an oracle answering
yes/no-questions for a language $L\in B$ at no extra cost in terms of
time or space complexity.
In particular $\np^\pp$ denotes the class of languages that can be
accepted by a non-deterministic Turing machine running in polynomial
time that can query a black box oracle solving a problem in $\pp$.

By Toda's theorem~\cite{t:todas-theorem}, a polynomial time Turing
machine with a $\pp$ oracle ($\p^\pp$) can solve all problems in the
polynomial hierarchy.

In order to prove hardness results we use the standard polynomial-time
many-one reductions, denoted by $A\le_p B$ for problems $A,B$
(see~\cite{gj:intractability}).

\paragraph*{Stochastic Petri Nets:}
A stochastic Petri net~\cite{Marsan88} is given by a tuple
$N = (P,T,\pre{(\,)}, \post{(\,)}, \Lambda, m_0)$ where $P$ and $T$
are finite sets of places and transitions,
$\pre{(\ )}, \post{(\ )}:T \rightarrow \funcset{P}{\mN_0}$ determine
for each transition its pre-set and post-set including multiplicities,
$\Lambda\colon T\to \mR_{>0}$ defines the firing rates and
$m_0\colon P\to \mN_0$ is the initial marking.
By $\mathcal{M}(N)$ we denote the set of all markings of $N$, i.e.,
$\mathcal{M}(N) = \funcset{P}{\mN_0}$.

We will only consider the discrete-time semantics of such nets.
The firing rates determine stochastically which transition is fired in
a marking where multiple transitions are enabled: When transitions
$t_1,\dots,t_n\in T$ are enabled in a marking $m\in\mathcal{M}(N)$
(i.e., $^\bullet t_i\leq m$ pointwise), then transition $t_i$ fires
with probability $\Lambda(t_i)/\sum_{j=1}^n\Lambda(t_j)$, resulting in
a discrete step
$\transition{m}{t_i}{m'\coloneqq m-\pre{t_i}+\post{t_i}}$.
In particular, the firing rates have no influence on the reachability
set $\mathcal{R}(N)\coloneqq\set{m\in\mathcal{M}(N)
\mid m_0\rightarrow^*m}$ but only define the probability of reaching
certain places or markings.
Defining \enquote{empty} transitions $\transition{m}{\epsilon}{m}$ for
markings $m\in\mathcal{R}(N)$ where no transition is enabled, such a
stochastic Petri net can be interpreted as a Markov chain on the set
of markings $\mathcal{M}(N)$.

This Markov chain thus generates a (continuous) probability space over
sequences $(m_0,m_1,\dots)\in\mathcal{M}(N)^\omega$ where a sequence
is called valid if $m_0$ is the initial marking of the Petri net and
for a prefix $(m_0,\dots,m_n)$ all cones
$\set{(m'_0,m'_1,\dots)\in\mathcal{M}(N)^\omega \mid\forall
  k=0,\dots,n:m'_k=m_k}$ have non-zero probability.  We write
$\firingsequences{N}\coloneqq\set{\mu\in\mathcal{M}(N)^\omega
  \mid\mu\text{ is valid}}$ to denote the set of valid sequences.
We assume that no two transitions have the same pre- and
postconditions to have a one-to-one-correspondence between valid
sequences and firing sequences
$\mu:(\transition{m_0}{t_1}{\transition{m_1}{t_2}{\dots}})$.

For a firing sequence $\mu$, we write
$\mu^k:\transition{m_0}{t_1}{
  \transition{m_1}{t_2}{\transition{\dots}{t_k}{m_k}}}$ to denote the
finite subsequence of the first $k$ steps,
$\length(\mu)\coloneqq\min\set{k\in\mN\mid t_k=\varepsilon}-1$, for
its length, as well as
\[\places(\mu)\coloneqq\bigcup_{n=0}^\infty\supp(m_n) \qquad\qquad
  \transitions(\mu)\coloneqq\set{t_n\mid
    n\in\mN}\setminus\set{\varepsilon} \]
to denote the set of places reached in $\mu$ (or, analogously,
$\mu^k$), and the set of fired transitions in $\mu$
(independent of their firing order), respectively.

We are, furthermore, interested in the following properties of Petri
nets: A Petri net $N$ as above is called
\begin{itemize}
  \item
    \emph{ordinary} iff all transitions require and produce at most
    one token in each place
    ($\|\pre{t}\|_\infty,\|\post{t}\|_\infty \le 1$ for all $t\in T$);
  \item
    \emph{safe} iff it is ordinary and all reachable markings also
    only have at most one token in each place ($\|m\|_\infty\leq 1$
    for all $m\in\mathcal{R}(N)$);
  \item
    \emph{acyclic} iff the transitive closure $\prec_N^+$ of the
    causal relation $\prec_N$ (with $p\prec_N t$ if $\pre{t}(p) > 0$
    and $t \prec_N p$ if $\post{t}(p) > 0$) is irreflexive;
  \item an \emph{occurrence net} iff it is safe, acyclic, free of
    backward conflicts (all places have at most one predecessor
    transition, i.e., $|\{t\mid \post{t}(p) > 0|\leq 1$ for all
    $p\in P$) and self-conflicts (for $x\in P\cup T$, there exist no
    two distinct conflicting transitions $t,t'\in T$, i.e.,
    transitions sharing preconditions, on which $x$ is causally
    dependent, i.e., $t,t'\prec_{N}^+ x$), and the initial marking has
    no causal predecessors (for all $p\in P$ with $m_0(p)=1$, we have
    $\post{t}(p) = 0$ for all $t\in T$);
  \item
    \emph{free-choice}~\cite{DeselEsparza95} iff it is ordinary and
    all transitions $t,t'\in T$ are either both enabled or disabled in
    all markings (i.e., $^\bullet t=^\bullet t'$ or
    $\supp(^\bullet t)\cap\supp(^\bullet t')=\emptyset$);
  \item
    \emph{$\phi$-bounded} (for $\phi\colon \mN_0\to \mN_0$) iff all
    its runs, starting from $m_0$, have at most length
    $\phi(|P|+|T|)$, i.e., iff $\length(\mu) \le \phi(|P|+|T|)$ for
    all firing sequences $\mu\in\firingsequences{N}$.
\end{itemize}

We will abbreviate the class of free-choice occurrence Petri nets as
FCON, safe and acyclic free-choice nets as SAFC nets, and the class of
$\phi$-bounded Petri nets as $[\phi]$BPN.
Note that FCON $\subseteq$ SAFC and also
SAFC $\subseteq$ $[\mathit{id}]$BPN for the identity $\mathit{id}$.\footnote{Indeed, $[\mathit{id}]$BPN contains any safe
  and acyclic Petri net, omitting the free-choice constraint.}

\medskip

We also introduce some notation specifically for SAFC nets: As common
in the analysis of safe Petri nets, we will interpret markings as well
as pre- and postconditions of transitions as subsets of the set $P$ of
places rather than functions $P\rightarrow\{0,1\}\subseteq\mN_0$.

The set of maximal configurations will be denoted by
$\validconfs{N} \coloneqq
\set{\transitions(\mu)\mid\mu\in\firingsequences{N}}$ and
configurations by
$\validsubconfs{N} \coloneqq
\set{\transitions(\mu^k)\mid\mu\in\firingsequences{N},k\in\mN_0}$.

An important notion in the analysis of a (free-choice) net are
branching cells (see
also~\cite{BruniMM20,ab:true-concurrency-probabilistic}).  We will
define a cell to be a subset of transitions $\mC\subseteq T$ where all
transitions $t\in\mC$ share their preconditions and all
$t'\in T\setminus\mC$ share no preconditions with $t\in\mC$.  In other
words, $\mC$ is an equivalence class of a relation $\leftrightarrow$
on $T$ defined by
\[\forall t,t'\in T: t\leftrightarrow t'\Longleftrightarrow
\pre{t}=\pre{t'}.\]
We will write $\mC_t\coloneqq[t]^\leftrightarrow$ to denote the
equivalence class of transition $t\in T$ and
$\pre{\mC}\coloneqq\bigcup_{t\in\mC}\pre{t}$ as well as
$\post{\mC}\coloneqq\bigcup_{t\in\mC}\post{t}$ to denote the sets of
pre- and postplaces of $\mC$, respectively.
The set of all cells of a net $N$ is denoted by $\bc{N}$.

\paragraph*{Markov decision processes:}
A Markov decision process (MDP) is a tuple
$(S,A,\delta,\linebreak r,s_0)$ consisting of finite sets $S$, $A$ of
states and actions, a function
$\delta\colon S\times A\rightarrow\mathcal{D}(S)$ of probabilistic
transitions (where $\mathcal{D}(S)$ is the set of probability
distributions on $S$), a reward function
$r\colon S\times A\times S\rightarrow\mR$ of rewards and an initial
state $s_0\in S$ (see also~\cite{bellman1957markovian,fs:handbook-mdps}).

A policy (or strategy) for an MDP is some function
$\pi\colon S\rightarrow A$.
It has been shown that such stationary deterministic policies can act
optimally in such an (infinite-horizon) MDP setting (see also~\cite{fs:handbook-mdps}).
A policy gives rise to a Markov chain on the set of states with
transitions $s\mapsto \delta(s,\pi(s))\in\mathcal{D}(S)$.
The associated probability space is $s_0S^\omega$, the set of all
infinite paths on $S$ starting with $s_0$, which -- due to its
uncountable nature -- has to be dealt with using measure-theoretic
concepts.
As before we equip the probability space with a $\sigma$-algebra generated by
all cones, i.e., all sets of words sharing a common prefix.

The value (or payoff) of a policy $\pi$ is then given as the
expectation of the (undiscounted) total reward (where $\mathbf{s}_i$,
$i\in\mathbb{N}_0$ are random variables, mapping an infinite path to
the $i$-th state, i.e., they represent the underlying Markov chain):
\[\mE\left[\sum_{n\in\mN_0}
    r(\mathbf{s}_{n},\pi(\mathbf{s}_{n}),\mathbf{s}_{n+1})\right].\]
To avoid infinite values, we have to assume that the sum is bounded.

The problem of finding an optimal policy $\pi\colon S\rightarrow A$
for a given MDP $(S,A,\delta,r,s_0)$ with finite state and action
space is known to be solvable in polynomial time using linear
programming~\cite{fs:handbook-mdps,ldk:complexity-mdps}.

\paragraph*{Bayesian Networks:} Bayesian networks are graphical models
that give compact representations of discrete probability
distributions, exploiting the (conditional) independence of random
variables.

A (finite) probability space $(\Omega,\mP)$ consists of a finite set
$\Omega$ and a probability function $\mP\colon \Omega\to [0,1]$ such
that $\sum_{\omega\in\Omega} \mP(\omega) = 1$.
A Bayesian network~\cite{Pearl00} is a tuple $(X,\Delta,P)$ where
\begin{itemize}
  \item
    $X=(X_i)_{i=1,\dots,n}$ is a (finite) family of random variables
    $X_i\colon \Omega\rightarrow V_i$, where $V_i$ is finite.
  \item
    $\Delta\subseteq\set{1,\dots,n}\times\set{1,\dots,n}$ is an
    acyclic relation that describes dependencies between the
    variables, i.e., its transitive closure $\Delta^+$ is irreflexive.
    By $\Delta^i=\set{j\mid (j,i)\in\Delta}$ we denote the parents of
    node $i$ according to $\Delta$.
  \item
    $P=(P_i)_{i=1,\dots,n}$ is a family of probability matrices
    $P_i\colon\prod_{j\in\Delta^i}V_j\to\mathcal{D}(V_i)$, whose
    entries are given by $P_i(v_i\mid(v_j)_{j\in\Delta^i})$.
\end{itemize}

A probability function $\mP$ is consistent with such a Bayesian
network whenever for $v=(v_i)_{i=1,\dots,n}\in\prod_{i=1}^n V_i$ we
have
\[\mP(X=v)=\prod_{i=1}^n P_i(v_i\mid(v_j)_{j\in\Delta^i}).\]

The size of a Bayesian network is not just the size of the graph, but
the sum of the size of all its matrices (where the size of an
$m\times n$-matrix is $m\cdot n$).
In particular, note that a node with $k$ parents in a binary Bayesian
network (i.e., with $|V_i|=2$ for all $i$) is associated with a
$2\times 2^k$ probability matrix.

\begin{wrapfigure}[10]{r}{0.35\textwidth}
  \vspace{-1.5cm}

  \begin{center}
    \includegraphics[scale=0.5]{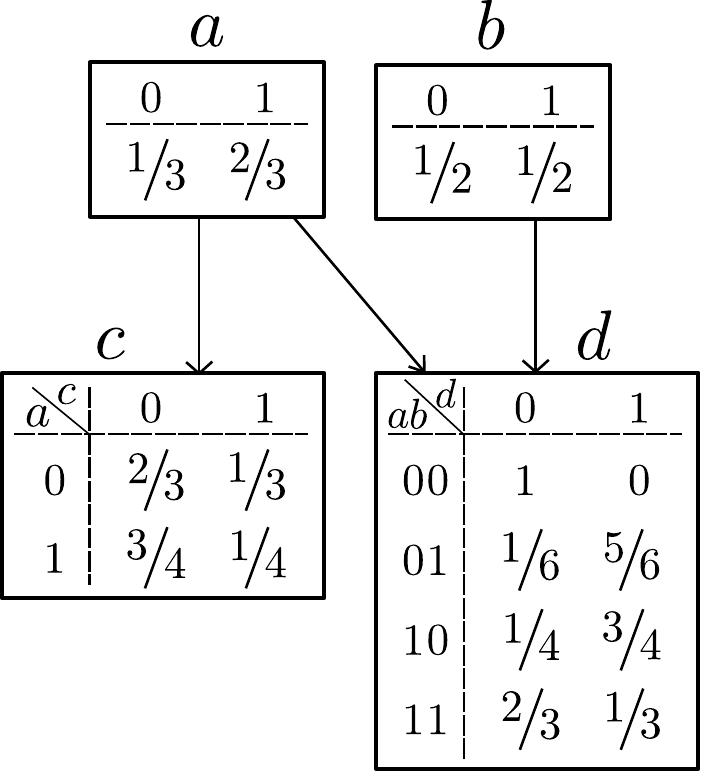}
  \end{center}
  \vspace*{-0.6cm}
  \caption{A Bayesian Network}
  \label{fig:BN}
\end{wrapfigure}
\begin{example}
  An example Bayesian network is given in Figure~\ref{fig:BN}.
  There are four random variables ($a,b,c,d$) with codomain $\{0,1\}$.
  The tables in the figure denote the conditional probabilities, for
  instance $P_d(0\mid01)=\mP(X_d=0\mid X_a=0,X_b=1)=\nicefrac{1}{6}$,
  i.e., one records the probability that a random variable has a
  certain value, dependent on the value of its parents in the
  graph.
  The probability $\mP(X = 0100) = \mP(X_a=0,X_b=1,X_c=0,X_d=0)$ is
  obtained by multiplying
  $P_a(0)\cdot P_b(1)\cdot P_c(0\mid 0)\cdot P_d(0\mid 01) =
  \nicefrac{1}{3}\cdot
  \nicefrac{1}{2}\cdot\nicefrac{2}{3}\cdot\nicefrac{1}{6} =
  \nicefrac{1}{54}$.
\end{example}

We are interested in the following two problems for Bayesian networks
(see also~\cite{park2004complexity}):

\begin{itemize}
  \item
    \dpr: Given the Bayesian network $(X,\Delta,P)$ and
    $E = \{X_{i_1},\dots,X_{i_\ell}\}\subseteq X$,
    $e\in V_E \coloneqq \prod_{j=1}^\ell V_{i_j}$ (the evidence) and a
    rational $p>0$, does it hold that $\mP(E=e) > p$?
    This problem is known to be $\pp$-complete
    ~\cite{littman2001stochastic}.
  \item
    \dmap: Given a Bayesian network $(X,\Delta,P)$, a rational
    number $p>0$, disjoint subsets $E,F\subseteq X$,\footnote{The
    variables contained in $F$ are called MAP variables.} and evidence
    $e\in V_E$, does there exist $f\in V_F$ such that
    $\mP(F=f,E=e)>p$, or, if $\mP(E=e)\neq\emptyset$, equivalently,
    $\mP(F=f\mid E=e) > p$ (by adapting the bound $p$).
    It is known that this problem, also known as maximum a-posteriori
    problem, is $\np^{\pp}$-complete
    (see~\cite{littman2001stochastic,c:new-complexity-map}).
\end{itemize}

The corresponding proof in~\cite{park2004complexity} also shows that
the \dmap\ problem remains $\np^{\pp}$-complete if $F$ only contains
uniformly distributed \enquote*{input} nodes, i.e., nodes $X_i$ with
$\Delta^i=\emptyset$ and $P_i(x_i)=1/|V_i|$, as well as
$V_i=\set{0,1}$ for all $i=1,\dots,n$.

In particular, the following problem (where $E,F$ are switched!) is
still $\np^{\pp}$-complete: Given a binary Bayesian network
$(X,\Delta,P)$ (i.e., $V_i=\set{0,1}$ for all $i$), a rational $p>0$,
disjoint subsets $E,F\subseteq X$ where $F$ only contains uniformly
distributed input nodes, as well as evidence $e\in V_E$, does there
exist $f\in V_F$ such that $\mP(E=e\mid F=f) > p$ (as
$\mP(F=f)=1/2^{|F|}$ is independent of $f$
and known due to uniformity)?
We will, in the rest of this paper, refer to this modified problem as
\dmap\ instead of the original problem above.

\begin{example}[\dmap]
  Given the Bayesian Network in Figure~\ref{fig:BN} with $F=\{X_a\}$
  (MAP variable), $E = \{X_c,X_d\}$, $e=(0,1)\in V_c\times V_d$
  (evidence) and $p=\nicefrac{1}{3}$, we ask whether
  $\exists f\in\{0,1\}\colon \mP(X_c=0, X_d=1 \mid X_a = f) >
  \nicefrac{1}{3}$.
  When choosing $f = 1\in V_a$, the probability
  $\mP(X_c=0, X_d=1 \mid X_a = 1) = \nicefrac{3}{4} \cdot (
  \nicefrac{1}{2} \cdot \nicefrac{3}{4}+ \nicefrac{1}{2} \cdot
  \nicefrac{1}{3}) = \nicefrac{13}{32} > \nicefrac{1}{3}$ exceeds the
  bound.
  Note that to compute the value in this way, one has to sum up over
  all possible valuations of those variables that are neither evidence
  nor MAP variables, indicating that this is not a trivial task.
\end{example}

\section{Stochastic decision Petri nets}\label{sec:sdpns}

We will enrich the definition of stochastic Petri nets to allow for
interactivity, similar to how MDPs~\cite{bellman1957markovian} extend
the definition of Markov chains.

\begin{definition}
  A stochastic decision Petri net (SDPN) is a tuple
  $(P,T,\pre{()},\post{()}, \linebreak \Lambda,m_0,C,R)$ where
  $(P,T,\pre{()},\post{()},\Lambda,m_0)$ is a stochastic Petri net;
  $C\subseteq T$ is a set of controllable transitions;
  $R\colon \mathcal{P}(P)\rightarrow\mR$ is a reward function.
\end{definition}

Here we describe the semantics of such SDPNs in a semi-formal way.
The precise semantics is obtained by the encoding of SDPNs into MDPs
in Section~\ref{sec:mdp}.

Given an SDPN, an external agent may in each step choose to manually
deactivate any subset $D\subseteq C$ of controllable transitions
(regardless of whether their preconditions are fulfilled or
not).
As such, if transitions $D\subseteq C$ are deactivated in marking
$m\in\mathcal{M}(N)$, the SDPN executes a step according to the
semantics of the stochastic Petri net
$N_D = (P,T\setminus D,\pre{()},\post{()},\Lambda_D,m_0)$ where the
pre- and post-set functions and $\Lambda_D$ are restricted
accordingly.

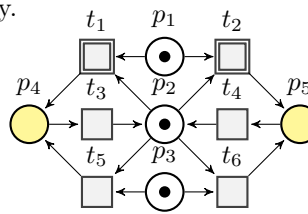
\begin{wrapfigure}[8]{r}{0.35\textwidth}
  \vspace*{-1.2cm}
  \centering
  \begin{tikzpicture}[node distance=0.9cm,>=stealth',bend angle=45,auto,every label/.style={align=left}]
  \tikzstyle{place}=[circle,thick,draw=black,fill=white,minimum size=5mm]
  \tikzstyle{transition}=[rectangle,thick,draw=black!75,
                fill=black!5,minimum size=4mm]
  \begin{scope}
    \node [place, tokens=1] (p1) [label=90:$p_1$]{};
    \node [place, tokens=1] (p2) [label=90:$p_2$, below of=p1, xshift=0mm, yshift=0mm]{};
    \node [place, tokens=1] (p3) [label=90:$p_3$, below of=p2, xshift=0mm, yshift=0mm]{};
    \node [place] (p4) [label=90:$p_4$, left of=p2, xshift=-9mm, yshift=0mm, fill=yellow!50]{};
    \node [place] (p5) [label=90:$p_5$, right of=p2, xshift=9mm, yshift=0mm, fill=yellow!50]{};

    \node [transition,double] (t1) [label=90:$t_1$, left of=p1, xshift=0mm, yshift=0mm] {}
    edge [pre] (p1)
    edge [pre] (p2)
    edge [post] (p4);

    \node [transition,double] (t2) [label=90:$t_2$, right of=p1, xshift=0mm, yshift=0mm] {}
    edge [pre] (p1)
    edge [pre] (p2)
    edge [post] (p5);

    \node [transition] (t3) [label=90:$t_3$, left of=p2, xshift=0mm, yshift=0mm] {}
    edge [pre] (p4)
    edge [post] (p2);

    \node [transition] (t4) [label=90:$t_4$, right of=p2, xshift=0mm, yshift=0mm] {}
    edge [pre] (p5)
    edge [post] (p2);

    \node [transition] (t5) [label=90:$t_5$, left of=p3, xshift=0mm, yshift=0mm] {}
    edge [pre] (p3)
    edge [pre] (p2)
    edge [post] (p4);

    \node [transition] (t6) [label=90:$t_6$, right of=p3, xshift=0mm, yshift=0mm] {}
    edge [pre] (p3)
    edge [pre] (p2)
    edge [post] (p5);

  \end{scope}
  \end{tikzpicture}
  \vspace*{-0.6cm}
  \caption{Example SDPN}\label{fig:ex-PN}
\end{wrapfigure}

For all rewarded sets $Q\in\supp(R)$, the agent receives an
\enquote{immediate} reward $R(Q)$ once all the places $p\in Q$ are
reached at one point in the execution of the Petri net (although not
necessarily simultaneously).
In particular, any reward is only received once.
Note that this differs from the usual definition of rewards as in
MDPs, where a reward is received each time certain actions is taken
in given states.
However, logical formulae over reached places (such as
\enquote{places $p_1$ and $p_2$ are reached without reaching place
  $q$}) are more natural to represent by such one-time
rewards instead of cumulative rewards.\footnote{Firings of transitions
can also easily be rewarded by adding an additional place.}
The framework can be extended to reward markings instead of places but
at the cost of an exponential explosion, since to be able to compute
the one-time step-wise rewards not only already reached places but
already reached markings would have to be memorized.
Note that a reward need not be positive.

More formally, given a firing sequence
$\mu:\transition{m_0}{t_1}{\transition{m_1}{t_2}{\dots}}$, the agent
receives a value or payoff of $\valfunc(\places(\mu))$ where
$\valfunc(M)\coloneqq\sum_{Q\subseteq M}R(Q)$.

\begin{example}\label{exa:sdpn}
  As an example consider the SDPN in Figure~\ref{fig:ex-PN}.
  The objective is to mark both places coloured in yellow at some
  point in time (not necessarily at the same time).
  This can be described by a reward function $R$ which assigns $1$ to
  the set $\{p_4,p_5\}$ containing both yellow places and $0$ to all
  other sets.

  The transitions with double borders ($t_1,t_2$) are controllable
  and it turns out that the optimal strategy is to deactive both
  $t_1$ and $t_2$ first, in order to let $t_5$ or $t_6$ mark either of
  the two goal places before reaching the marking $(1,1,0,0,0)$ from
  which no information can be gained which of the two goal places have
  been marked.
  An optimal strategy thus has to have knowledge of already achieved
  sub-goals in terms of visited places.
  In this case, the strategy can deactivate one of the transitions
  ($t_1,t_2$) leading to the place already visited.
\end{example}

Policies may be dependent on the current marking and the places
accumulated so far.
Now, for a given policy
$\pi:\mathcal{M}(N)\times \mathcal{P}(P)\rightarrow\mathcal{P}(C)$,
determining the set $\pi(m,Q)\subseteq C$ of deactivated transitions
in marking $m$ for the set $Q$ of places seen so far, we consider the
(continuous) probability space $m_0 \mathcal{M}(N)^\omega$, describing
the infinite sequence $m_0\rightarrow_{t_1}m_1\rightarrow_{t_2}\dots$
of markings generated by the Petri net under the policy $\pi$ (i.e.,
if in step $n$ the transitions
$D_{n}\coloneqq\pi(m_{n-1},\bigcup_{k=0}^{n-2}\supp(m_k))$
are deactivated).

Then we can consider the expectation of the random variable
$V\circ \places$, i.e.,
\[\val^\pi\coloneqq\mE^\pi\left[\valfunc\circ\places\right], \]
over the probability space $m_0 \mathcal{M}(N)^\omega$.
We will call this the value of $\pi$ and, if $\pi\equiv D\subseteq C$
is constant, simply write $\val^D$ which we will call the value of
$D$.

For the complexity analyses we assume that $R$ is only stored on its
support, e.g., as a set $R\subseteq\mathcal{P}(P)\times\mR$ which we
will interpret as a dictionary with entries $[Q:R(Q)]$ for some
$Q\subseteq P$, as for many problems of interest the size of the
support of the reward function can be assumed to be polynomially
bounded w.r.t.~to the set of places and transitions.

We consider the following problems for stochastic Petri nets, where we
parameterize over a class $\mathcal{N}$ of SDPNs and (for the second
problem) over a class
$\Psi\subseteq \funcset{\mathcal{M}(N)\times
  \mathcal{P}(P)}{\mathcal{P}(C)}$ of policies:

\begin{itemize}
\item \valn{\pnc}: Given a rational $p>0$, a net $N\in\mathcal{N}$ and
  a policy $\pi\in\Psi$ for $N$, decide whether $\val^\pi > p$.
\item \poln{\pnc}: Given a rational $p>0$ and a net $N\in\mathcal{N}$,
  decide whether there exist a policy $\pi\in \Psi$ such that
  $\val^\pi > p$.

  Although paramterized over sets of policies, we will omit $\Psi$ if
  is clear from the context (in fact we will restrict to constant
  policies from Section~\ref{sec:complexity} onwards).
\end{itemize}

\section{Stochastic decision Petri nets as Markov decision processes}
\label{sec:mdp}

We now describe how to transform an SDPN into an MDP, thus fixing
the semantics of such nets.
For unbounded Petri nets, the resulting MDP has an infinite state
space, but we will restrict to the finite case later.

\begin{definition}\label{def:sdpn}
  Given an SDPN $N=(P,T,F,\Lambda,C,R,m_0)$ where $m_0$ is not the
  constant zero function, the MDP for $N$ is defined as the tuple
  $(S,A,\delta,r,s_0)$ where
  \begin{itemize}
    \item
      $S = \mathcal{R}(N)\times\mathcal{P}(P)$ (product of
      reachable markings and places collected),
    \item
      $A = \mathcal{P}(C)$ (sets of deactivated transition as
      actions),
    \item
      $\delta\colon (\mathcal{R}(N)\times\mathcal{P}(P)) \times
      \mathcal{P}(C) \rightarrow
      \mathcal{D}(\mathcal{R}(N)\times\mathcal{P}(P))$, with
      \[ \delta((m,Q),D)((m',Q'))\coloneqq\begin{cases}
          p(m'\mid m,D)&\text{if }Q'=Q\cup\supp(m),\\
          0&\text{otherwise,}
        \end{cases} \] where
      \[ p(m'\mid m,D)=\frac{\sum_{t\in \mathit{En}(m,D),
        m\rightarrow_t m'}\Lambda(t)}{\sum_{t\in \mathit{En}(m,D)}
        \Lambda(t)} \]
      whenever
      $\mathit{En}(m,D) \coloneqq \{t\in T\backslash D\mid
      \pre{t}\le m\} \neq\emptyset$.
      If $\mathit{En}(m,D) = \emptyset$, we set $p(m'\mid m,D) = 1$
      if $m=m'$ and $0$ if $m\neq m'$.
      That is, $p(m'\mid m, D)$ is the probability of reaching $m'$
      from $m$ when transitions $D$ are deactivated.
    \item
      $r\colon S\times A\times S\to \mathbb{R}$ (reward function) with
      \[ r((m,Q),D,(m',Q'))\coloneqq
          \begin{cases}
            \sum_{Q\subseteq Y\subseteq Q'}R(Y)
              &\text{if }Q=\emptyset,\\
            \sum_{Q\subsetneq Y\subseteq Q'}R(Y)
              &\text{if }Q\neq\emptyset.
        \end{cases}\]
    \item $s_0 = (m_0,\emptyset)$
  \end{itemize}
\end{definition}

The transition probabilities are determined as for regular stochastic
Petri nets where we consider only the rates of those transitions that
have not been deactivated and that can be fired for the given marking.
If no transition is enabled, we stay at the current marking with
probability $1$.

Note that the reward for the places reached in a marking $m$ is only
collected when we fire a transition leaving $m$.
This is necessary as in the very first step we also obtain the reward
for the empty set, which might be non-zero, and due to the fact that
the initial marking is assumed to be non-empty, this reward for the
empty set is only collected once.

The following result shows that the values of policies
$\pi:S\rightarrow A$ (note that these are exactly the policies for the
underlying SDPN) over the MDP are equal to the ones over the
corresponding SDPN.

\begin{propositionrep}
  Let $N=(P,T,F,\Lambda,C,R,m_0)$ be an SDPN and
  $M=(S,A,\delta,\linebreak r,s_0)$ the corresponding MDP.
  For any policy $\pi:S\rightarrow A$, we have
  \[(\val^\pi=)\mE^\pi\left[\valfunc\circ \places\right]
    =\mE^\pi\left[\sum_{n\in\mN_0}r(\mathbf{s}_{n},
      \pi(\mathbf{s}_n),\mathbf{s}_{n+1})\right]\]
  where $(\mathbf{s}_n)_n$ is the Markov chain resulting from
  following policy $\pi$ in $M$.
\end{propositionrep}

\begin{proof}
  Consider a sequence of states $s_1,\dots, s_n$, $n\in\mN_0$ of the
  MDP where $s_i = (m_i,Q_i)$, $m_i\in\mathcal{R}(N)$, $Q_i\subseteq
  P$.
  Using the notation of Definition~\ref{def:sdpn} we obtain
  \begin{align*}
    &\mP^{\pi}(\mathbf{s}_1=s_1,\dots,\mathbf{s}_n=s_n)\\
    &=\prod_{k=1}^n\delta(s_{k-1},\pi(s_{k-1}))(s_k) \\
    &=\begin{cases}
      \prod_{k=1}^n p(m_k\mid m_{k-1},\pi(s_{k-1}))
        &\text{if for all }k=1,\dots,n:Q_k=\bigcup_{i=0}^{k-1}
          \supp(m_i),\\
      0&\text{otherwise}
    \end{cases}\\
    &=\begin{cases}
        \mP^{\pi}((m_1,\dots,m_n))
          &\text{if for all }k=1,\dots,n:Q_k=\bigcup_{i=0}^{k-1}
            \supp(m_i),\\
      0&\text{otherwise}
    \end{cases}
  \end{align*}
  where $\mP^{\pi}((m_1,\dots,m_n))$ is the probability of the
  sequence $(m_1,\dots,m_n)$ in the Petri net $N$ under policy $\pi$.

  The result can then be seen from the following equations
  \begin{align*}
    &\mE^\pi\left[\sum_{n\in\mN}
      r(\mathbf{s}_{n-1},\pi(\mathbf{s}_{n-1}),\mathbf{s}_{n})
    \right]\\
    &=\lim_{n\rightarrow\infty}\mE^\pi\left[\sum_{k=1}^n
      r(\mathbf{s}_{k-1},\pi(\mathbf{s}_{k-1}),\mathbf{s}_{k})
    \right]\\
    &=\lim_{n\rightarrow\infty}\sum_{s_1,\dots,s_{n}\in S}
      \mP^{\pi}(\mathbf{s}_1=s_1,\dots,\mathbf{s}_n=s_n)
      \sum_{k=1}^nr(s_{k-1},\pi(s_{k-1}),s_{k})\\
    &=\lim_{n\rightarrow\infty}\sum_{m_1,\dots,m_{n}\in\mathcal{M}(N)}
      \mP^\pi((m_0,\dots,m_n))
      \sum_{Q\subseteq\bigcup_{k=0}^n\supp(m_k)}R(Q)\\
    &=\lim_{n\rightarrow\infty}\sum_{M\subseteq P}
      \mP^\pi(\places_{\leq n}=M)\sum_{Q\subseteq M}R(Q)\\
    &=\sum_{M\subseteq P}\mP^\pi(\places=M)\sum_{Q\subseteq M}R(Q)\\
    &=\mE^\pi\left[\valfunc\circ\places\right]\\
  \end{align*}
  where we use in the first equation that the random variables
  \[\mathbf{r}_n\coloneqq\sum_{k=1}^n
    r(\mathbf{s}_{k-1},\pi(\mathbf{s}_{k-1}),\mathbf{s}_{k})\]
  are uniformly bounded by construction.\qed
\end{proof}

\begin{toappendix}
  \begin{example}
    We consider the Petri net from Example~\ref{exa:sdpn} and spell
    out its corresponding MDP (see Figure~\ref{fig:ex-PN-to-mdp}).

  \begin{figure} [h]
    \centering
    \begin{tikzpicture}[node distance=2.4cm,>=stealth',bend angle=45,auto,every label/.style={align=left} ]
    \tikzstyle{place}=[ellipse,thick,draw=black,fill=white]

    \begin{scope}
      \node [place](s0){$s_0$};
      \node [place, right of=s0, xshift=-16mm, yshift=16mm] (s3){$s_3$};
      \node [place, right of=s0, xshift=12mm, yshift=6mm] (s1){$s_1$};
      \node [place, right of=s0, xshift=12mm, yshift=-6mm] (s2){$s_2$};
      \node [place, right of=s0, xshift=-16mm, yshift=-16mm] (s4){$s_4$};

      \node [place, right of=s3, xshift=-16mm, yshift=16mm] (s7){$s_7$};
      \node [place, right of=s1, xshift=-12mm] (s5){$s_5$};
      \node [place, right of=s2, xshift=-12mm] (s6){$s_6$};
      \node [place, right of=s4, xshift=-16mm, yshift=-16mm] (s8){$s_8$};

      \node [place, right of=s7, xshift=-16mm, yshift=-14mm] (s13){$s_{13}$};
      \node [place, right of=s7, xshift=8mm] (s14){$s_{14}$};
      \node [place, above of=s5, yshift=-12mm] (s9){$s_9$};
      \node [place, right of=s5, xshift=-4mm] (s10){$s_{10}$};
      \node [place, right of=s6, xshift=-4mm] (s11){$s_{11}$};
      \node [place, below of=s6, yshift=12mm] (s12){$s_{12}$};
      \node [place, right of=s8, xshift=8mm] (s15){$s_{15}$};
      \node [place, right of=s8, xshift=-16mm, yshift=14mm] (s16){$s_{16}$};

      \node [place, left of=s9, xshift=12mm] (s17){$s_{17}$};
      \node [place, right of=s11, xshift=-8mm, yshift=6mm] (s18){$s_{18}$};
      \node [place, left of=s12, xshift=12mm] (s19){$s_{19}$};

      \draw[thick,->] (s0) -- (s3) node[midway,left] {$\frac{1}{4-|\pi(s_0)|}$ [$t_5$]};
      \draw[thick,->] (s0) -- (s1) node[midway,above] {$\frac{\chi_{T\setminus\pi(s_0)}(t_1)}{4-|\pi(s_0)|}$ [$t_1$]};
      \draw[thick,->] (s0) -- (s2) node[midway,below] {$\frac{\chi_{T\setminus\pi(s_0)}(t_2)}{4-|\pi(s_0)|}$ [$t_2$]};
      \draw[thick,->] (s0) -- (s4) node[midway,left] {$\frac{1}{4-|\pi(s_0)|}$ [$t_6$]};

      \draw[thick,->] (s3) -- (s7) node[midway,left] {[$t_3$]};
      \draw[thick,->] (s1) -- (s5) node[midway,below] {[$t_3$]};
      \draw[thick,->] (s2) -- (s6) node[midway,above] {[$t_4$]};
      \draw[thick,->] (s4) -- (s8) node[midway,left] {[$t_4$]};

      \path (s7) edge [loop left] node {$\chi_{\pi(s_7)}(t_1)\chi_{\pi(s_7)}(t_2)$ [$\varepsilon$]} (s7);
      \path (s8) edge [loop left] node {$\chi_{\pi(s_8)}(t_1)\chi_{\pi(s_8)}(t_2)$ [$\varepsilon$]} (s8);
      \draw[thick,->] (s5) -- (s9) node[midway,right] {$\nicefrac{1}{2}$ [$t_5$]};
      \draw[thick,->] (s5) -- (s10) node[midway,below] {$\nicefrac{1}{2}$ [$t_6$]};
      \draw[thick,->] (s6) -- (s11) node[midway,above] {$\nicefrac{1}{2}$ [$t_5$]};
      \draw[thick,->] (s6) -- (s12) node[midway,right] {$\nicefrac{1}{2}$ [$t_6$]};
      \draw[thick,->] (s7) -- (s13) node[midway,right] {$\frac{\chi_{T\setminus\pi(s_7)}(t_1)}{2-|\pi(s_7)|}$ [$t_1$]};
      \draw[thick,->] (s7) -- (s14) node[midway,above] {$\frac{\chi_{T\setminus\pi(s_7)}(t_2)}{2-|\pi(s_7)|}$ [$t_2$]};
      \draw[thick,->] (s8) -- (s15) node[midway,below] {$\frac{\chi_{T\setminus\pi(s_8)}(t_1)}{2-|\pi(s_8)|}$ [$t_1$]};
      \draw[thick,->] (s8) -- (s16) node[midway,right] {$\frac{\chi_{T\setminus\pi(s_8)}(t_2)}{2-|\pi(s_8)|}$ [$t_2$]};
      \draw[thick,->] (s9) -- (s17) node[midway,below] {[$t_3$]};
      \draw[thick,->] (s10) -- (s18) node[midway,above] {[$t_4$]};
      \draw[thick,->] (s11) -- (s18) node[midway,below] {[$t_3$]};
      \draw[thick,->] (s12) -- (s19) node[midway,above] {[$t_4$]};
      \draw[thick,->] (s13) -- (s17) node[midway,below] {[$t_3$]};
      \draw[thick,->] (s14) to[bend left=20]  node[midway,above] {[$t_4$]} (s18);
      \draw[thick,->] (s15) to[bend right=20] node[midway,below] {[$t_3$]} (s18) ;
      \draw[thick,->] (s16) -- (s19) node[midway,above] {[$t_4$]};

    \end{scope}
    \end{tikzpicture}
    \caption{MDP from example Petri net}\label{fig:ex-PN-to-mdp}
  \end{figure}
  \begin{figure} [h]
    \centering
    \begin{tikzpicture}[node distance=1.2cm,>=stealth',bend angle=45,auto,every label/.style={align=left} ]
    \tikzstyle{place}=[circle,thick,draw=black,fill=white]

    \begin{scope}
      \node [place](s0){$s_0$};
      \node [place, right of=s0, yshift=6mm] (s3){$s_3$};
      \node [place, right of=s0, yshift=-6mm] (s4){$s_4$};

      \node [place, right of=s3] (s7){$s_7$};
      \node [place, right of=s4] (s8){$s_8$};

      \node [place, right of=s7] (s14){$s_{14}$};
      \node [place, right of=s8] (s15){$s_{15}$};

      \node [place, right of=s14, yshift=-6mm] (s18){$s_{18}$};

      \draw[thick,->] (s0) -- (s3) node[midway,above] {$\nicefrac{1}{2}$};
      \draw[thick,->] (s0) -- (s4) node[midway,below] {$\nicefrac{1}{2}$};

      \draw[thick,->] (s3) -- (s7) node[midway,above] {};
      \draw[thick,->] (s4) -- (s8) node[midway,below] {};

      \draw[thick,->] (s7) -- (s14) node[midway,above] {};
      \draw[thick,->] (s8) -- (s15) node[midway,below] {};

      \draw[thick,->] (s14) -- (s18) node[midway,above] {};
      \draw[thick,->] (s15) -- (s18) node[midway,below] {};

    \end{scope}
    \end{tikzpicture}
    \caption{Markov chain of optimal strategy in MDP for example Petri net.}\label{fig:ex-PN-optimal-strategy}
  \end{figure}
  \begin{figure} [h]
    \centering
    \begin{tikzpicture}[node distance=2.4cm,>=stealth',bend angle=45,auto,every label/.style={align=left} ]
    \tikzstyle{place}=[circle,thick,draw=black,fill=white]

    \begin{scope}
      \node [place](m0){$\begin{psmallmatrix}1\\1\\1\\0\\0\end{psmallmatrix}$};
      \node [place, right of=m0, yshift=36mm] (m1){$\begin{psmallmatrix}0\\0\\1\\1\\0\end{psmallmatrix}$};
      \node [place, right of=m0, yshift=12mm] (m2){$\begin{psmallmatrix}0\\0\\1\\0\\1\end{psmallmatrix}$};
      \node [place, right of=m0, yshift=-12mm] (m3){$\begin{psmallmatrix}1\\0\\0\\1\\0\end{psmallmatrix}$};
      \node [place, right of=m0, yshift=-36mm] (m4){$\begin{psmallmatrix}1\\0\\0\\0\\1\end{psmallmatrix}$};

      \node [place, right of=m2] (m5){$\begin{psmallmatrix}0\\1\\1\\0\\0\end{psmallmatrix}$};
      \node [place, right of=m3] (m6){$\begin{psmallmatrix}1\\1\\0\\0\\0\end{psmallmatrix}$};

      \node [place, right of=m5] (m7){$\begin{psmallmatrix}0\\0\\0\\1\\0\end{psmallmatrix}$};
      \node [place, right of=m6] (m8){$\begin{psmallmatrix}0\\0\\0\\0\\1\end{psmallmatrix}$};

      \node [place, right of=m7, yshift=-12mm] (m9){$\begin{psmallmatrix}0\\1\\0\\0\\0\end{psmallmatrix}$};

      \draw[thick,->] (m0) -- (m1) node[midway,above] {[$t_1$]};
      \draw[thick,->] (m0) -- (m2) node[midway,below] {[$t_2$]};
      \draw[thick,->] (m0) -- (m3) node[midway,below] {[$t_5$]};
      \draw[thick,->] (m0) -- (m4) node[midway,below] {[$t_6$]};

      \draw[thick,->] (m1) -- (m5) node[midway,above] {[$t_3$]};
      \draw[thick,->] (m2) -- (m5) node[midway,above] {[$t_4$]};
      \draw[thick,->] (m3) -- (m6) node[midway,above] {[$t_3$]};
      \draw[thick,->] (m4) -- (m6) node[midway,above] {[$t_4$]};

      \draw[thick,->] (m5) -- (m7) node[midway,above] {[$t_5$]};
      \draw[thick,->] (m5) -- (m8) node[pos=0.3,above] {[$t_6$]};
      \draw[thick,->] (m6) -- (m7) node[pos=0.1,above] {[$t_1$]};
      \draw[thick,->] (m6) -- (m8) node[midway,above] {[$t_2$]};

      \draw[thick,->] (m7) -- (m9) node[midway,above] {[$t_3$]};
      \draw[thick,->] (m8) -- (m9) node[midway,below] {[$t_4$]};

    \end{scope}
    \end{tikzpicture}
    \caption{Reachability graph of example Petri net
      (Example~\ref{exa:sdpn}) }
    \label{fig:ex-PN-reachability}
  \end{figure}
  We note that the reachability graph of the net consists of ten
  markings:
  \[m_0=\begin{psmallmatrix}1\\1\\1\\0\\0\end{psmallmatrix},\quad
    m_1=\begin{psmallmatrix}0\\0\\1\\1\\0\end{psmallmatrix},\quad
    m_2=\begin{psmallmatrix}0\\0\\1\\0\\1\end{psmallmatrix},\quad
    m_3=\begin{psmallmatrix}1\\0\\0\\1\\0\end{psmallmatrix},\quad
    m_4=\begin{psmallmatrix}1\\0\\0\\0\\1\end{psmallmatrix},\]
  \[m_5=\begin{psmallmatrix}0\\1\\1\\0\\0\end{psmallmatrix},\quad
    m_6=\begin{psmallmatrix}1\\1\\0\\0\\0\end{psmallmatrix},\quad
    m_7=\begin{psmallmatrix}0\\0\\0\\1\\0\end{psmallmatrix},\quad
    m_8=\begin{psmallmatrix}0\\0\\0\\0\\1\end{psmallmatrix},\quad
    m_9=\begin{psmallmatrix}0\\1\\0\\0\\0\end{psmallmatrix}\]
  (see also Figure~\ref{fig:ex-PN-reachability}).
  Firing rates are constant: $\Lambda\equiv 1$.
  The corresponding MDP has twenty states:
  \begin{gather*}
    s_0=(m_0,\emptyset),
    s_1=(m_1,M^{-,-}),
    s_2=(m_2,M^{-,-}),
    s_3=(m_3,M^{-,-}),\\
    s_4=(m_4,M^{-,-}),
    s_5=(m_5,M^{+,-}),
    s_6=(m_5,M^{-,+}),
    s_7=(m_6,M^{+,-}),\\
    s_8=(m_6,M^{-,+}),
    s_9=(m_7,M^{+,-}),
    s_{10}=(m_8,M^{+,-}),
    s_{11}=(m_7,M^{-,+}),\\
    s_{12}=(m_8,M^{-,+}),
    s_{13}=(m_7,M^{+,-}),
    s_{14}=(m_8,M^{+,-}),
    s_{15}=(m_7,M^{-,+}),\\
    s_{16}=(m_8,M^{-,+}),
    s_{17}=(m_9,M^{+,-}),
    s_{18}=(m_9,M^{+,+}),
    s_{19}=(m_9,M^{-,+}),\\
  \end{gather*}
  (see also Figure~\ref{fig:ex-PN-to-mdp}) where
  $M^{-,-}=\set{p_1,p_2,p_3}$ signifies that none of the places $p_4$
  and $p_5$ have been reached previously,
  $M^{+,-}=\set{p_1,p_2,p_3,p_4}$ and $M^{-,+}=\set{p_1,p_2,p_3,p_5}$
  that $p_4$ or $p_5$ were reached, respectively, and
  $M^{+,+}=\set{p_1,p_2,p_3,\linebreak p_4,p_5}$ that both were
  reached (and, thus, the reward received).

  As previously remarked for the Petri net, in this MDP, the optimal
  strategy is to deactivate both $t_1$ and $t_2$ in $s_0$ and then
  activate only whichever transition leads to the place that was not
  yet visited, yielding the Markov chain in
  Figure~\ref{fig:ex-PN-optimal-strategy}.
\end{example}
\end{toappendix}

This provides an exact semantic for SDPNs via MDPs. Note, however,
that for analysis purposes, even for safe Petri nets, the reachability
set $\mathcal{R}(N)$ (as a subset of $\mathcal{P}(P)$) is generally of
exponential size whence the transformation into an MDP can at best
generally only yield algorithms of exponential worst-case-time.
Hence, we will now restrict to specific subproblems and it will turn
out that even with fairly severe restrictions to the type of net and
the policies allowed, we obtain completeness results for complexity
classes high in the polynomial hierarchy.

\section{Complexity analysis for specific classes of Petri nets}
\label{sec:complexity}

For the remainder of this paper, we will consider the problem of
finding optimal \emph{constant} policies for certain classes of nets.
In other words, the agent chooses \emph{before} the execution of the
Petri net which transitions to deactivate for its \emph{entire}
execution.
For a net $N$, the policy space is thus given by
\[ \Psi(N) =
  \set{\pi:\mathcal{M}(N)\rightarrow\mathcal{P}(C)\mid\pi\equiv
    D\subseteq C}\text{ }\hat{=}\text{ }\mathcal{P}(C). \]

Since one can non-deterministically guess the maximizing policy (there
are only exponentially many) and compute its value, it is clear that
the complexity of the policy optimization problem \poln{\pnc} is
bounded by the complexity of the corresponding value problem
\valn{\pnc} as follows: If, for a given class $\pnc$ of Petri nets,
\valn{\pnc} lies in the complexity class $\mathsf{C}$, then
\poln{\pnc} lies in $\np^\mathsf{C}$.

We will now show the complexity of these problems for the three Petri
net classes FCON, SAFC, and $[\phi]$BPN and work out the connection to
Bayesian networks.
In the following we will assume that all probabilities are efficiently
computable, allowing us to simulate all probabilistic choices with
fair coins.

\subsection{Complexity of safe and acyclic free-choice decision nets}
\label{sec:safc-complexity}

We will first consider the case of Petri nets where the length of runs
is bounded.

\begin{propositionrep}\label{prop:pbpn-np-pp}
  For any polynomial $\phi$, the problem \valn{[\phi]BPN} is in $\pp$.
  In particular, \poln{[\phi]BPN} is in $\np^{\pp}$.
\end{propositionrep}

\begin{proofsketch}
  Given a Petri net $N$, a policy $\pi$ and a bound $p$, a
  $\pp$-algorithm for \valn{[\phi]BPN} can simulate the execution of
  the Petri net and calculate the resulting value, checking whether
  the expected value for $\pi$ is greater than the pre-defined bound
  $p$.  For this, we have to suitably adapt the threshold (with an
  affine function $\psi$) so that the probabilistic Turing machine
  accepts with probability greater than $\nicefrac{1}{2}$ iff the
  reward for the given policy is strictly greater than $p$.

  As the execution of the Petri net takes only polynomial time in the
  size of the Petri net ($\phi$), this can be performed by a
  probabilistic Turing machine in polynomial time whence
  \valn{[\phi]BPN} lies in $\pp$.

  Since a policy can be guessed in polynomial time, we can also infer
  that \poln{[\phi]BPN} is in $\np^{\pp}$.
  \qed
\end{proofsketch}

\begin{proof}
  We  give a $\pp$-algorithm that solves \valn{[\phi]BPN}, i.e.,
  given a $[\phi]$-bounded stochastic Petri net $N$ and a lower bound
  $p\in\mR$, determine whether the expected value of the net is
  greater than $p$.

  As, by definition, the length of runs of the net $N$ is bounded by
  $\phi(|T|+|P|)$, each run either terminates after a polynomial
  number of steps.
  Hence we can simulate the net with a probabilistic Turing machine
  and whenever the run terminates, we can determine the value for this
  run.

  Hence the main difficulty is the following: A probabilistic Turing
  machine can check whether a given property is satisfied in more than
  half of all runs of the algorithm.
  As such, it could, for example, easily check whether, in more than
  half of all runs of the Petri net, its value (or payoff) is greater
  than a pre-defined threshold.
  Our goal, however, is to check whether the \textit{expected value}
  of the total reward is greater than a threshold.

  We show that this can also be checked in polynomial time.
  We calculate
  \[\valfunc_{\min}\coloneqq\sum_{Q\in\supp(R^-)}R(Q)\]
  where $R^-$ is the negative part of the reward function $R$ (i.e.,
  $\valfunc_{\min}$ is the sum of all negative rewards), and similarly
  \[\valfunc_{\max}\coloneqq\sum_{Q\in\supp(R^+)}R(Q)\]
  where $R^+$ is the positive part of $R$.

  As any reward can only be granted once, the value of any run
  of the Petri net (and, in particular, its value $\val$) lies in
  $[\valfunc_{\min},\valfunc_{\max}]$.
  If $p\notin[\valfunc_{\min},\valfunc_{\max}]$, we can, therefore,
  already safely output that $\val>p$ is true (if $p<\valfunc_{\min}$)
  or false (if $p>\valfunc_{\max}$).

  If $p\in[\valfunc_{\min},\valfunc_{\max}]$, we define an affine
  linear transformation
  \[\psi\colon\mR\rightarrow\mR,x \mapsto \frac{x-\valfunc_{\min}}{\valfunc_{\max}-\valfunc_{\min}}\]
  which maps any possible value to a number in $[0,1]$, and define
  $\tilde{p}\coloneqq\psi(p)\in[0,1]$.

  The probabilistic Turing machine can then simulate a run $\mu$ of
  the Petri net and calculate the total value for its execution which
  can be done in polynomial time by definition of $[\phi]$BPN.
  Then, the linear transformation $\psi$ is applied to the value
  $V_\mu$ which yields $\rho_\mu = \psi(V_\mu)\in[0,1]$.
  Assume for now that we terminate with probability $\rho_\mu$ with
  success and otherwise with a failure.
  This ensures that the algorithm terminates with probability greater
  than $\tilde{p}$ if and only if the value $\psi(V(\places(\mu)))$ of
  the run $\mu$ is greater than $\tilde{p} = \psi(p)$, which is
  equivalent to the value $\val$ being greater than $p$.
  In more detail:
  \begin{align*}
    &\tilde{p}
    < \sum_{\mu}\mP(\mu) \cdot
    \rho_\mu = \sum_{\mu}\mP(\mu) \cdot
    \psi(\valfunc(\places(\mu)))
    = \mathbb{E}[\psi(\valfunc(\places(\mu)))]\\
    \Leftrightarrow \quad & \psi(p) <
    \psi(\sum_{\mu}\mP(\mu)\cdot \valfunc(\places(\mu)))
    =\psi(\mathbb{E}[\valfunc(\places(\mu))])\\
    \Leftrightarrow \quad &p<\sum_{\mu}\mP(\mu)\cdot
    \valfunc(\places(\mu))=\mathbb{E}[\valfunc(\places(\mu))]=\val
  \end{align*}
  where we use that $\psi$ is a strictly monotone affine linear
  function (and thus, in particular, commutes with expected values).

  Note however that it should be the case that a word is in the
  language iff the $\pp$ machine accepts it with probability strictly
  greater than $\nicefrac{1}{2}$.

  To adapt the threshold accordingly, we define
  \[ \sigma\coloneqq(\nicefrac{1}{2} -
    \tilde{p})/(|\nicefrac{1}{2} -
    \tilde{p}|+\nicefrac{1}{2}),\]
  and insert an additional step before simulating the net:

  If $\sigma<0$, the algorithm outputs false with probability
  $-\sigma$ (and continues its execution otherwise), and if
  $\sigma\geq 0$, the algorithm outputs true with probability $\sigma$
  (and continues its execution otherwise).
  This ensures that the probability of reaching a success in the rest
  of the algorithm has to be at least $\tilde{p}$ instead of
  $\nicefrac{1}{2}$.

  This can be seen as follows, where $p_S$ denotes the probability of
  success for the rest of the execution:
  \begin{itemize}
    \item
      $\tilde{p}>\nicefrac{1}{2}$, which implies $\sigma<0$: then the
      success probability is $(1+\sigma)p_S$ and we have
      \[ \frac{1}{2}<(1+\sigma)p_S \iff \frac{1}{2} <
        (1+\frac{\nicefrac{1}{2}-\tilde{p}}{\tilde{p}})p_S
        \iff \frac{1}{2} < \frac{1}{2\tilde{p}}p_S \iff
        \tilde{p} < p_S \]
    \item
      $\tilde{p}\leq\nicefrac{1}{2}$, which implies $\sigma\geq 0$:
      then the success probability is $\sigma + (1-\sigma)p_S$ and we
      have
      \begin{align*}
        && \frac{1}{2} < \sigma+(1-\sigma)p_S \iff \frac{1}{2} <
        \frac{\nicefrac{1}{2}-\tilde{p}}{1-\tilde{p}}+(1-\frac{\nicefrac{1}{2}-\tilde{p}}{1-\tilde{p}})p_S
        \iff \\
        && \frac{\nicefrac{1}{2}\cdot(1-\tilde{p})}{1-\tilde{p}} <
        \frac{\nicefrac{1}{2}-\tilde{p}}{1-\tilde{p}}+\frac{(1-\tilde{p})-(\nicefrac{1}{2}-\tilde{p})}{1-\tilde{p}})p_S
        \iff \\
        && \frac{\nicefrac{1}{2}\cdot\tilde{p}}{1-\tilde{p}} <
        \frac{\nicefrac{1}{2}}{1-\tilde{p}}p_S \iff
        \tilde{p} < p_S,
      \end{align*}
  \end{itemize}
  which concludes the proof. \qed
\end{proof}

This easily gives us the following corollary for SAFC nets.

\begin{corollary}\label{cor:safc-pp}
  The problem \safcval\ is in $\pp$ and \safcpol\ in $\np^{\pp}$.
\end{corollary}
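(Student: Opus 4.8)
The plan is to derive Corollary~\ref{cor:safc-pp} directly from Proposition~\ref{prop:pbpn-np-pp} by exhibiting SAFC as a subclass of $[\phi]$BPN for a suitable polynomial $\phi$. The key observation, already noted in the preliminaries, is the chain of inclusions $\text{FCON}\subseteq\text{SAFC}\subseteq[\mathit{id}]\text{BPN}$, where $\mathit{id}$ is the identity polynomial. So the first step is to recall (or verify) why every SAFC net is $[\mathit{id}]$-bounded: since such a net is safe and acyclic, the causal relation $\prec_N^+$ is irreflexive, and every transition can fire at most once along any firing sequence (firing it would require re-marking a precondition, which acyclicity forbids in a safe net). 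Hence $\length(\mu)\le|T|\le|P|+|T|=\mathit{id}(|P|+|T|)$ for every firing sequence $\mu$, so the net is $[\mathit{id}]$BPN.

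Given this inclusion, the second step is simply to instantiate Proposition~\ref{prop:pbpn-np-pp} with $\phi=\mathit{id}$: the proposition states $\valn{[\phi]\text{BPN}}\in\pp$ and $\poln{[\phi]\text{BPN}}\in\np^\pp$ for any polynomial $\phi$. Since every instance of $\safcval$ (resp.\ $\safcpol$) is in particular an instance of $\valn{[\mathit{id}]\text{BPN}}$ (resp.\ $\poln{[\mathit{id}]\text{BPN}}$) --- because the defining data of an SDPN is unchanged, only the promised structural restriction is strengthened --- the algorithm of Proposition~\ref{prop:pbpn-np-pp} solves it within the same resource bounds. One small point to check is that membership in SAFC is a property we are handed as a promise on the input, so the $\pp$-machine need not verify it; it may simply run the simulation, which terminates in at most $|T|$ steps on any SAFC input. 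That gives $\safcval\in\pp$, and the $\np^\pp$ bound for $\safcpol$ follows either by the same guess-and-check argument used in the proposition (non-deterministically guess the constant deactivation set $D\subseteq C$, then invoke the $\pp$ oracle for $\valn{}$) or directly from the generic observation in Section~\ref{sec:complexity} that $\poln{\pnc}\in\np^{\mathsf{C}}$ whenever $\valn{\pnc}\in\mathsf{C}$.

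There is essentially no obstacle here: the corollary is a pure specialization, and the only thing that could conceivably need care is making the boundedness argument for SAFC nets airtight --- in particular ensuring that acyclicity of $\prec_N$ together with safeness genuinely forbids any transition from firing twice, so that run length is bounded by $|T|$ rather than merely finite. I would state that as a one-line lemma (or cite the footnote already in the excerpt noting $[\mathit{id}]$BPN contains all safe acyclic nets) and then the corollary is immediate. If one wanted to be maximally explicit, the proof would read: \emph{By the inclusion $\text{SAFC}\subseteq[\mathit{id}]\text{BPN}$, the claim follows from Proposition~\ref{prop:pbpn-np-pp} applied with $\phi=\mathit{id}$.} \qed
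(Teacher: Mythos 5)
Your proposal is correct and follows exactly the paper's own argument: the corollary is obtained by instantiating Proposition~\ref{prop:pbpn-np-pp} with $\phi=\mathit{id}$ via the inclusion SAFC $\subseteq$ $[\mathit{id}]$BPN. The extra justification you give for why safe acyclic nets are $\mathit{id}$-bounded is a sound elaboration of the footnote the paper already relies on.
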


\begin{proof}
  This follows directly from Proposition~\ref{prop:pbpn-np-pp} and the
  fact that SAFC $\subseteq$ $[id]$BPN. \qed
\end{proof}

\begin{propositionrep}\label{prop:safc-np-pp-hard}
  The problem \safcpol\ is $\np^{\pp}$-hard and, therefore, also
  $\np^{\pp}$-complete.
\end{propositionrep}

\begin{proofsketch}
  This can be proven via a reduction $\dmap \leq_p \safcpol$, i.e.,
  representing the modified \dmap\ problem for Bayesian networks as a
  decision problem in safe and acyclic free-choice nets.
  $\np^{\pp}$-completeness then follows together with
  Corollary~\ref{cor:safc-pp}.
  Note that we are using the restricted version of the \dmap\ problem
  as explained in Section~\ref{sec:preliminaries} (uniformly
  distributed input nodes, binary values).

  We sketch the reduction via an example: we take the Bayesian network
  in Figure~\ref{fig:BN} and consider a \dmap\ instance where
  $E = \{X_c, X_d\}$ (evidence, where we fix the values of $c,d$ to be
  $0,1$), $F=\{X_a\}$ (MAP variables) and $p$ is a threshold.
  That is, the question being asked for the Bayesian network is
  whether there exists a value $x$ such that
  $\mP(X_c = 0, X_d = 1\mid X_a = x) > p$.

  
  \noindent
  \begin{minipage}{0.37\linewidth}
    \quad
    This Bayesian network is encoded into the SAFC net in 
    Figure~\ref{fig:bn-safc-reduction}, where transitions with double
    borders are controllable and the yellow places give a reward of
    $1$ when both are reached (not necessarily at the same time).
    Transitions either have an already indicated rate of $1$ or the
    rate can be looked up in the corresponding matrix of the BN.
    The rate of a transition $t^i_{x_1 x_2 \rightarrow x_3}$ is the
    probability value $P_i(x_3 \mid x_1 x_2)$, where $P_i$ is the
    probability matrix for $i\in \{a,b,c,d\}$.

    \quad
    Intuitively the first level of
    transitions \hfill simulates \hfill the
    \vspace*{1mm}
  \end{minipage}
  \begin{minipage}{0.63\linewidth}
    \vspace*{-1cm}
    \begin{figure}[H]
      \centering
        \begin{tikzpicture}[node distance=1.8cm,>=stealth',bend angle=45,auto,every label/.style={align=left}]
          \tikzstyle{place}=[circle,thick,draw=black,fill=white,minimum size=5mm]
          \tikzstyle{transition}=[rectangle,thick,draw=black!75,
          fill=black!5,minimum size=4mm]
    
          \begin{scope}
            \node [place, tokens=1] (start) [label=left:$P_{()}^{(\bot, a)}$]{};
            \node [place, tokens=1] (start1) [right of=start, xshift=1.5cm, label=left:$P_{()}^{(\bot, b)}$]{};
            \node [place] (a0) [below of=start, xshift=-7mm, label=left:$P_0^{a}$]{};
            \node [place] (a1) [below of=start, xshift=9mm, label=right:$P_1^{a}$]{};
            \node [place] (b0) [below of=start1, xshift=-6mm, label=left:$P_0^{b}$]{};
            \node [place] (b1) [below of=start1, xshift=5mm, label=right:$P_1^{b}$]{};
            \node [place, draw=white] (helper2) [below of=b1, yshift=0mm]{$\dots$};
    
            \node [place] (ac0) [below of=a0, yshift=-12mm, xshift=-10mm, label={180:\tiny{$P_{0}^{a,c}$}}]{};
            \node [place] (ad00) [below of=a0, yshift=1mm, label={-90:\tiny{$P_{00}^{a,d}$}}]{};
            \node [place] (ad01) [below of=a0, yshift=1mm, xshift=0.9cm, label={90:\tiny{$P_{01}^{a,d}$}}]{};
    
            \node [place] (ac1) [below of=a0, yshift=-12mm, xshift=9mm, label={180:\tiny{$P_{1}^{a,c}$}}]{};
            \node [place] (ad10) [below of=a1, xshift=1mm, label={-90:\tiny{$P_{10}^{a,d}$}}]{};
            \node [place] (ad11) [below of=a1, xshift=8mm, label={-90:\tiny{$P_{11}^{a,d}$}}]{};
    
            \node [place] (bd00) [below of=b0, xshift=-3mm, label={95:\tiny{$P_{00}^{b,d}$}}]{};
            \node [place] (bd10) [below of=b0, xshift=3mm, label={-90:\tiny{$P_{10}^{b,d}$}}]{};
    
            \node [place] (c0) [below of=a0, yshift=-2.7cm, xshift=-0.5cm, label={left:$P_{0}^{c}$}, fill=yellow!50]{};
            \node [place] (c1) [below of=a0, yshift=-2.7cm, xshift=0.5cm, label={right:$P_{1}^{c}$}]{};
    
            \node [place] (d0) [right of=c1, xshift=0.3cm, label={left:$P_{0}^{d}$}]{};
            \node [place] (d1) [right of=d0, xshift=-6mm, label={right:$P_{1}^{d}$}, fill=yellow!50]{};
    
            \node [transition, double] (control1) [below of=start, yshift=9mm, xshift=-7mm] {$t_{() \rightarrow 0}^a$}
            edge [pre] (start)
            edge [post] (a0);
    
            \node [transition, double] (control2) [below of=start, yshift=9mm, xshift=9mm] {$t_{() \rightarrow 1}^a$}
            edge [pre] (start)
            edge [post] (a1);
    
            \node [transition] (init0) [below of=start1, yshift=9mm, xshift=-6mm] {$t_{() \rightarrow 0}^b$}
            edge [pre] (start1)
            edge [post] (b0);
    
            \node [transition] (init1) [below of=start1, yshift=9mm, xshift=5mm] {$t_{() \rightarrow 1}^b$}
            edge [pre] (start1)
            edge [post] (b1);
    
            \node [transition] (ta0) [below of=a0, yshift=9mm, label={[text=blue]180:$1$}] {$t_{0}^{a}$}
            edge [pre] (a0)
            edge [post] (ac0)
            edge [post] (ad00)
            edge [post] (ad01);
    
            \node [transition] (ta1) [below of=a1, yshift=9mm, label={[text=blue]0:$1$}] {$t_{1}^{a}$}
            edge [pre] (a1)
            edge [post] (ac1)
            edge [post] (ad10)
            edge [post] (ad11);
    
            \node [transition] (tb0) [below of=b0, yshift=9mm, label={[text=blue]180:$1$}] {$t_{0}^{b}$}
            edge [pre] (b0)
            edge [post] (bd00)
            edge [post] (bd10);
    
            \node [transition] (tb1) [below of=b1, yshift=9mm, label={[text=blue]0:$1$}] {$t_{1}^{b}$}
            edge [pre] (b1);
    
            \node [transition] (tc0_0) [below of=a0, yshift=-1.9cm, xshift=-1.5cm] {$t_{0 \rightarrow 0}^{c}$}
            edge [pre] (ac0)
            edge [post] (c0);
    
            \node [transition] (tc0_1) [below of=a0, yshift=-1.9cm, xshift=-0.5cm] {$t_{0 \rightarrow 1}^{c}$}
            edge [pre] (ac0)
            edge [post] (c1);
    
            \node [transition] (tc1_0) [below of=a0, yshift=-1.9cm, xshift=0.5cm] {$t_{1 \rightarrow 0}^{c}$}
            edge [pre] (ac1)
            edge [post] (c0);
    
            \node [transition] (tc1_1) [below of=a0, yshift=-1.9cm, xshift=1.5cm] {$t_{1 \rightarrow 1}^{c}$}
            edge [pre] (ac1)
            edge [post] (c1);
    
            \node [transition] (td00_0) [right of=tc1_1, xshift=-7mm] {$t_{00 \rightarrow 0}^{d}$}
            edge [pre] (ad00)
            edge [pre] (bd00)
            edge [post] (d0);
    
            \node [transition] (td00_1) [right of=td00_0, xshift=-6mm] {$t_{00 \rightarrow 1}^{d}$}
            edge [pre] (ad00)
            edge [pre] (bd00)
            edge [post] (d1);
    
            \node [place, draw=white] (helper) [right of=td00_1, xshift=-8mm]{$\dots$};
          \end{scope}
        \end{tikzpicture}
        \vspace*{-0.2cm}
      \caption{SAFC net corresponding to BN in Figure~\ref{fig:BN}}
      \label{fig:bn-safc-reduction}
    \end{figure}
  \end{minipage}

  \noindent 
  probability tables of $P^a,P^b$, the nodes without predecessors in
  the Bayesian network, where for instance the question of whether
  $P_0^a$ or $P_1^a$ are marked corresponds to the value of the random
  variable $X_a$ associated with node $a$.
  Since $X_a$ is a MAP variable, its two transitions are controllable.
  Note that enabling both transitions will never give a higher reward
  than enabling only one of them.
  (This is due to the fact that $\max\{x,y\} \ge p_1\cdot x +
  p_2\cdot y$ for $p_1,p_2 \ge 0$ with $p_1+p_2=1$.)

  The second level of transitions (each with rate $1$) is inserted
  only to obtain a free-choice net by creating sufficiently many
  copies of the places in order to make all conflicts free-choice.

  The third level of transitions simulates the probability tables of
  $P^c$, $P^d$, only to ensure the net being free-choice we need
  several copies.
  For instance, transition $t_{0 \rightarrow 0}^{c}$ consumes a token
  from place $P_0^{a,c}$, a place specifically created for the entry
  $P^c(c=0 \mid a=0)$ in the probability table of node~$c$.

  In the end the aim is to mark the places $P_0^c$ and $P_1^d$, and we
  can find a policy (deactivating either $t^a_{()\to 0}$ or
  $t^a_{()\to 0}$) such that the probability of reaching both places
  exceeds $p$ if and only if the \dmap\ instance specified above has a
  solution.

  This proof idea can be extended to more complex Bayesian networks,
  for a more formal proof see \short{\arxive}\full{the appendix}.
  \qed
\end{proofsketch}

\begin{proof}
  We show $\np^{\pp}$-hardness by reduction of the $\np^{\pp}$-hard
  \dmap\ problem for Bayesian networks (see also Section~\ref{sec:preliminaries}) to the \safcpol\ problem for (safe and
  acyclic free-choice) decision Petri nets.

  As such, we are given a Bayesian network
  $((X_1,\dots,X_n),\Delta,P),$ disjoint sets $E,F\subseteq X$ of
  evidence and MAP variables, evidence $e\in E$, and a threshold
  $p\in[0,1]$.
  To prove $\np^{\pp}$-hardness of \safcpol, we construct an instance
  of the \safcpol\ problem such that
  \[\max_{D\subseteq C}\val^D>p\qquad\Leftrightarrow\qquad\max_{f\in
      V_F}\mP(E=e\mid F=f)>p.\]
  The idea behind the construction is to simulate the computation in
  the Bayesian network by a SAFC net $N$.
  More intuition for this proof is given in the proof sketch in the
  main body of the paper.

  We define the set of places $P$ of $N$ to contain all places of the
  form
  \begin{itemize}
    \item
      $p^i_v$ for $i\in\setto{n},v\in \{0,1\}$ (places representing
      the fact that the random variable of node $i$ has value $v$), as
      well as
    \item
      $p^{(j,i)}_{\tilde{v}}$ for $i\in\setto{n}$,
      $\tilde{v}\in \{0,1\}^{\Delta^i}$, and $j\in\Delta^i$ if
      $\Delta^i\neq\emptyset$ and $j=\perp$ otherwise (auxiliary
      places, ensuring the fact that the net is free-choice).
  \end{itemize}

  Similarly, its transition set $T$ contains two types of transitions:
  \begin{itemize}
    \item
      transitions $t^i_{\tilde{v}\rightarrow v}$ for $i\in\setto{n}$,
      $\tilde{v}\in\prod_{j\in\Delta^i} \{0,1\}$, and $v\in \{0,1\}$
      with firing rates
      $\Lambda(t^i_{\tilde{v}\rightarrow v}) = P_i(v\mid\tilde{v})$
      (transitions simulating the probabilistic choices in the
      Bayesian network),
    \item
      transitions $t^i_{v}$ for $v\in \{0,1\}$ with firing rate $1$
      (auxiliary transitions).
  \end{itemize}

  All in all, this amounts to
  $2 + \max(1,|\Delta^i|)\cdot 2^{|\Delta^i|} \in\mathcal{O}(n\cdot
  2^{|\Delta^i|})$ places and
  $2+2\cdot 2^{|\Delta^i|}\in\mathcal{O}(2^{|\Delta^i|})$ transitions
  for each $i\in\setto{n}$.
  As the matrices $P_i$ of the Bayesian network contain
  $2^{|\Delta^i| +1}$ entries (hence the size of the input is already
  exponential in the $|\Delta^i|$), this SAFC-net is thus of
  polynomial size in the size of the network.

  The initial marking then puts one token in precisely each place
  $p^{(\perp,j)}_{()}$ and the flow relation is defined as follows
  (all values not given are $0$):
  \begin{align*}
    \pre{(t^i_v)}(p^i_v) & = & 1\\
    \post{(t^j_v)}(p^{(j,i)}_{\tilde{v}}) & = & 1\\
    \pre{(t^i_{\tilde{v}\rightarrow v_i})}(p^{(j,i)}_{\tilde{v}})
    & = & 1 \\
    \post{(t^i_{\tilde{v}\rightarrow v})}(p^i_{v}) & = & 1
  \end{align*}
  Note that the only direct (free-choice) conflicts exist between
  transitions of the form $t^i_{\tilde{v}\rightarrow v}$ for
  different values $v$ but equal $\tilde{v}$ and $i$.
  Intuitively, these conflicts simulate the probabilistic decision of
  choosing a value $v$ for the random variable $X_i$ in the Bayesian
  network when given the parent values $\tilde{v}$.
  In this sense, the places $p^i_{v}$ being marked in the execution of
  the net can be understood as the variable $X_i$ taking the value $v$
  in this simulation of the Bayesian network.
  The transitions $t^i_{v}$ then forward this signal to the child
  nodes in the network by duplicating the token in $p^i_{v}$ to the
  places $p^{(i,j)}_{\tilde{v}}$ which can be seen as the input from
  node $i$ to the node $j$ for the decision when $\tilde{v}$ are all
  parent values in node $j$.
  For our construction, this duplication step is necessary to ensure
  that the net remains free-choice (removing the duplication and
  directly feeding places $p^i_{v}$ into transitions
  $t^j_{\tilde{v}\rightarrow v'}$ would yield a smaller correct but
  non-free-choice net).

  Now, for the given evidence $e\in V_E$, we define the reward
  function as
  \[R(Q)=\begin{cases}
      1&\text{if }Q=\set{p^i_{e_i}\mid X_i\in E},\\
      0&\text{otherwise,}
    \end{cases}\]
  such that the reward is only received if all the places
  corresponding to $e$ are marked and the value thus represents the
  probability of reaching all these places.

  Finally, we encode the MAP-variables $F$ in the controllable
  transitions by defining
  \[C = \set{t^i_{()\rightarrow v}\mid X_i\in F,v\in \{0,1\}}\]
  where we use the fact that $F$ only contains input nodes.

  By construction, the resulting net is an SAFC net and can be
  constructed in polynomial time from the Bayesian network (as well as
  evidence and MAP variables).
  Furthermore, we have for all $f\in V_F$ that
  \[\val^{D_f}=\mP(E=e\mid F=f)\]
  for
  $D_f\coloneqq\set{t^i_{()\rightarrow g_i}\mid X_i\in F,g_i\in
    \{0,1\}\setminus\{f_i\}},$
  i.e., the set $D_f$ that deactivates all transitions that would mark
  a place corresponding to a value $g_i\in \{0,1\}\setminus\{f_i\}$
  for $X_i$ that differs from $f_i$, hence ensuring that all places
  corresponding to $f$ are marked (with probability $1$).

  As such, if
  \[\max_{f\in V_F}\mP(E=e\mid F=f)>p,\]
  we also have
  \[\max_{D\subseteq C}\val^D>p.\]

  On the other hand, we note that the maximal value $\val^D$ has to be
  reached for a set $D$ that deactivates all but one transition for
  all two-element sets of $\set{t^i_{()\rightarrow v}\mid
  v\in\{0,1\}}$ for $i\in\setto{n}$.

  To see this, note that clearly deactivating both transitions will
  not maximize the probability of reaching the goal places.
  Assume that $F = F' \uplus \{\bar{f}\}$ and for node $\bar{f}$ we
  activate both transitions.
  However, this cannot result in a higher reward, due to the fact that
  \footnote{Note that since the input nodes are uniformly
    distributed, the denominators are always non-zero.}
  \begin{align*}
    && \max_{b\in\{0,1\}} \mP(E=e \mid F'=f, \bar{f}=b) \\
    & = & \mP(E=e \mid F'=f, \bar{f}=0) \lor \mP(E=e \mid
    F'=f, \bar{f}=1) \\
    & = & \frac{\mP(E=e, \bar{f} = 0 \mid
      F'=f)}{\mP(\bar{f} = 0 \mid F'=f)} \lor
    \frac{\mP(E=e, \bar{f} = 1 \mid F'=f)}{\mP(\bar{f} =
      1 \mid F'=f)} \\
    & \ge & \mP(\bar{f} = 0 \mid F'=f)\cdot
    \frac{\mP(E=e, \bar{f} = 0 \mid F'=f)}{\mP(\bar{f} =
      0 \mid F'=f)} \mathop{+} \\
    && \qquad\qquad \mP(\bar{f} = 1 \mid F'=f) \cdot
    \frac{\mP(E=e, \bar{f} = 1 \mid F'=f)}{\mP(\bar{f} =
      1 \mid F'=f)} \\
    & = & \mP(E=e, \bar{f} = 0 \mid F'=f) + \mP(E=e,
    \bar{f} = 1 \mid F'=f) \\
    & = & \mP(E=e \mid F'=f)
  \end{align*}
  where the latter would be the reward that this policy gives us.
  The inequality above holds since
  $\max\{x,y\} \ge p_1\cdot x + p_2\cdot y$ for $p_1,p_2 \ge 0$ with
  $p_1+p_2=1$.

  Hence, defining $f_D\in D_F$ by $(f_D)_i\coloneqq v$ for the unique
  $v\in \{0,1\}$ with $t^i_{()\rightarrow v}\notin D$, we have that
  \[\val^{D}=\mP(E=e\mid F=f_D).\]

  Therefore, also if
  \[\max_{D\subseteq C}\val^D>p,\]
  we have
  \[\max_{f\in V_F}\mP(E=e\mid F=f)>p\]
  and vice versa.

  All in all, this shows that the \dmap\ problem can be reduced to the
  \safcpol\ problem in polynomial time with the same threshold
  $p\in[0,1]$.\qed
\end{proof}

In fact, a reduction in the opposite direction (from Petri nets to
Bayesian networks) is possible as well under mild restrictions, which
shows that these problems are closely related.

\begin{propositionrep}
  \label{prop:safc-bn-reduction}
  For two given constants $k,\ell$, consider the following problem:
  let $N$ be a SAFC decision Petri net, where for each branching cell
  the number of controllable transitions is bounded by some constant
  $k$.
  Furthermore, given its reward function $R$, we assume that
  $|\cup_{Q\in \supp(R)} Q| \le \ell$.
  Given a rational number $p$, does there exist a constant policy
  $\pi$ such that $\val^\pi > p$?

  This problem can be polynomially reduced to $\dmap$.
\end{propositionrep}

\vspace{-3.4cm}
\begin{adjustbox}{valign=C,raise=\strutheight,minipage={1\linewidth}}
  \begin{wrapfigure}[34]{r}{0.45\linewidth} 
  \vspace*{4.2cm}
  \centering
  \resizebox{\linewidth}{!}{%
    \begin{tikzpicture}
      \draw [->] (0,-2) -- (0,-7.83);
      \draw [->] (2.1,-8.8) -- (1.33,-9.2);

      \draw [->] (2.1,-5.3) -- (2.2,-5.63);
      \draw [->] (5,-4) -- (4.5,-5.63);
      \draw [->] (6.5,-4) -- (5,-5.63);

      \node[draw, rectangle, label={90:$X_{p_1}$}] (P1) {%
      \begin{tabular}{c c} 0 \; & 1 \\ \hline
        $0$ \; & $1$ \\
      \end{tabular}};

      \node[draw, rectangle, right=of P1, xshift=-3mm, label=$X_{t_1}$] (T1) {%
      \begin{tabular}{c c} 0 \; & 1 \\ \hline
        \multicolumn{2}{c}{$F$} \\
      \end{tabular}};

      \node[draw, rectangle, right=of T1, xshift=-2mm, label=$X_{p_2}$] (P2) {%
      \begin{tabular}{c c} 0 \; & 1 \\ \hline
        $0$ \; & $1$ \\
      \end{tabular}};

      \node[draw,rectangle,fill=white, right=of P2, yshift=-4.05cm, xshift=-5mm, label=$X_{t_5}$] (T5) {%
      \begin{tabular}{c c} 0 \; & 1 \\ \hline
        \multicolumn{2}{c}{$F$} \\
      \end{tabular}};

      \node[draw,rectangle,fill=white, right=of T5, xshift=-9mm, label=$X_{t_6}$] (T6) {%
      \begin{tabular}{c c} 0 \; & 1 \\ \hline
        \multicolumn{2}{c}{$F$} \\
      \end{tabular}};

      \node [draw,rectangle,fill=white, below=of P1, xshift=0.8cm, yshift=5mm, label={90:$X_{\mathbb{C}_1}$}] (C1) {%
      \begin{tabular}{c c | c c c}
        $X_{p_1}$ & $X_{t_1}$ & $\epsilon$ & $t_1$ & $t_2$ \\ \hline
        0 & $\ast$ & $1$ & $0$ & $0$ \\
        1 & 0 & $0$ & $0$ & $1$ \\
        1 & 1 & $0$ & $\nicefrac{1}{2}$ & $\nicefrac{1}{2}$
      \end{tabular}};

      \draw[->] (P1) --node[left]{} (C1);
      \draw[->] (T1) --node[left]{} (C1);

      \node [draw,rectangle,fill=white, below=of C1, xshift=1cm, yshift=5mm, label={170:$X_{p_3}$}] (P3) {%
      \begin{tabular}{c | c c}
        $X_{\mathbb{C}_1}$ & 0 & 1 \\ \hline
        $\epsilon$ & $1$ & $0$ \\
        $t_1$ & $1$ & $0$ \\
        $t_2$ & $0$ & $1$
      \end{tabular}};

      \draw[->] (C1) --node[left]{} (P3);

      \node [draw, rectangle, below=of P2, yshift=5mm, label={10:$X_{\mathbb{C}_2}$}] (C2) {%
      \begin{tabular}{c | c c c}
        $X_{p_2}$ & $\epsilon$ & $t_3$ & $t_4$ \\ \hline
        0 & $1$ & $0$ & $0$ \\
        1 & $0$ & $\nicefrac{1}{2}$ & $\nicefrac{1}{2}$
      \end{tabular}};

      \draw[->] (P2) --node[left]{} (C2);

      \node [draw, rectangle, below=of C2, yshift=1mm, label={80:$X_{p_4}$}] (P4) {%
      \begin{tabular}{c | c c}
        $X_{\mathbb{C}_2}$ & 0 & 1 \\ \hline
        $\epsilon$ & $1$ & $0$ \\
        $t_3$ & $0$ & $1$ \\
        $t_4$ & $1$ & $0$
      \end{tabular}};

      \draw[->] (C2) --node[left]{} (P4);

      \node [draw,rectangle,fill=white, below=of P4, yshift=0.8cm, label={20:$X_{\mathbb{C}_3}$}] (C3) {%
      \begin{tabular}{c c c c | c c c}
        $X_{p_3}$ & $X_{p_4}$ & $X_{t_5}$ & $X_{t_6}$ & $\epsilon$ & $t_5$ & $t_6$ \\ \hline
        0 & $\ast$ & $\ast$ & $\ast$ & $1$ & $0$ & $0$ \\
        $\ast$ & 0 & $\ast$ & $\ast$ & $1$ & $0$ & $0$ \\
        1 & 1 & 0 & 0 & $1$ & $0$ & $0$ \\
        1 & 1 & 0 & 1 & $0$ & $0$ & $1$ \\
        1 & 1 & 1 & 0 & $0$ & $1$ & $0$ \\
        1 & 1 & 1 & 1 & $0$ & $\nicefrac{1}{2}$ & $\nicefrac{1}{2}$
      \end{tabular}};

      \draw[->] (P4) --node[left]{} (C3);

      \node [draw, rectangle, left=of C3, xshift=0.9cm, yshift=-2.2cm, label={100:$X_{p_5}$}] (P5) {%
      \begin{tabular}{c c | c c}
        $X_{\mathbb{C}_1}$ & $X_{\mathbb{C}_3}$ & 0 & 1 \\ \hline
        $\epsilon$ & $\epsilon$ & $1$ & $0$ \\
        $\epsilon$ & $t_6$ & $1$ & $0$ \\
        $t_2$ & $\epsilon$ & $1$ & $0$ \\
        $t_2$ & $t_6$ & $1$ & $0$ \\
        $t_1$ & $\ast$ & $0$ & $1$ \\
        $\ast$ & $t_5$ & $0$ & $1$
      \end{tabular}};

      \node [draw, rectangle, below=of C3, yshift=8mm, label={180:$X_{p_6}$}] (P6) {%
      \begin{tabular}{c | c c}
        $X_{\mathbb{C}_3}$ & 0 & 1 \\ \hline
        $\epsilon$ & $1$ & $0$ \\
        $t_5$ & $0$ & $1$ \\
        $t_6$ & $0$ & $1$
      \end{tabular}};

      \draw[->] (C3) --node[left]{} (P6);

      \node [draw, rectangle, below=of C3, xshift=2cm, yshift=8mm, label={80:$X_{p_7}$}] (P7) {%
      \begin{tabular}{c | c c}
        $X_{\mathbb{C}_3}$ & 0 & 1 \\ \hline
        $\epsilon$ & $1$ & $0$ \\
        $t_5$ & $1$ & $0$ \\
        $t_6$ & $0$ & $1$
      \end{tabular}};

      \draw[->] (C3) --node[left]{} (P7);

      \node [draw, rectangle, below=of P6, yshift=0.8cm, label={180:$X_{\mathit{rew}}$}] (Rew) {%
      \begin{tabular}{c | c c c c c c c c}
        $X_{p_{5}}$ & 0 & 0 & 0 & 0 & 1 & 1 & 1 & 1 \\
        $X_{p_6}$ & 0 & 0 & 1 & 1 & 0 & 0 & 1 & 1 \\
        $X_{p_7}$ & 0 & 1 & 0 & 1 & 0 & 1 & 0 & 1 \\ \hline
        0 & $\nicefrac{1}{2}$ & $\nicefrac{1}{2}$ & $\nicefrac{1}{3}$ & $\nicefrac{1}{2}$ &
            $\nicefrac{1}{3}$ & $\nicefrac{1}{2}$ & $\nicefrac{1}{3}$ & $\nicefrac{1}{2}$ \\
        1 & $\nicefrac{1}{2}$ & $\nicefrac{1}{2}$ & $\nicefrac{2}{3}$ & $\nicefrac{1}{2}$ &
            $\nicefrac{2}{3}$ & $\nicefrac{1}{2}$ & $\nicefrac{2}{3}$ & $\nicefrac{1}{2}$
      \end{tabular}};

      \draw[->] (P5) --node[left]{} (Rew);
      \draw[->] (P6) --node[left]{} (Rew);
      \draw[->] (P7) --node[left]{} (Rew);

    \end{tikzpicture}}
    \vspace*{-0.8cm}
    \caption{Bayesian network obtained
      from the SAFC net in Figure~\ref{fig:SAFC} below. Entries $\ast$ are
      \enquote*{don't-care} values.} \label{fig:safc-bn-reduction}
  \end{wrapfigure}
  \strut{}
  \vspace{2.6cm}
  \begin{proofsketch}
    We sketch the reduction, which is inspired by~\cite{BruniMM20},
    via an example: consider the SAFC net in Figure~\ref{fig:SAFC},
    where the problem is to find a deactivation pattern such that the
    payoff exceeds $p$.
    We encode the net into a Bayesian network
    (Figure~\ref{fig:safc-bn-reduction}), resulting in an instance of
    the \dmap\ problem.

    We have four types of random variables: place variables ($X_p$,
    $p\in P$), which record which place is marked;
    transition variables ($X_{t_1}, X_{t_5}, X_{t_6}$), one for each
    controllable transition, which are the MAP variables;
    cell variables ($X_{\mathbb{C}_i}$ for
    $\mathbb{C}_1 = \{t_1,t_2\}$, $\mathbb{C}_2 = \{t_3,t_4\}$,
    $\mathbb{C}_3 = \{t_5,t_6\}$) which are non-binary and which
    record which transition in the cell was fired or whether no
    transition was fired ($\epsilon$); a reward variable
    ($X_{\mathit{rew}}$) such that $\mP(X_{\mathit{rew}} = 1)$ equals
    the function $\psi$ applied to the payoff.
    Note that we use the affine function $\psi$ from the proof of
    Proposition~\ref{prop:pbpn-np-pp} to represent rewards as
    probabilities in the interval $[0,1]$.
    The threshold for the $\dmap$ instance is $\psi(p)$.

    Dependencies are based on the structure of the given SAFC net.
    For instance, $X_{\mathbb{C}_3}$ is dependent on $X_{p_3}$,
    $X_{p_4}$ (since $\pre{\mathbb{C}_3} = \{p_3,p_4\}$) and
    $X_{t_5}$, $X_{t_6}$ (since $t_5,t_6$ are the controllable
    transitions in $\mathbb{C}_3$).
    
    Both the matrices of cell and place variables could become
    exponentially large, however this problem can be resolved easily
    by dividing the matrices into smaller ones and cascading
    them.
    Since the number of controllable transitions is bounded by $k$ and
    the number of rewarded places by $\ell$, they will not cause an
    exponential blowup of the corresponding matrix.  \qed
\end{proofsketch}
\end{adjustbox}

\begin{proof}
  Given a net $N = (P,T,\pre{()},\post{()},\Lambda,m_0,C,R)$
  satisfying the restrictions and a threshold $p$ we construct a
  \dmap\ problem as follows:

  First, we define a Bayesian network $(X, \Delta, P)$ with a set of
  random variables of the form:
  \begin{itemize}
    \item
      $X_p$ for $p \in P$ (variables representing the presence of a
      token in each place)
    \item
      $X_t$ for $t \in C$ (variables representing whether a
      controllable transition is activated)
    \item
      $X_{\mathbb{C}}$ for every branching cell $\mathbb{C}$ (cf.
      Section~\ref{sec:preliminaries}) and finally
    \item
      $X_{\mathit{rew}}$ as the only evidence variable in $E$
  \end{itemize}
  The subscripts ($p,t,\mathbb{C},\mathit{rew}$) correspond to the
  nodes of the Bayesian network.

  Second, we clarify which variables/nodes are dependent on one
  another:
  \begin{itemize}
    \item
      $\Delta^p=\{\mathbb{C}\in\bc{N}\mid p\in\post{\mathbb{C}}\}$
    \item
      $\Delta^t = \emptyset$
    \item
      $\Delta^{\mathbb{C}} = \pre{\mathbb{C}} \cup (C\cap \mathbb{C})$
    \item
      $\Delta^{\mathit{rew}} = \cup_{Q \in \supp(R)} Q$
  \end{itemize}

  \medskip\hrule\medskip

  To complete the description of the Bayesian network, we now specify
  the probability matrices.
  \begin{itemize}
    \item
      For nodes representing controllable transitions ($X_t$,
      $t\in C$) we have no predecessor variables, hence they are all
      input nodes.
      These are the MAP variables $F$ and will later be set to a
      specific boolean value according to the chosen policy $\pi$,
      when solving the \dmap\ problem.
      As required by the considered variant of the \dmap\ problem, we
      assume that they are uniformly distributed.
    \item
      For random variables representing places ($X_p$), whenever
      $\Delta_p=\emptyset$, we set $P_p(1) = 1$ if $p\in m_0$ and $0$
      otherwise.
      If $p$ is in the post-set of a transition let
      $\Delta^p = \{\mathbb{C}_1,\dots,\mathbb{C}_n \}$.
      Keep in mind that cell variables as non-binary variables return
      a transition or $\epsilon$.
      We define, for $t_j \in \mathbb{C}_j\cup \{\epsilon\}$:
      \begin{align*}
        P_p (1 \mid t_1\dots t_n)
        &= \bigvee_{j \in \{1 \dots n\}} [p\in \post{t_j}]
      \end{align*}
      The binary operator $[p \in \post{t}]$ returns $1$ if place $p$
      is in the post set of transition $t$ and $0$ otherwise.
      If $t = \epsilon$, the value is also $0$.

      Furthermore
      $P_p(0\mid t_1\dots t_n) = 1 - P_p(1\mid t_1\dots t_n)$.
    \item
      For a cell variable $X_{\mathbb{C}}$, let
      $\Delta^{\mathbb{C}} = \{p_1,\dots,p_m,t_1,\dots,t_k\}$,
      $v_i,u_j\in \{0,1\}$ where $i\in\{1,\dots,m\}$,
      $j\in\{1,\dots,k\}$.
      That is $v_i$ tells us if place $p_i$ is marked and $u_j$
      specifies if transition $t_j\in C$ is activated.
      Let
      \[\emph{Act}(\mathbb{C},u)=\{t\in\mathbb{C}\mid t\not\in C
        \lor (t\in C \land \exists j (t=t_j \land u_j=1)) \} \]
      be the set of transitions that are activated in $\mathbb{C}$
      (since they are either not controllable or controllable and
      activated).
      Now for every $t \in \mathbb{C}$ we have:
      \begin{align*}
        P_{\mathbb{C}} (t \mid v_1\dots v_m u_1\dots u_k) &=
        \frac{\Lambda(t)}{\sum_{t'\in\mathit{Act}(\mathbb{C},u)}
          \Lambda(t')}
      \end{align*}
      if $v_1\dots v_m = 1\dots 1$ and
      $t\in \mathit{Act}(\mathbb{C},u)$.
      The value is $0$ otherwise.
      Instead:
      \[ P_{\mathbb{C}}(\epsilon \mid v_1\dots v_m u_1\dots u_k) = 1 -
        \sum_{t \in \mathbb{C}} P_{\mathbb{C}} (t \mid v_1\dots v_m
        u_1\dots u_k), \] in particular the value is $1$ if
      $v_1\dots v_m \neq 1\dots 1$.
    \item For the reward node, we make use of the affine linear
      transformation $\psi$ introduced in the proof of
      Proposition~\ref{prop:pbpn-np-pp}, using the lower and upper
      bounds $\valfunc_{\min}$, $\valfunc_{\max}$ in order to
      represent the rewards as probabilities (mapping to $[0,1]$).  As
      already mentioned above, we also have to adapt the threshold $p$
      to $\tilde{p} \coloneqq \psi(p)$.  Let
      $\Delta^{\mathit{rew}} = \{p_1,\dots,p_m\}$ and $v_i\in\{0,1\}$,
      $i\in\{1,\dots,m\}$ binary values indicating whether $p_i$ will
      be marked.  Furthermore let $P_v = \{ p_i \mid v_i=1\}$ the
      corresponding set of marked places.  Then
      \begin{align*}
        P_{\mathit{rew}} (1 \mid v_1\dots v_m)
        &= \psi(\sum_{\substack{Q\in\supp(R)\\Q\subseteq P_v}}R(Q))\\
        P_{\mathit{rew}}(0\mid v_1\dots v_m)
        &=1-P_{\mathit{rew}}(1 \mid v_1\dots v_m)
      \end{align*}
  \end{itemize}
  In order to completely define the \dmap\ instance, we fix the
  evidence variables to $E = \{X_{\mathit{rew}}\}$ with
  $e = 1 \in V_e = \{0,1\}$, the MAP variables to
  $F = \{X_t \mid t \in C\}$ and the threshold to $\tilde{p}$.

  \medskip\hrule\medskip

  This $\dmap$ instance has a solution if there is a deactivation
  pattern $f\in V_F$ such that
  $\mP(X_{\mathit{rew}} = 1 \mid F = f) > \tilde{p}$.
  Assuming that $\mathit{UC}$ is the set of all functions
  \footnote{These functions choose which transition is fired in each
  cell. We have to sum over all these functions to determine the
  probability.}
  $u\colon \bc{N}\to T\cup\{\epsilon\}$ such that
  $u(\mathbb{C}) = \epsilon\lor u(\mathbb{C})\in\mathbb{C}$, we ask --
  by evaluating the Bayesian network -- whether there exists
  $f\colon C\to\{0,1\}$ such that
  \begin{align*}
    \tilde{p} < & \sum_{u\in\mathit{UC}} \sum_{v\colon P\to \{0,1\}}
    \prod_{p\in P} P_p(v(p) \mid (u(\mathbb{C}))_{\mathbb{C}\in
      \Delta^p})
    \mathop{\cdot} \\
    & \qquad \prod_{\mathbb{C}\in\mathit{BC}}
    P_{\mathbb{C}} (u(\mathbb{C}) \mid
    (v(p))_{p\in\Delta^{\mathbb{C}} \cap P}, (f(t))_{t\in
      \Delta^{\mathbb{C}}\cap C})\cdot P_{\mathit{rew}}(1\mid
    (v(p))_{p\in\Delta^{\mathit{rew}}})
  \end{align*}
  We observe that for a given $u,v$, the product equals $0$, unless
  $v$ satisfies: $v(p)=1$ iff $p\in m_0$ or there exists $t\in T$ such
  that ($p\in\post{t}\land u(\mathbb{C}) = t$), i.e., $p$ is either
  initial or is generated by a transition that was fired.
  We denote this specific $v$ by $v[u]$ and the term above becomes:
  \begin{align*}
    && \sum_{u\in\mathit{UC}}
    \prod_{\mathbb{C}\in\mathit{BC}}P_{\mathbb{C}}(u(\mathbb{C}) \mid
    (v[u](p))_{p\in\Delta^{\mathbb{C}} \cap P}, (f(t))_{t\in
      \Delta^{\mathbb{C}}\cap C})\cdot
    \psi\big(\sum_{\substack{Q\in\supp(R)\\ Q\subseteq
        P_{v[u]}}} R(Q)\big) \\
    & = & \psi\big(\sum_{u\in\mathit{UC}}
    \prod_{\mathbb{C}\in\mathit{BC}}P_{\mathbb{C}}(u(\mathbb{C}) \mid
    (v[u](p))_{p\in\Delta^{\mathbb{C}} \cap P}, (f(t))_{t\in
      \Delta^{\mathbb{C}}\cap C})\cdot \sum_{\substack{Q\in\supp(R)\\
        Q\subseteq P_{v[u]}}} R(Q)\big)
  \end{align*}
  This equality is true since $\psi$ commutes with expected values
  (cf. proof of Proposition~\ref{prop:pbpn-np-pp}).
  Note that $P_{v[u]}=m_0\cup\post{(u[\bc{N}]\setminus\{\epsilon\})}$.

  Due to the fact that $\psi$ is strictly monotone, this value in turn
  is larger than or equal to $\tilde{p} = \psi(p)$ iff
  \begin{align*}
    p & < & \sum_{u\in\mathit{UC}}
    \prod_{\mathbb{C}\in\mathit{BC}} P_{\mathbb{C}} (u(\mathbb{C}) \mid
    (v[u](p))_{p\in\Delta^{\mathbb{C}} \cap P}, (f(t))_{t\in
      \Delta^{\mathbb{C}}\cap C})\cdot \sum_{\substack{Q\in\supp(R)\\
        Q\subseteq P_{v[u]}}} R(Q)
  \end{align*}
  Now we observe that any maximal configuration
  $\tau\in\validconfs{N_D}$ (where $D = f^{-1}(\{0\})$) can be
  represented by a function $u\colon \bc{N}\to T\cup\{\epsilon\}$
  defined as $u(\mathbb{C}) = t$ if $\mathbb{C}\cap \tau = \{t\}$ and
  $\epsilon$ otherwise.
  This function $u$ clearly satisfies
  $u(\mathbb{C}) = \epsilon\lor u(\mathbb{C}) \in \mathbb{C}$.

  Vice versa, given such a function $u$ it only corresponds to a
  configuration $\tau = u[\bc{N}]\backslash\{\epsilon\}$ if
  the places in the initial marking and those generated by transitions
  in $\tau$ can cover every $\pre{\tau}$, i.e., every transition in
  $\tau$ is enabled.
  In other words: $\pre{t}\subseteq P_{v[u]}$ for all $t\in \tau$.
  Assume that $t\in\mathbb{C}$.
  If the inclusion $\pre{\mathbb{C}} = \pre{t}\subseteq P_{v[u]}$ does
  not hold, by definition:
  \[ P_\mathbb{C}(u(\mathbb{C}) \mid
    (v[u](p))_{p\in\Delta^{\mathbb{C}} \cap P}, (f(t))_{t\in
      \Delta^{\mathbb{C}}\cap C}) = 0, \]
  which means that such summands disappear.

  Furthermore, if $u$ does correspond to a configuration $\tau$, we
  have that
  \[ \mP(\transitions=\tau) = \prod_{\mathbb{C}\in\mathit{BC}}
    P_{\mathbb{C}}
    (u(\mathbb{C}) \mid (v[u](p))_{p\in\Delta^{\mathbb{C}} \cap P},
    (f(t))_{t\in \Delta^{\mathbb{C}}\cap C}), \]
  that is, the probability of a configuration is obtained by
  multiplying the probability that its transitions \enquote*{win}
  against the other transitions in their cells, taking deactivated
  transitions into account.

  Summarizing, this means that we check the inequality:
  \begin{align*}
    p < \sum_{\tau\in\validconfs{N_D}}
    \mP(\transitions=\tau) \cdot
    \sum_{\substack{Q\in\supp(R)\\ Q\subseteq m_0\cup\post{\tau}}}
    R(Q) = \mathbb{E}[V\circ \mathit{pl}],
  \end{align*}
  that is, we add up the rewards for each configuration, weighted by
  its probability, which is exactly the answer to the $\safcpol$
  problem.

  \medskip\hrule\medskip

  We give some additional intuition for this construction:

  In the reduction above, it is apparent that there are only two types
  of variables that have matrix entries unequal to $0$ or $1$:
  variables representing cells and the reward variable.
  Cell variables are responsible for choosing and returning the
  transition firing in that specific cell according to the enabled
  transitions and their respective firing rates.
  All other variables (apart from the aforementioned reward variable)
  simply forward these information by adequately setting which places
  are marked or empty.

  Because we work with acyclic Petri nets, there will be a final
  marking, in which no further transitions can fire.
  This implies that we will reach a point in time, where all places
  involved in a reward function have either been marked at least once
  or will never be marked.
  We can take note of this information by introducing a final reward
  marking consisting of bits for each of these places representing
  whether it was ever marked or always empty.
  By choosing a policy $\pi$, the transition probabilities in the cell
  variables are manipulated in order to fit the firing rates and
  therefore also how likely it is to reach each possible final reward
  marking.

  Finally, given a policy $\pi$ we obtain how probable each final
  reward marking is and we simply have to multiply this with the
  respective reward, which is already coded into the reward variable
  (albeit fit to the $[0,1]$ interval) and sum up these products.
  This is achieved through matrix multiplication in the BN and results
  in the expected reward for policy $\pi$.

  Hence, if the policy problem for the SAFC net has a solution for
  bound $p$, the \dmap\ problem also has a solution for bound
  $\tilde{p}$.

  \medskip\hrule\medskip

  Finally, while the size of the graph underlying the Bayesian network
  is linear in the size of the Petri net, note that the size of the
  Bayesian network itself, i.e., the sum of the size of its matrizes,
  could still be exponential.
  In particular, this occurs for random variables of type
  $X_\mathbb{C}$ or $X_p$, for which the number of parents is
  unbounded.
  Both corresponding types of nodes can easily be split up by
  cascading multiple variables with only two input variables, where
  the sum of the size of the matrices is only linear, giving us a
  polynomial reduction.

  \begin{center}
    \includegraphics[width=0.9\textwidth]{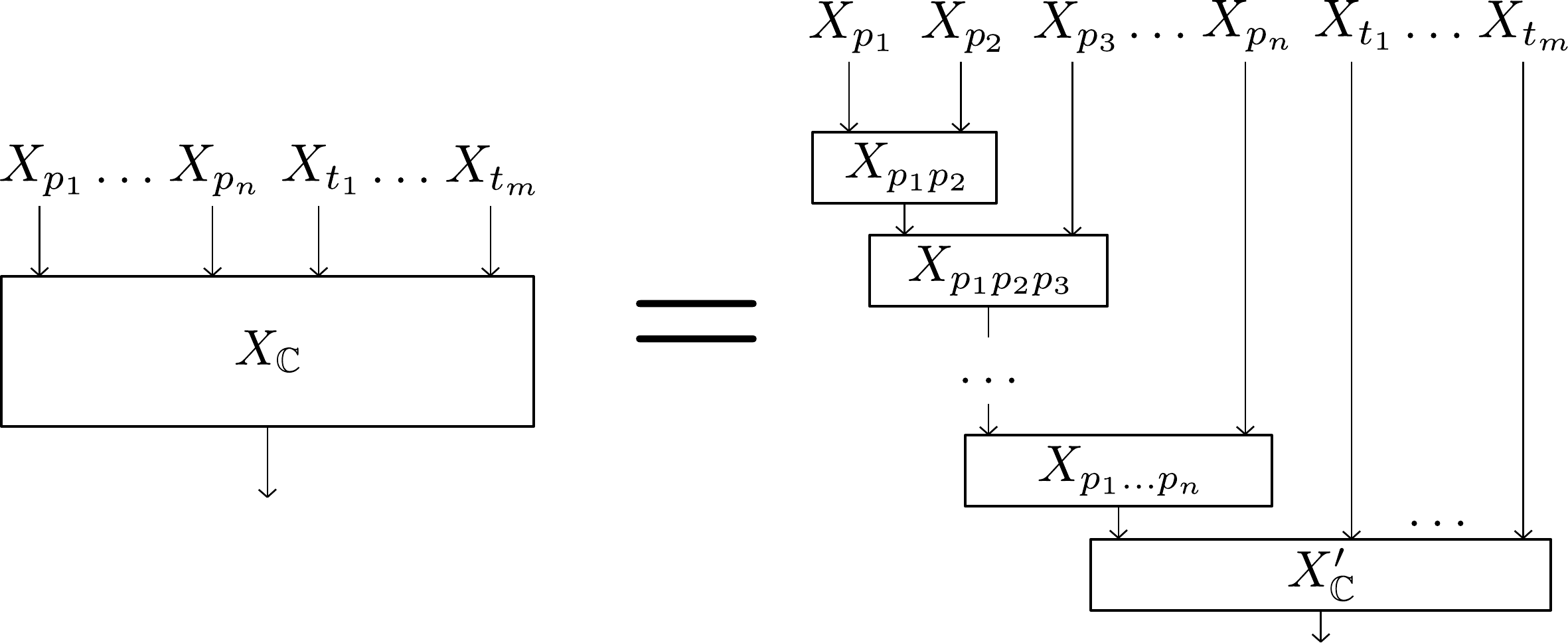}
  \end{center}

  For the splitting of cell variables $X_{\mathbb{C}}$ we remember
  intermediate results of whether the places seen so far are all
  marked, basically by implementing a binary $\land$-operator.
  The matrix corresponding to the random variable $X_{p_1\dots p_j}$
  is denoted by $P_{p_1\dots p_j}$ and we denote the matrix
  corresponding to $X'_{\mathbb{C}}$ by $P'_{\mathbb{C}}$.
  We define:
  \begin{align*}
    P_{p_1 \dots p_i} (1 \mid y_1 y_2) &= y_1 \wedge y_2 \qquad
    y_i = \{0,1\} \\
    P'_{\mathbb{C}} (t \mid v \, u_1 \dots u_m) &=
    \begin{cases}
      P_{\mathbb{C}}(t\mid\overbrace{1\dots 1}^{n}\,u_1\dots u_m)
      &\text{ , if } v = 1\\
      0 &\text{ , otherwise}
    \end{cases}\\
    P'_{\mathbb{C}} (\epsilon \mid v \, u_1 \dots u_m) &=
    1-\sum_{t \in \mathbb{C}} P'_{\mathbb{C}}(t\mid v\,u_1\dots u_m)
  \end{align*}
  The last node is given the information whether all places are
  marked, checks which controllable transitions are activated and
  returns the entries of the original matrix $P_{\mathbb{C}}$.

  We now argue why this construction is correct: we define the
  probability function specified by the new network (on the right-hand
  side) by $\bar{\mP}$ and the one by the original network (on the
  left-hand side) by $\mP$.
  Then we have, given $t\in\mathbb{C}\cup\{\epsilon\}$:
  \begin{align*}
    &\bar{\mP}(X'_{\mathbb{C}} = t \mid X_{p_1} = y_1,\dots,
    X_{p_n}=y_n, X_{t_1} = u_1,\dots, X_{t_m} = u_m) \\
    = & \sum_{w\colon\{2,\dots,n\}\to\{0,1\}}
    P'_{\mathbb{C}}(t\mid w(n)\, u_1\dots u_m)\cdot \prod_{j=3}^{n}
    P_{p_1\dots p_j} (w(j)\mid w(j-1)\,y_j) \mathop{\cdot} \\
    & \qquad\qquad P_{p_1p_2}(w(2) \mid y_1y_2) \\
  \end{align*}
  Here $w$ is a function that assigns (boolean) values to the
  intermediate wires.
  The product under the sum is only non-zero if $w(n) = 1$, due to the
  definition of $P'_{\mathbb{C}}$, and -- by induction -- $w(j) = 1$
  for all other indices $j$, otherwise the matrix entry of
  $P_{p_1\dots p_j}$ equals $0$.
  Hence, the above sum simplifies to
  \begin{align*}
    &
    P'_{\mathbb{C}}(t\mid 1\, u_1\dots u_m)\cdot \prod_{j=3}^{n}
    P_{p_1\dots p_j} (1\mid 1\,y_j) \cdot P_{p_1p_2}(1 \mid y_1y_2) \\
    = &
    \begin{cases}
      P_{\mathbb{C}}(t\mid 1\dots 1u_1\dots u_m) & \mbox{if
      }y_1=\dots=y_n=1 \\
      0 & \mbox{otherwise}
    \end{cases} \\
     = & P_{\mathbb{C}}(t\mid y_1\dots y_n u_1\dots u_m) \\
     = & \mP(X'_{\mathbb{C}} = t \mid X_{p_1} = y_1,\dots,
    X_{p_n}=y_n, X_{t_1} = u_1,\dots, X_{t_m} = u_m)
  \end{align*}

  Similarly, when splitting a place variable $X_p$, we save
  intermediate results on whether the transitions chosen and fired by
  the parent cells mark $p$.
  Therefore, here we produce a cascade with the binary
  $\lor$-operator.
  \begin{center}
    \includegraphics[width=0.6\textwidth]{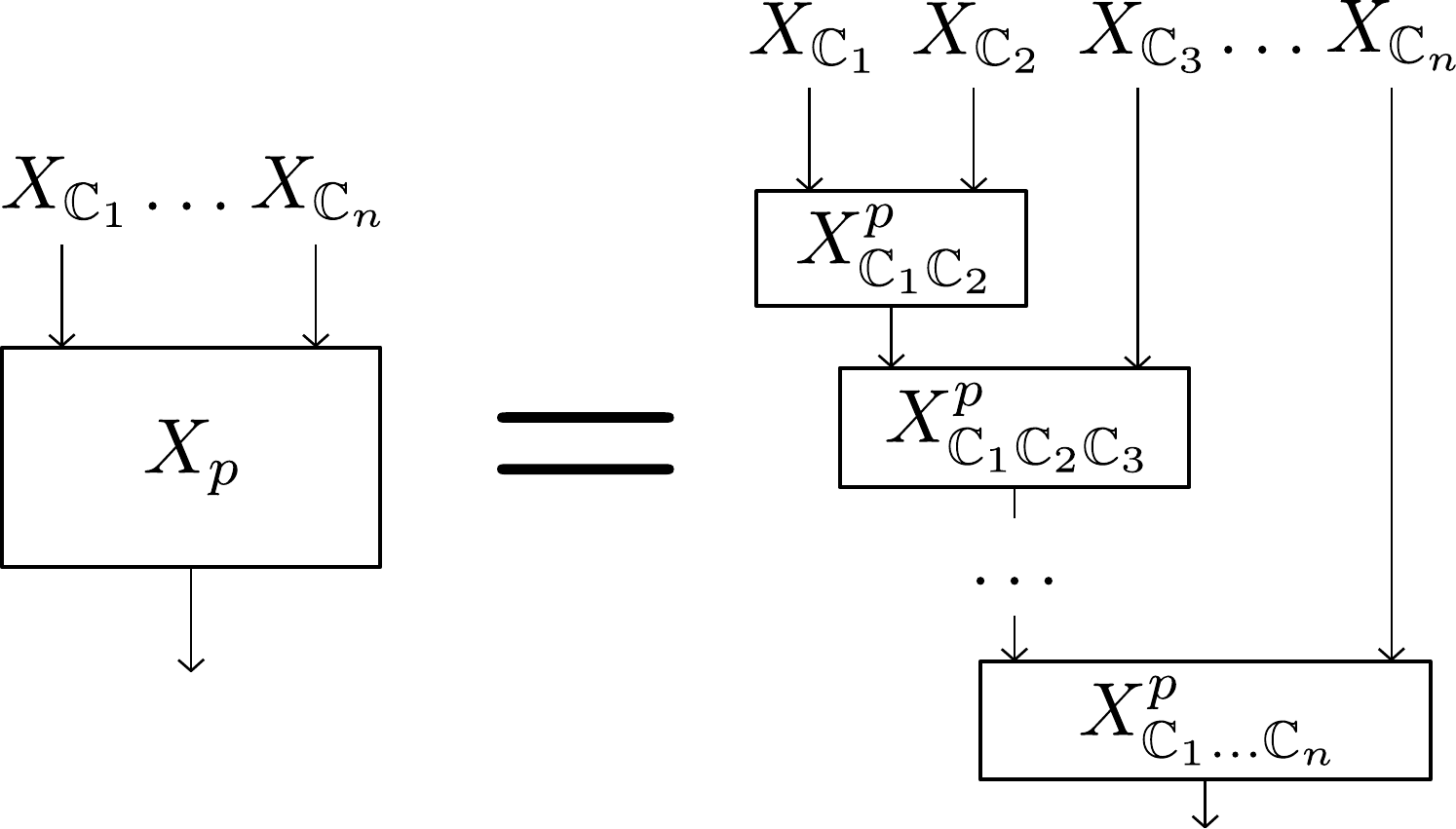}
  \end{center}
  The probability matrix associated with
  $X^p_{\mathbb{C}_1 \dots \mathbb{C}_l}$ is denoted by
  $P^p_{\mathbb{C}_1 \dots \mathbb{C}_l}$ and we define:
  \begin{align*}
    P_{\mathbb{C}_1 \mathbb{C}_2}^p (1 \mid t_1 t_2)
    &= [p \in \post{t_1}] \vee [p \in \post{t_2}]\\
    P_{\mathbb{C}_1 \dots \mathbb{C}_l}^p (1 \mid y \, t)
    &= y \vee [p \in \post{t}]
  \end{align*}
  The correctness argument is analogous to the one above.

  The fact that there are always at most $k$ controllable transitions
  for each cell $\mathbb{C}$ and the reward refers to at most $\ell$
  places, all possible combinations can be encoded into the
  probability matrices for $X_{\mathbb{C}}$ and $X_{\mathit{rew}}$,
  which does not cause an exponential blowup.

  \qed
\end{proof}

\medskip

\begin{corollary}\label{cor:safc-pp-hard}
  The problem \safcval\ is $\pp$-hard and, therefore, also
  $\pp$-complete.
\end{corollary}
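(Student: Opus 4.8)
The plan is to prove $\pp$-hardness by a polynomial many-one reduction from the $\pp$-complete problem \dpr, reusing almost verbatim the net construction from the proof of Proposition~\ref{prop:safc-np-pp-hard}; together with Corollary~\ref{cor:safc-pp} this then gives $\pp$-completeness.

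Concretely, given a \dpr\ instance consisting of a Bayesian network $(X,\Delta,P)$, an evidence set $E$ with valuation $e\in V_E$, and a rational $p>0$, I would build exactly the SAFC net $N$ described in the proof of Proposition~\ref{prop:safc-np-pp-hard}, applied to the \dmap\ instance with the \emph{same} Bayesian network, the same evidence $(E,e)$, \emph{no} MAP variables (i.e.\ $F=\emptyset$), and threshold $p$. That construction already simulates every node $X_i$ by transitions $t^i_{\tilde{v}\to v}$ with firing rate $P_i(v\mid\tilde{v})$ (so for input nodes it uses the rates $P_i(0),P_i(1)$, which here need not be uniform, since these nodes are not MAP variables), inserts the auxiliary transitions $t^i_v$ of rate $1$ and the auxiliary places $p^{(j,i)}_{\tilde{v}}$ that keep the net free-choice, and uses the reward $R(Q)=1$ iff $Q=\set{p^i_{e_i}\mid X_i\in E}$ and $0$ otherwise. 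Since $F=\emptyset$, the set of controllable transitions is $C=\emptyset$.

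Because $C=\emptyset$, the only constant policy for $N$ is $\pi\equiv\emptyset$, so the \safcpol\ and \safcval\ instances over $N$ coincide; I would output the triple $(N,\pi,p)$ as the \safcval\ instance. Correctness is then just the $F=\emptyset$ specialization of the equality already established in the proof of Proposition~\ref{prop:safc-np-pp-hard}: with $V_F=\set{()}$ and $D_{()}=\emptyset$ one obtains $\val^\pi=\mP(E=e)$, hence $\val^\pi>p\iff\mP(E=e)>p$. The net $N$ is SAFC (safe, acyclic, free-choice) and is produced in time polynomial in the size of the Bayesian network by the very same size analysis as in Proposition~\ref{prop:safc-np-pp-hard} (each matrix $P_i$ already contributes $2^{|\Delta^i|+1}$ to the input size, so the $\mathcal{O}(2^{|\Delta^i|})$-many transitions and places created per node $i$ cause no super-polynomial blow-up). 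This yields $\dpr\le_p\safcval$, so \safcval\ is $\pp$-hard, and with Corollary~\ref{cor:safc-pp} it is $\pp$-complete.

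There is no genuine obstacle here; the only thing to verify carefully is that the correctness argument of Proposition~\ref{prop:safc-np-pp-hard} goes through unchanged once $F$ is empty and the input nodes are allowed arbitrary (non-uniform) distributions. Inspecting that argument, it only ever uses that a transition $t^i_{\tilde{v}\to v}$ fires with probability $P_i(v\mid\tilde{v})$ relative to the enabled transitions of its cell, and that the rewarded places are marked exactly when the simulated variables take the values prescribed by $e$ — neither of which relies on uniformity of input nodes or on the presence of controllable transitions, so the equality $\val^\pi=\mP(E=e)$ holds as stated.
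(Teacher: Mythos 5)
Your proposal is correct and is essentially the paper's own proof: the paper likewise reduces \dpr{} to \safcval{} by instantiating the construction of Proposition~\ref{prop:safc-np-pp-hard} with $F=\emptyset$ and then invokes Corollary~\ref{cor:safc-pp} for completeness. You merely spell out the details (that $C=\emptyset$ forces the unique policy $\pi\equiv\emptyset$ and that the correctness equality $\val^\pi=\mP(E=e)$ survives dropping the uniformity assumption on input nodes) that the paper leaves implicit.
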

\begin{proof}
  We note that using the construction in the proof of
  Proposition~\ref{prop:safc-np-pp-hard} with the set $F$ of MAP
  variables being empty, we can reduce the \dpr\ problem for Bayesian
  networks to the \safcval\ problem, showing that \safcval\ is
  $\pp$-hard.
  Using Corollary~\ref{cor:safc-pp}, this yields that \safcval\ is
  $\pp$-complete. \qed
\end{proof}

\begin{corollary}
    For any polynomial $\phi:\mN_0\rightarrow\mN_0$ fulfilling
    $\phi(n)\geq n$ for all $n\in\mN_0$, the problem \valn{[\phi]BPN}
    is $\pp$-complete and \poln{[\phi]BPN} is $\np^{\pp}$-complete.
\end{corollary}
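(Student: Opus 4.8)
The plan is to obtain completeness by pairing the membership results already proved with the hardness of the corresponding problems for SAFC nets, exploiting the class inclusion SAFC $\subseteq$ $[\phi]$BPN.

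For the upper bounds there is nothing new to do: Proposition~\ref{prop:pbpn-np-pp} already states that \valn{[\phi]BPN} lies in $\pp$ and \poln{[\phi]BPN} in $\np^{\pp}$ for every polynomial $\phi$, and in particular for the polynomial $\phi$ considered here.

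For the lower bounds, I would first record that the hypothesis $\phi(n)\ge n$ for all $n\in\mN_0$ implies $[\mathit{id}]$BPN $\subseteq$ $[\phi]$BPN: if every run of a net $N$ has length at most $|P|+|T|$, then it certainly has length at most $\phi(|P|+|T|)$. Together with the inclusion SAFC $\subseteq$ $[\mathit{id}]$BPN noted in Section~\ref{sec:preliminaries}, this gives SAFC $\subseteq$ $[\phi]$BPN. Now the reduction used for Corollary~\ref{cor:safc-pp-hard} (from the $\pp$-complete \dpr\ problem) produces, by construction, an SAFC decision Petri net, hence also a $[\phi]$BPN instance; therefore the very same reduction witnesses $\pp$-hardness of \valn{[\phi]BPN}. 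Likewise, the reduction of Proposition~\ref{prop:safc-np-pp-hard} from \dmap\ outputs SAFC nets and so witnesses $\np^{\pp}$-hardness of \poln{[\phi]BPN}. Combining membership and hardness yields the claimed completeness results.

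I do not expect a genuine obstacle here; the only point that needs a moment's care is that \valn{[\phi]BPN} and \poln{[\phi]BPN} receive the net as an actual element of the class, so that no separate verification of $\phi$-boundedness needs to be carried out inside the reduction — which is immediate, since SAFC nets are unconditionally $[\mathit{id}]$-bounded and hence $[\phi]$-bounded.
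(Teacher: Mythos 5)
Your proposal is correct and follows essentially the same route as the paper: membership comes from Proposition~\ref{prop:pbpn-np-pp}, and hardness is inherited from the SAFC results via the inclusion SAFC $\subseteq$ $[\mathit{id}]$BPN $\subseteq$ $[\phi]$BPN, which the hypothesis $\phi(n)\geq n$ guarantees. The only cosmetic difference is that you unfold the SAFC hardness back to the \dpr\ and \dmap\ reductions, whereas the paper simply cites Corollary~\ref{cor:safc-pp-hard} and Proposition~\ref{prop:safc-np-pp-hard}.
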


\begin{proof}
    As any safe and acyclic free-choice net is an $id$-bounded net, it
    is, in particular, a $\phi$-bounded net with $\phi$ as above, and
    we have \safcval\ $\leq_p$ \valn{[\phi]BPN} and \safcpol\ $\leq_p$
    \poln{[\phi]BPN}.
    Propositions~\ref{prop:pbpn-np-pp} and~\ref{prop:safc-np-pp-hard}
    as well as Corollary~\ref{cor:safc-pp-hard}, therefore show that
    \valn{[\phi]BPN} is $\pp$-complete and \poln{[\phi]BPN} is
    $\np^{\pp}$-complete. \qed
\end{proof}

\subsection{Complexity of free-choice occurrence decision nets}
\label{sec:fcon-complexity}

Now we further restrict SAFC nets to occurrence nets, which leads to a
substantial simplification.
The main reason for this is the absence of backwards-conflicts, which
means that each place is uniquely generated, making it easier to trace
causality, i.e., there is a unique minimal configuration that
generates each place.

\begin{propositionrep}
  \label{prop:fconval-p-np}
  The problem \fconval\ is in $\p$.
  In particular, \fconpol\ is in $\np$.
\end{propositionrep}

\begin{proofsketch}
  Determining the probability of reaching a set of places $Q$ in an
  occurrence net amounts to multiplying the probabilities of the
  transitions on which the places in $Q$ are causally dependent.
  This can be done for every set $Q$ in the support of the reward
  function $R$, which enables us to determine the expected value in
  polynomial time, implying that \fconval\ lies in $\p$.
  By guessing a policy for an occurrence net with controllable
  transitions, we obtain that \fconpol\ lies in $\np$. \qed
\end{proofsketch}

\begin{proof}
  To show that \fconval\ is in P, we explain how to compute the value
  of a free-choice occurrence net without any controllable decisions.
  First we notice that
  \begin{align*}
    \val &= \mathbb{E}\left[V\circ \places\right]
    = \sum_{\mu\in\firingsequences{N}} \mP(\mu)\cdot V(\places(\mu))
    = \sum_{M\subseteq P} \mP(\places=M)\cdot V(M) \\
    &= \sum_{M\subseteq P}\mP(\places=M)\cdot \sum_{Q\subseteq M} R(Q)
    = \sum_{M\subseteq P}\sum_{Q\subseteq M}\big( \mP(\places=M)\cdot
    R(Q) \big) \\
    &= \sum_{Q\subseteq M\subseteq P} \big( \mP(\places=M)\cdot
    R(Q) \big)
    = \sum_{Q\subseteq P} \bigg( \sum_{Q\subseteq M\subseteq P}
    \mP(\places=M) \bigg)\cdot R(Q) \\
    &= \sum_{Q\subseteq P} \mP(\places\supseteq Q) \cdot R(Q)
    = \sum_{Q\in\supp(R)} \mP(\places\supseteq Q) \cdot R(Q)
  \end{align*}
  In particular, since $R$ has polynomial support, it suffices to show
  that we can compute $\mP(\places \supseteq Q)$ in polynomial time
  for any $Q\subseteq P$ which is exactly the probability of reaching
  at least all places $p\in Q$ (not necessarily simultaneously).

  Now, as we are dealing with an occurrence net, we have that reaching
  $p\in P$ is equivalent to firing all transitions on which $p$ is
  causally dependent.
  So let $T' = \{t\in T\mid \exists q\in Q\colon t \prec_N q\}$ be the
  set of transitions that are causes of places in $Q$.
  If two transitions $t,t'\in T'$ are now in conflict (which can be
  checked in polynomial time), the probability of reaching $Q$ is
  zero.
  Otherwise, due to the net being free-choice, we can multiply the
  local firing probabilities of all transitions in $T'$ to obtain the
  probability of reaching $Q$ in polynomial time.

  All in all, this procedure can be used to calculate $\val$ whence
  \fconval\ is in~$\p$.

  That \fconpol\ lies in $\np$ follows from the fact that, given an
  occurrence net with controllable transitions, we can guess a policy
  in polynomial time and then solve the resulting \fconval\ instance
  again in polynomial time. \qed
\end{proof}

\begin{propositionrep}\label{prop:fconpol-np-hard}
  The problem \fconpol\ is $\np$-hard and, therefore, also
  $\np$-complete.
\end{propositionrep}

\begin{proofsketch}
  To show $\np$-hardness we reduce $\threesat$ (the problem of
  deciding the satisfiability of a propositional formula in
  conjunctive normal form with at most three literals per clause) to
  \fconpol.
  Given a formula $\psi$, this is done by constructing a simple
  occurrence net with parallel controllable transitions, one for each
  atomic proposition $\ell$ in $\psi$.
  Then we define a reward function with polynomial support in such a
  way that the expected reward for the constructed net is larger or
  equal than the number of clauses iff the formula has a model.
  The correspondence between the model and the policy is such that
  transitions whose atomic propositions are evaluated as true are
  deactivated. \qed
\end{proofsketch}

\begin{proof}
  We show $\np$-hardness by a polynomial reduction from \threesat\ to
  \fconpol.

  It is well-known that \sat, the problem of deciding whether a given
  propositional formula $\psi$ is satisfiable, is $\np$-complete
  ~\cite{Papadimitriou94}.
  Its variant \threesat\ is still $\np$-complete, where the
  propositional formula $\psi$ is in conjunctive normal form with
  exactly three literals per clause, i.e.,
  $\psi = \bigwedge_{i\in I}(\ell^{i}_1\lor \ell^{i}_2\lor
  \ell^{i}_3)$, where
  $\ell^{i}_j\in\set{x,\neg x\mid x\in \mathcal{X}}$ for a set of
  atomic propositions $\mathcal{X}=\set{x_1,\dots,x_n}$.

  Assume that we are given an \threesat\ instance, i.e., a
  propositional formula
  $\psi = \bigwedge_{i\in I}(\ell_1^i\lor \ell2^i\lor \ell3^i)$ where
  $\ell_j^i\in \{x,\lnot x\mid x\in \mathcal{X}\}$ for a set of
  propositional formulas $\mathcal{X}$.
  Based on $\psi$ we construct a free-choice occurrence SDPN $N$ as
  follows:
  \begin{itemize}
  \item
    $P= \set{p_x,q_x\mid x\in\mathcal{X}}, T=\set{t_x\mid
      x\in\mathcal{X}}$, where $\pre{t_x}(p_x) = 1$ and
    $\pre{t_x}(p) = 0$ if $p\neq p_x$.
    Similarly $\post{t_x}(q_x) = 1$ and $0$ for all other places.
    Furthermore $\Lambda\equiv 1$ (all rates are equal to $1$) and
    $m_0(p_x) = 1$, $m_0(q_x) = 0$ for all $x\in \mathcal{X}$.

    In other words, the net consists of $n=|\mathcal{X}|$ separate
    subnets, each with a single transition $t_x$ that transfers a
    token from an input place $p_x$ into an output place $q_x$.
    This construction can be performed in polynomial time in $n$ and
    obviously results in a free-choice occurrence net.
  \item
    $C=T$, i.e., all transitions are controllable.
  \end{itemize}

  Now, only the reward function, which is central to this result,
  remains to be constructed.
  For this, we note that a place $q_x$ for an atomic proposition
  $x\in\mathcal{X}$ is reached if and only if the transition $t_x$ is
  not deactivated.
  We use this observation to encode the propositional formula $\psi$
  given above into a reward function as a formula on deactivated
  transitions $t_x$.

  The reward function is constructed as follows: For each positive
  literal $\ell_j^i=x\in\mathcal{X}$, we define a reward function as
  \[R_x:\mathcal{P}(P)\rightarrow\mR,Q\mapsto\begin{cases}
      1&\text{if }Q=\emptyset,\\
      -1&\text{if }Q=\set{q_x},\\
      0&\text{otherwise,}
    \end{cases}\]
  and for negative literals $\ell_j^i=\neg x$ for some
  $x\in\mathcal{X}$ as
  \[R_{\neg x}:\mathcal{P}(P)\rightarrow\mR,Q\mapsto\begin{cases}
      1&\text{if }Q=\set{q_x},\\
      0&\text{otherwise.}
    \end{cases}\]
  Intuitively, this interprets $x$ being true as not reaching $q_x$
  which in turn, due to the construction of the underlying net, is
  equivalent to $t_x$ being deactivated.
  This gives us for sets $D\subseteq C=T$ of deactivated transitions
  and a literal $\ell_j^i\in\set{x,\neg x\mid x\in\mathcal{X}}$ the
  corresponding value function
  \[
    \valfunc_{\ell_j^i}(M)=\sum_{Q\subseteq M}R_x(Q)=
    \begin{cases}
      1&\text{if $\ell_j^i=x$ and $q_x\notin M$ or
        $\ell_j^i=\neg x$ and $q_x\in M$,}\\
      0&\text{otherwise,}
    \end{cases}
  \]
  and, thus, using the interpretation that $x$ is true iff $t_x\in D$,
  we obtain the expected reward (expectation of the random variable
  $V_{\ell_j^i}^D$ for the constant policy $D$)
  \[\val_{\ell_j^i}^D=\begin{cases}
      1&\text{if $\ell_j^i$ is true,}\\
      0&\text{if $\ell_j^i$ is false.}
    \end{cases}\]
  In order to extend this construction to clauses
  $c^i=\ell_1^i\lor \ell_2^i\lor \ell_3^i$, we define a disjunction
  operator as follows:
  \[(R_1\lor R_2)(Q)\coloneqq R_1(Q)+R_2(Q)-\sum_{Q_1\cup Q_2=Q}
    R_1(Q_1)\cdot R_2(Q_2).\]
  This gives us for the corresponding value function
  \begin{align*}
    (\valfunc_1\lor \valfunc_2)(M)&=\sum_{Q\subseteq M}(R_1\lor R_2)(Q)\\
    &=\sum_{Q\subseteq M}(R_1(Q)+R_2(Q)-\sum_{Q_1\cup Q_2=Q}R_1(Q_1)\cdot R_2(Q_2))\\
    &=\sum_{Q\subseteq M}R_1(Q)+\sum_{Q\subseteq M}R_2(Q)-\sum_{Q\subseteq M}\sum_{Q_1\cup Q_2=Q}R_1(Q_1)\cdot R_2(Q_2)\\
    &=\valfunc_1(M)+\valfunc_2(M)-(\sum_{Q\subseteq M}R_1(Q))(\sum_{Q\subseteq M}R_2(Q))\\
    &=\valfunc_1(M)+\valfunc_2(M)-\valfunc_1(M)\cdot \valfunc_2(M)
  \end{align*}
  which for binary rewards (i.e., rewards whose corresponding value
  function has a value space of $\{0,1\}$) can be written as the
  maximum of $\valfunc_1(M)$ and $\valfunc_2(M)$, yielding
  \[
    \val_{c^i}^D\coloneqq\val_{\ell_1^i}^D\lor \val_{\ell_2^i}^D\lor
    \val_{\ell_3^i}^D=\max\{\val_{\ell_1^i}^D,\val_{\ell_2^i}^D,\val_{\ell_3^i}^D\}
    =
    \begin{cases}
      1&\text{if $c^i$ is true,}\\
      0&\text{if $c^i$ is false.}
    \end{cases}
  \]

  To prove that the disjunction $R_1\lor R_2$ can be computed in
  polynomial time, we emphasize that, if the supports of $R_1$ and
  $R_2$ are polynomial in size (which they clearly are), the sum
  $\sum_{Q_1\cup Q_2=Q}R_1(Q_1)\cdot R_2(Q_2)$ only adds a polynomial
  number of non-zero elements.
  In fact, viewing $R_1$ and $R_2$ as dictionaries or partial
  functions taking on only their non-zero values, we can construct
  $R_1\lor R_2$ by simply iterating over the supports of $R_1$ and
  $R_2$, adding entries in $R_1\lor R_2$ if necessary.

  Finally, with a construction of a reward function for clauses in
  mind, we define the reward function of the Petri net as
  \[R(Q)\coloneqq\sum_{i\in I}R_{c^i}(Q),\] where $R_{c^i}$ is the
  reward function constructed for clause $c^i$, whence
  \[\val^D=\sum_{i\in I}\val^D_{c^i}.\]

  This construction gives us a bijection
  $\Phi:\funcset{\mathcal{X}}{\{0,1\}}\rightarrow\mathcal{P}(C),$
  mapping assignments $\mathcal{A}:\mathcal{X}\rightarrow\{0,1\}$ to
  sets
  $D_{\mathcal{A}}=\Phi(\mathcal{A})\coloneqq\set{t_x\in T\mid
    \mathcal{A}(x)=1}$ of deactivated transitions, satisfying
  \[\mathcal{A}\text{ is a model for } \psi =
    \bigwedge_{i\in I}c^i \iff
    \val^{D_{\mathcal{A}}}\geq|I| \iff \val^{D_{\mathcal{A}}} >
    |I|-1.\] This shows that the propositional formula
  $\bigwedge_{i\in I}c^i$ is satisfiable if and only if there exists a
  policy $D\subseteq C$ such that $\val^{D_{\mathcal{A}}} > |I|-1$,
  proving the reduction $\threesat \leq_p \fconpol$ and, thus, the
  $\np$-hardness of \fconpol.

  The result that \fconpol\ lies in $\np$ has been shown in
  Proposition~\ref{prop:fconval-p-np}.  \qed
\end{proof}

\section{An algorithm for SAFC decision nets}\label{sec:numerical}

Here we present a partial-order algorithm for solving the policy
problem for SAFC (decision) nets.
It takes such a net and converts it into a formula for an SMT solver.
We will assume the following, which is also a requirement for
occurrence nets:
\begin{assumption}\label{ass:initial_marking_input}
  For all places $p\in m_0$:
  $\pre{p}\coloneqq\{t\in T\mid p\in\post{t}\}=\emptyset$.
\end{assumption}
This is a mild assumption since any transition $t\in{^\bullet}p$ for a
place $p\in m_0$ in a safe and acyclic net has to be dead as all
places can only be marked once.

We are now using the notion of (branching) cells, introduced in
Section~\ref{sec:preliminaries}: The fact that the SDPN is safe,
acyclic and free-choice ensures that choices in different cells are
taken independently from another, so that the probability of a
configuration $\tau\in\validsubconfs{N}$ under a specific deactivation
pattern $D\subseteq C$ is given by
\[\mP^D(\transitions\supseteq\tau)
  =\prod_{t\in\tau}\frac{\chi_{T\setminus D}(t)\cdot\Lambda(t)}{
  \sum_{t\in\mC_t\setminus D} \Lambda(t)} =
  \begin{cases}
    0 & \mbox{if } \tau\cap D \neq \emptyset \\
    \prod_{t\in\tau} \frac{\Lambda(t)}{\sum_{t'\in\mC_t\setminus
        D}\Lambda(t')} & \mbox{otherwise}
  \end{cases}
\]
where $\chi_{T\setminus D}$ is the characteristic function of
$T\setminus D$ and $0/0$ is defined to yield $0$.

The general idea of the algorithm is to rewrite the reward function
$R:\mathcal{P}(P)\rightarrow\mR$ on sets of places to a reward
function on sets of transitions that yields a compact formula for
computing the value $\val^D$ for specific sets $D$ (i.e., solving
\safcval), that we can also use to solve the policy problem \safcpol\
via an SMT solver.

We first need some definitions:

\begin{definition}
  For a maximal configuration $\tau\in\validconfs{N_D}$ for a given
  deactivation pattern $D\subseteq C$, we define its set of prefixes
  in $\validsubconfs{N_D}$ to be
  \[\pref^D(\tau)\coloneqq\set{\tau'\in\validsubconfs{N_D}
  \mid\tau'\subseteq\tau}\]
  which corresponds to all configurations that can lead to the
  configuration $\tau$.
  We also define the set of extensions of a configuration
  $\tau\in\validsubconfs{N_D}$ in $\validconfs{N_D}$, which
  corresponds to all maximal configurations that $\tau$ can lead to,
  as
  \[\exte^D(\tau)\coloneqq\set{\tau'\in\validconfs{N_D}
  \mid\tau\subseteq\tau'}.\]
\end{definition}

\begin{definition}
  \label{def:consistent-reward-fct}
  Let $N$ be a Petri net with a reward function
  $R\colon \mathcal{P}(P)\to\mR$ on places and a deactivation pattern
  $D$.
  A reward function $[R]\colon \mathcal{P}(T)\to\mR$ on transitions is
  called \emph{consistent} with $R$ if for each firing sequence
  $\mu\in \firingsequences{N_D}$:
  \[ V(\places(\mu))=\sum_{Q\subseteq\places(\mu)}R(Q)
    =\sum_{\tau\in\pref^D{(\transitions(\mu))}}[R](\tau). \]
\end{definition}

This gives us the following alternative method to determine the
expected value for a net (with given policy $D$):

\begin{lemmarep}\label{lem:consistent-reward-fct}
  Using the setting of Definition~\ref{def:consistent-reward-fct},
  whenever $[R]$ is consistent with the reward function $R$ and
  $[R](\tau)=0$ for all $\tau\not\in \validsubconfs{N}$, the
  expected value for the net $N$ under the constant policy $D$ is:
  \[\val^D =
    \sum_{\tau\subseteq T}\mP^D(\transitions\supseteq\tau)\cdot
    [R](\tau).\]
\end{lemmarep}

\begin{proof}
  \begin{align*}
    \val^D & = \sum_{\mu\in \firingsequences{N_D}} \mP^D(\mu)
    \sum_{Q\subseteq \places(\mu)} R(Q) \\
    & =
    \sum_{\mu\in\firingsequences{N_D}} \mP^D(\mu)
    \sum_{\tau'\in\pref^D(\transitions(\mu))} [R](\tau') \\
    & =
    \sum_{\tau\in\validconfs{N_D}}\mP^D(\transitions=\tau)
    \sum_{\tau'\in\pref^D(\tau)}[R](\tau') \\
    & = \sum_{\tau\in\validconfs{N_D}}\mP^D(\transitions=\tau)
    \sum_{\tau\in\exte^D(\tau')}[R](\tau') \\
    & = \sum_{\tau\in\validconfs{N_D}}\sum_{\tau\in\exte^D(\tau')}
    \mP^D(\transitions=\tau)\cdot [R](\tau') \\
    & = \sum_{\tau'\in\validsubconfs{N_D}}
    \sum_{\tau\in\exte^D(\tau')} \mP^D(\transitions = \tau)\cdot
    [R](\tau') \\
    & = \sum_{\tau'\in\validsubconfs{N_D}}
    \mP^D(\transitions\supseteq\tau')\cdot [R](\tau'),\\
    & = \sum_{\tau'\in\mathcal{P}(T)}
    \mP^D(\transitions\supseteq\tau')\cdot [R](\tau'),
  \end{align*}
  where we use that $\tau\in\pref^D(\tau')$ if and only if
  $\tau'\in\exte^D(\tau)$ for (maximal) configurations
  $\tau'\in\validconfs{N_D}$ and $\tau\in\validsubconfs{N_D}$.
  Furthermore, we rely on the fact that
  $\mP^D(\transitions\supseteq\tau') =
  \sum_{\tau\in\exte^D(\tau')}\mP^D(\transitions=\tau)$ for
  $\tau'\in\validsubconfs{N_D}$ and
  $\mP^D(\transitions\supseteq\tau')=0$ for
  $\tau'\in\validsubconfs{N}\setminus\validsubconfs{N_D}$.  \qed
\end{proof}

Note that $[R](\transitions(\mu))\coloneqq V(\places(\mu))$ for
$\mu\in\firingsequences{N}$ fulfills these properties
trivially.
However, rewarding only maximal configurations can lead, already in
occurrence nets with some concurrency, to an exponential support
(w.r.t. the size of the net and its reward function).
The goal of our algorithm is to instead make use of the sum over the
configurations by rewarding reached places immediately in the
corresponding configuration, generating a function $[R]$ that fulfills
the properties above and whose support remains of polynomial size in occurrence
nets.
Hence, we have some form of partial-order technique, in particular
concurrent transitions receive the reward independently of each other
(if the reward is not dependent on firing both of them).

The rewriting process is performed by iteratively \enquote*{removing
  maximal cells} and resembles a form of backward-search algorithm.
First of all, $\preceq^*_N$ (the reflexive and transitive closure of
causality $\prec_N$) induces a partial order $\sqsubseteq$ on the set
$\bc{N}$ of cells via
\[\forall \mC,\mC'\in \bc{N}:\mC\sqsubseteq\mC' \Longleftrightarrow
  \exists t\in\mC,t'\in\mC':t\preceq_N^* t'.\]

Let all cells $(\mC_1,\dots,\mC_m)$ with $m=|\bc{N}|$ be ordered
conforming to $\sqsubseteq$, then we let $N_k$ denote the Petri net
consisting of places
$P_k\coloneqq P\setminus(\bigcup_{l>k}\post{\mC_l})\cup(\bigcup_{l\leq
  k}\post{\mC_l})$
(where the union with the post-sets is only necessary if
backward-conflicts exist) and transitions
$T_k\coloneqq\bigcup_{l\leq k}\mC_l$, the remaining components being
accordingly restricted (note that the initial marking $m_0$ is still
contained in $P_k$ by Assumption~\ref{ass:initial_marking_input}).
In particular, it holds that $N=N_m$ as well as $T_0=\emptyset$
and $P_0=\{p\in P\mid\forall t\in T:p\notin\post{t}\}$.

Let $N$ be a Petri net with deactivation pattern $D$,
$\mu\in\firingsequences{N_D}$ be a firing sequence and
$k\in\set{1,\dots,|\bc{N}|}$.
We write
$\transitions_{\leq k}(\mu)\coloneqq\transitions(\mu)\cap T_k$ for the
transitions in the first $k$ cells and
$\transitions_{>k}(\mu)\coloneqq\transitions(\mu)\setminus T_k$ for
the transitions in the cells after the $k$-th cell as well as
$\places_{\leq k}(\mu)\coloneqq
m_0\cup(\bigcup_{t\in\transitions_{\leq k}(\mu)}\post{t})$
for the places reached after all transitions in the first $k$ cells
were fired.

\medskip

We will now construct auxiliary reward functions $R[k]$ that take
pairs of a set of places ($U\subseteq P_k$) and of transitions
($V\subseteq T\setminus T_k$) as input and return a reward.
Intuitively, $R[k](U,V)$ corresponds to the reward for reaching all
places in $U$ and then firing all transitions in $V$ afterwards where
reaching $U$ ensures that all transitions in $V$ can fire.

Starting with the reward function
$R[m]:\mathcal{P}(P)\times\set{\emptyset}\rightarrow\mR,
(M,\emptyset)\mapsto R(M)$, we iteratively compute reward functions
$R[k]:\mathcal{P}(P_k)\times\mathcal{P}(T\setminus T_k)\rightarrow\mR$
for $k\geq 0$:
\[R[k](U,V)\coloneqq
  \begin{cases}
    R[k+1](U,V) & \text{if }\mC_{k+1}\cap V=\emptyset \\
    \sum\limits_{\substack{U'\cap \post{t}\neq \emptyset \\
        U = U'\backslash \post{t}\cup \pre{t}}}
    R[k+1](U',V\backslash\{t\}) &
    \text{if }\mC_{k+1}\cap V=\set{t} \\
    0 & \text{otherwise}
  \end{cases}
\]

The first case thus describes a scenario where no transition from the
$(k+1)$-th cell is involved while the second case sums up all rewards
that are reached when some transition $t$ in the cell has to be fired
(that is, all rewards that are given when some of the places in
$\post{t}$ are reached).
We give non-zero values only to sets $V$ that contain at most one
transition of each cell and $U$ has to contain the full pre-set of $t$
of the transition $t$ removed from $V$.
This is done in order to ensure that in subsequent steps those
transitions that generate $\pre{t}$ are in the set to which we assign
the reward.
This guarantees that $V$ is always a configuration of $N$ after
marking $U$ while $R[k](U,V)$ is zero if the transitions in $V$ cannot
be fired after $U$.
In this way, rewards are ultimately only given to configurations and
to no other sets of transitions, enabling us later to compute the
probabilities of those configurations.

And if $N$ is an occurrence net, every entry in $R[k+1]$ produces at
most one entry in $R[k]$, meaning that $\supp(R[k])\le\supp(R[k+1])$.

Now we can prove that the value of a firing sequence is invariant
when rewriting the auxiliary reward functions as described above.

\begin{propositionrep}
  The auxiliary reward functions satisfy
  \[
    \sum_{V\subseteq\transitions_{>k}(\mu)}
    \sum_{U\subseteq\places_{\leq k}(\mu)}R[k](U,V) =
    \sum_{V\subseteq\transitions_{>k+1}(\mu)}
    \sum_{U\subseteq\places_{\leq k+1}(\mu)}R[k+1](U,V),
  \]
  for $k\in \set{0,\dots,|\bc{N}|-1}$.

  Hence, for every
  $\mu\in\firingsequences{N}$
  \[V(\places(\mu))
    =\sum_{U\subseteq\places(\mu)}R[|\bc{N}|](U,\emptyset)
    =\sum_{V\subseteq\transitions_{>k}(\mu)}
    \sum_{U\subseteq\places_{\leq k}(\mu)}R[k](U,V),
  \]
  which means that we obtain a reward function on transitions
  consistent with $R$ by defining
  $[R]:\mathcal{P}(T)\rightarrow\mR$ as
  \[[R](V)\coloneqq\sum_{U\subseteq m_0}R[0](U,V). \]
\end{propositionrep}

\begin{proof}
  Note that, due to safety of the net, we have
  \[\places_{\leq k+1}(\mu)=\begin{cases}
      \places_{\leq k}(\mu)\text{ }\dot{\cup}\text{ }\post{t}
      & \text{if }t\in\transitions(\mu)\cap\mC_{k+1}\neq\emptyset,\\
      \places_{\leq k}(\mu)&\text{if
      }\transitions(\mu)\cap\mC_{k+1}=\emptyset.
    \end{cases}\]
  As such, if $\transitions(\mu)\cap\mC_{k+1}=\emptyset$, i.e., no
  transition from the $(k+1)$-th cell fired in the sequence $\mu$, we
  have
  \[\sum_{V\subseteq\transitions_{>k}(\mu)}\sum_{U\subseteq
      \places_{\leq k}(\mu)}R[k](U,V)
    =\sum_{V\subseteq\transitions_{>k+1}(\mu)}
      \sum_{U\subseteq\places_{\leq k+1}(\mu)}R[k+1](U,V)\]
  If, on the other hand, $t\in\transitions(\mu)\cap\mC_{k+1}$ is the
  unique transition from $\mC_{k+1}$ that fired in $\mu$, we have
  \begin{align*}
    &\sum_{V\subseteq\transitions_{>k}(\mu)}
      \sum_{U\subseteq\places_{\leq k}(\mu)}R[k](U,V)\\
    &=\sum_{V\subseteq\transitions_{>k}(\mu) \setminus\set{t}}
      \sum_{U\subseteq\places_{\leq k}(\mu)}
      \left(R[k](U,V)+R[k](U,V\cup\set{t})\right)\\
    &=\sum_{V\subseteq\transitions_{>k+1}(\mu)}
      \sum_{U\subseteq\places_{\leq k}(\mu)}
      \Bigg(R[k+1](U,V) \mathop{+}
      \sum\limits_{
        \substack{U'\cap \post{t}\neq\emptyset \\
        U = U'\backslash \post{t}\cup \pre{t}}
      } R[k+1](U',V) \Bigg)\\
    &=\sum_{V\subseteq\transitions_{>k+1}(\mu)}
      \Bigg(\sum_{U\subseteq\places_{\leq k}(\mu)}
      R[k+1](U,V) \mathop{+} \sum\limits_{
        \substack{U\cap \post{t}\neq\emptyset \\
        U\backslash\post{t}\cup\pre{t}\subseteq\places_{\leq k}(\mu)}
      } R[k+1](U,V) \Bigg)\\
    &=\sum_{V\subseteq\transitions_{>k+1}(\mu)}
      \Bigg(\sum_{U\subseteq\places_{\leq k}(\mu)}
      R[k+1](U,V) \mathop{+}\sum\limits_{
        \substack{U\subseteq \places_{\leq k+1}(\mu) \\ U\cap
        \post{t}\neq \emptyset}
      } R[k+1](U,V)\Bigg)\\
    &=\sum_{V\subseteq\transitions_{>k+1}(\mu)}
      \sum_{U\subseteq\places_{\leq k}(\mu)}
      \left(R[k+1](U,V)+
      \sum_{\emptyset\neq O\subseteq\post{t}}R[k+1](U\cup O,V)
      \right)\\
    &=\sum_{V\subseteq\transitions_{>k+1}(\mu)}
      \sum_{U\subseteq\places_{\leq k}(\mu)
      \text{ }\dot{\cup}\text{ }\post{t}}R[k+1](U,V)\\
    &=\sum_{V\subseteq\transitions_{>k+1}(\mu)}
      \sum_{U\subseteq\places_{\leq k+1}(\mu)}R[k+1](U,V).
  \end{align*} \qed
\end{proof}

This leads to the following corollary:

\begin{corollaryrep}
  Given a net $N$ and a deactivation pattern $D$, we can calculate the
  expected value
  \[\val^D=\mE[ \valfunc\circ\places]
    =\sum_{\tau\subseteq T}\prod_{t\in\tau}\frac{\chi_{T\setminus
        D}(t)\cdot\Lambda(t)}{\sum_{t'\in\mC_t\setminus
        D}\Lambda(t')}[R](\tau). \]
\end{corollaryrep}

\begin{proof}
  Lemma~\ref{lem:consistent-reward-fct} gives us
  \[\val^D=\sum_{\tau\subseteq T}
    \mP^D(\transitions\supseteq\tau) \cdot [R](\tau)\] as $[R]$ is
  consistent with the reward function $R$.

  We also observe that
  \[ \mP^D(\transitions\supseteq\tau) =
    \prod_{t\in\tau}\frac{\Lambda(t)}{\sum_{t'\in\mC_t\setminus
        D}\Lambda(t')} \]
  for $\tau\in\mathcal{C}(N_D)$ since the probability of a
  configuration can be determined by multiplying the probabilities of
  all its transitions, where the probability of a transition is its
  normalized rate, where the normalization is performed wrt.\ to all
  other deactivated transitions in the cell $\mathbb{C}_t$ of $t$.

  Hence, the equality above can be extended to:
  \begin{align*}
    \val^D&=\sum_{\tau\in\mathcal{C}(N_D)}\prod_{t\in\tau}
      \frac{\Lambda(t)}{\sum_{t'\in\mC_t\setminus D}
        \Lambda(t')}\cdot [R](\tau)\\
    &=\sum_{\tau\in\mathcal{C}(N)}\prod_{t\in\tau}
      \frac{\chi_{T\setminus D}(t)\Lambda(t)}{
        \sum_{t'\in\mC_t\setminus D}\Lambda(t')}[R](\tau)\\
    &=\sum_{\tau\subseteq T}\prod_{t\in\tau}\frac{\chi_{T\setminus
        D}(t)\Lambda(t)}{\sum_{t'\in\mC_t\setminus
        D}\Lambda(t')}[R](\tau)
  \end{align*}
  where we use that $[R](\tau)=0$ for all
  $\tau\in\mathcal{P}(T)\setminus\mathcal{C}(N)$. \qed
\end{proof}

Checking whether some deactivation pattern $D$ exists such that this
term is greater than some bound $p$ can be checked by an SMT solver.

Note that, in contrast to the naive definition of $[R]$ only on
maximal configurations, this algorithm constructs a reward
function on configurations that, for occurrence nets, has a support
with at most $\supp(R)$ elements.
For arbitrary SAFC nets, the support of $[R]$ might be of exponential
size.

\begin{adjustbox}{valign=C,raise=\strutheight,minipage={1\linewidth}}
  \begin{wrapfigure}[18]{r}{0.35\linewidth} 
  \vspace*{-0.5cm}
  \centering
  \begin{tikzpicture}[node distance=0.9cm,>=stealth',bend angle=45,auto,every label/.style={align=left}]
  \tikzstyle{place}=[circle,thick,draw=black,fill=white,minimum size=5mm]
  \tikzstyle{transition}=[rectangle,thick,draw=black!75,
                fill=black!5,minimum size=4mm]
  \begin{scope}
    \node [place, tokens=1] (p1) [label=90:$p_1$]{};
    \node [place, tokens=1] (p2) [label=90:$p_2$, right of=p1, xshift=9mm, yshift=0mm]{};
    \node [place] (p3) [label=180:$p_3$, below of=p1, xshift=4.5mm, yshift=-9mm]{};
    \node [place] (p4) [label=0:$p_4$, below of=p2, xshift=-4.5mm, yshift=-9mm]{};
    \node [place] (p7) [label=270:$p_7$, below of=p4, fill=red!50, yshift=-9mm]{};
    \node [place] (p6) [label=270:$p_6$, left of=p7, fill=yellow!50, yshift=0mm]{};
    \node [place] (p5) [label=270:$p_5$, left of=p6, fill=yellow!50, yshift=0mm]{};

    \node [transition,double] (t1) [label=135:$t_1$, below of=p1, xshift=-4.5mm, yshift=0mm] {}
    edge [pre] (p1)
    edge [post] (p5);

    \node [transition] (t2) [label={[label distance=-0mm]85:$t_2$}, below of=p1, xshift=4.5mm, yshift=0mm] {}
    edge [pre] (p1)
    edge [post] (p3);

    \node [transition] (t3) [label={[label distance=-0mm]95:$t_3$}, below of=p2, xshift=-4.5mm, yshift=0mm] {}
    edge [pre] (p2)
    edge [post] (p4);

    \node [transition] (t4) [label=45:$t_4$, below of=p2, xshift=4.5mm, yshift=0mm] {}
    edge [pre] (p2);

    \node [transition,double] (t5) [label=180:$t_5$, below of=p3, xshift=0mm, yshift=0mm] {}
    edge [pre] (p3)
    edge [pre] (p4)
    edge [post] (p5)
    edge [post] (p6);

    \node [transition,double] (t6) [label=0:$t_6$, below of=p4, xshift=0mm, yshift=0mm] {}
    edge [pre] (p3)
    edge [pre] (p4)
    edge [post] (p6)
    edge [post] (p7);
  \end{scope}
  \end{tikzpicture}
  \vspace*{-0.8cm}
  \caption{A SAFC decision net. The goal is to mark one or both of the
    yellow places at some point without ever marking the red
    place.}\label{fig:SAFC}
\end{wrapfigure}
\strut{}
\vspace*{00cm} 
\begin{example}
  We take the Petri net from Figure~\ref{fig:SAFC} as an example
  (where all transitions have firing rate 1). The reward function $R$
  is given in the table below. By using the inclusion-exclusion
  principle we ensure that one obtains reward~$1$ if one or both of
  the yellow places are marked at some point without ever marking the
  red place.

  The optimal strategy is obviously to only deactivate the one
  transition ($t_6$) which would mark the red place.

  The net has three cells $\mC_1=\{t_1,t_2\},\mC_2=\{t_3,t_4\},$ and
  $\mC_3=\{t_5,t_6\}$ where $\mC_1,\mC_2\sqsubseteq\mC_3$.  As such,
  $R[3] = R$ with $R$ below and obtain $R[2]$ (due to
  $P_2=\{p_1,p_2,p_3,p_4,p_5\}$).  In the next step, we get (by removing
  $t_3$ and $t_4$) $R[1]$ and finally $R[0]$, from which we can
  derive $[R]$, the reward function on transitions, as described
  above.

  This allows us to write the value for a set $D$ of deactivated
  transitions as follows (where if both $t_5,t_6\in D$, we assume the
  last quotient to be zero)
  \[\val^D=\frac{\chi_{T\setminus D}(t_1)}{\chi_{T\setminus D}(t_1)+1}
    +\frac{1}{\chi_{T\setminus
        D}(t_1)+1}\frac{1}{2}\frac{\chi_{T\setminus
        D}(t_5)}{\chi_{T\setminus D}(t_5)+\chi_{T\setminus D}(t_6)}\]

  \medskip

  \hrule
  \vspace{-0.4cm}
  \begin{align*}
    R=&[\{p_5\}:1,\{p_6\}:1,\{p_5,p_6\}:-1,\{p_5,p_7\}:-1,
    \{p_6,p_7\}:-1,\{p_5,p_6,p_7\}:1] \\
    R[2]=&[(\{p_5\},\emptyset):1,(\{p_3,p_4\},\{t_5\}):1,
    (\{p_3,p_4,p_5\},\{t_6\}):-1] \\
    R[1]=&[(\{p_5\},\emptyset):1,(\{p_2,p_3\},\{t_3,t_5\}):1,(\{p_2,p_3,p_5\},\{t_3,t_6\}):-1]
    \\
    R[0]=&[(\{p_1\},\{t_1\}):1,(\{p_1,p_2\},\{t_2,t_3,t_5\}):1] \\
    [R]=&[\{t_1\}:1,\{t_2,t_3,t_5\}:1]
  \end{align*}
  \vspace{-0.5cm}
  \hrule

  \medskip

  Writing $x_i\coloneqq\chi_{T\setminus D}(t_i)\in\{0,1\},i=1,5,6,$
  the resulting inequality
  \[\frac{x_1}{x_1+1}+\frac{1}{2}\frac{1}{x_1+1}\frac{x_5}{x_5+x_6} >
    p\]
  can now be solved by an SMT solver with Boolean variables $x_1,x_5,$
  and $x_6$ (i.e., $x_1,x_5,x_6\in\{0,1\}$).

\end{example}
\end{adjustbox}

\paragraph*{Runtime results:}
To test the performance of our algorithm, we performed runtime tests
on specific families of simple stochastic decision Petri nets,
focussing on the impact of concurrency and backward-conflicts on its
runtime.
All families are based on a series of simple branching cells
each containing two transitions, one controllable and one
non-controllable, reliant on one place as a precondition.
Each non-controllable transition marks a place to which we randomly
assigned a reward according to a normal distribution (in particular,
it can be negative).
The families differ in how these cells are connected, testing
performance with concurrency, backward-conflicts, and sequential
problems,
respectively (for a detailed overview of the experiments see
\short{\arxive}\full{Appendix~\ref{app:runtime-results}}).

Rewriting the reward function (and, thus, solving the value problem)
produced expected results: Runtimes on nets with many
backward-conflicts are exponential while the rewriting of reward
functions of occurrence nets exhibits a much better performance,
reflecting its polynomial complexity.

To solve the policy problem based on the rewritten reward function, we
compared the performances of naively calculating the values of each
possible deactivation pattern with using an SMT solver (Microsoft's
z3, see also~\cite{z3}).
Tests showed a clear impact on the representation of the control
variables (describing the deactivation set $D$) as booleans or as
integers bounded by $0$ and $1$ with the latter showing a better
performance.
Furthermore, the runtime of solving the rewritten formula with an SMT
solver showed a high variance on random reward values.
Nonetheless, the results show the clear benefit of using the SMT
solver on the rewritten formula in scenarios with a high amount of
concurrency, with much faster runtimes than the brute force
approach.
In scenarios without concurrency, this benefit vanishes, and in
scenarios with many backward-conflicts, the brute force approach is
considerably faster than solving the rewritten function with an SMT
solver.
The latter effect can be explained by the rewritten reward function
$[R]$ having an exponential support in this scenario.

All in all, the runtime results reflect the well-known drawbacks and
benefits of most partial-order techniques, excelling in scenarios with
high concurrency while having a reduced performance if there are
backward- and self-conflicts.

\begin{toappendix}
  \section{Runtime results}\label{app:runtime-results}
  We performed runtime tests on three families of SAFC-SDPNs,
  each with a simple generation procedure with randomly chosen rewards
  and with a clear focus on either concurrency, high degree of self-
  and backward-conflicts, and absence of both, respectively.

  The first family $\mathcal{N}_1$ consists of Petri nets with $n$
  concurrent simple branching cells, each with one initially marked
  place on which two transitions depend.  One of these transitions is
  not controllable and leads to a place with a random reward sampled
  according to the standard normal distribution.  The other transition
  is controllable but marks no place.  Formally, a net of this family
  is thus given by $P=\{p_1,\dots,p_{2n}\},T=(t_1,\dots,t_{2n})$ with
  $\pre{t_{2k-1}}=\pre{t_{2k}}=\{p_{2k-1}\}$ and
  $\post{t_{2k-1}}=\{p_{2k}\},\post{t_{2k}}=\emptyset$ for
  $k=1,\dots,n$, $m_0=\{p_{2k-1}\mid k=1,\dots,n\}$, $\Lambda\equiv1$,
  $C=\{t_{2k}\mid k=1,\dots,n\}$, and $R$ only non-zero for
  $\{p_{2k}\}$ (randomly generated according to standard normal
  distribution) for $k=1,\dots,n$ (see also
  Figure~\ref{fig:concurrency}).  Generating these nets with random
  rewards for each (post-)place as well as a random bound $p$ for the
  policy problem (also sampled according to the standard normal
  distribution) allows for a variety of nets and problems (some of
  which might not be solvable) to test our algorithms with a focus on
  its performance on highly concurrent nets.  While the optimal
  strategy for each of these nets is to deactivate any transition that
  is in a cell with a positively rewarded place and activate all
  others, the random generation ensures that this optimal strategy
  results in different optimal sets $D$ of deactivated transitions.
  Note, however, that the corresponding MDP of these nets will have an
  exponential size due to the $2^n\cdot n!$ possible firing sequences.

  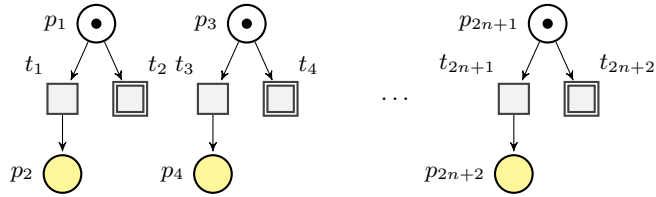
\begin{figure} [h]
  \centering
  \begin{tikzpicture}[node distance=1cm,>=stealth',bend angle=45,auto,every label/.style={align=left}]
  \tikzstyle{place}=[circle,thick,draw=black,fill=white,minimum size=5mm]
  \tikzstyle{transition}=[rectangle,thick,draw=black!75,
                fill=black!5,minimum size=4mm]
  \begin{scope}
    \node [place, tokens=1] (p1) [label=180:$p_1$]{};
    \node [place, tokens=1] (p3) [label=180:$p_3$, right of=p1, xshift=10mm, yshift=0mm]{};
    \node [place, tokens=1] (p5) [label=180:$p_{2n+1}$, right of=p3, xshift=30mm, yshift=0mm]{};

    \node [transition] (t1) [label=120:$t_1$, below of=p1, xshift=-4.5mm, yshift=0mm] {}
    edge [pre] (p1);

    \node [transition,double] (t2) [label=60:$t_2$, below of=p1, xshift=4.5mm, yshift=0mm] {}
    edge [pre] (p1);

    \node [transition] (t3) [label=120:$t_3$, below of=p3, xshift=-4.5mm, yshift=0mm] {}
    edge [pre] (p3);

    \node [transition,double] (t4) [label=60:$t_4$, below of=p3, xshift=4.5mm, yshift=0mm] {}
    edge [pre] (p3);

    \node [transition] (t5) [label=120:$t_{2n+1}$, below of=p5, xshift=-4.5mm, yshift=0mm] {}
    edge [pre] (p5);

    \node [transition,double] (t6) [label=60:$t_{2n+2}$, below of=p5, xshift=4.5mm, yshift=0mm] {}
    edge [pre] (p5);

    \node at ($(t4)!.5!(t5)$) {\ldots};

    \node [place] (p2) [label=180:$p_2$, below of=t1, fill=yellow!50]{}
    edge [pre] (t1);
    \node [place] (p4) [label=180:$p_4$, below of=t3, fill=yellow!50]{}
    edge [pre] (t3);
    \node [place] (p6) [label=180:$p_{2n+2}$, below of=t5, fill=yellow!50]{}
    edge [pre] (t5);

  \end{scope}
  \end{tikzpicture}
  \caption{The family $\mathcal{N}_1$ of free-choice occurrence SDPNs with a high amount
    concurrency.  Yellow places yellow are being rewarded (or
    punished).}\label{fig:concurrency}
\end{figure}

  The second family $\mathcal{N}_2$ consists, similar to $\mathcal{N}_1$,
  of $n$ branching cells.
  However, the post-place of one cell is set to be the initial place
  of the next cell, resulting in a sequential line of branching cells
  of the same type as above.
  Formally, a net of this family is thus given by
  $P=\{p_1,\dots,p_{n+1}\},T=(t_1,\dots,t_{2n})$ with
  $\pre{t_{2k-1}}=\pre{t_{2k}}=\{p_k\}$ and
  $\post{t_{2k-1}}=\{p_{k+1}\},\post{t_{2k}}=\emptyset$ for
  $k=1,\dots,n$, $m_0=\{p_1\}$, $\Lambda\equiv1$,
  $C=\{t_{2k}\mid k=1,\dots,n\}$, and $R$ only non-zero for
  $\{p_{k}\}$ (randomly generated according to standard normal
  distribution) for $k=2,\dots,n+1$ (see also Figure~\ref{fig:sequential}).
  Finding an optimal strategy for these nets is a bit more intricate
  as in $\mathcal{N}_1$ since firing the controllable transition in
  any of these cells results in no reward for all subsequent cells.

  \begin{figure} [h]
  \centering
  \begin{tikzpicture}[node distance=1cm,>=stealth',bend angle=45,auto,every label/.style={align=left}]
  \tikzstyle{place}=[circle,thick,draw=black,fill=white,minimum size=5mm]
  \tikzstyle{transition}=[rectangle,thick,draw=black!75,
                fill=black!5,minimum size=4mm]
  \begin{scope}
    \node [place, tokens=1] (p1) [label=90:$p_1$]{};

    \node [transition] (t1) [label=90:$t_1$, right of=p1, xshift=0mm, yshift=5mm] {}
    edge [pre] (p1);
    \node [transition, double] (t2) [label=270:$t_2$, below of=t1, xshift=0mm, yshift=0mm] {}
    edge [pre] (p1);

    \node [place] (p2) [label=90:$p_2$, right of=t1, xshift=0mm, yshift=-5mm, fill=yellow!50]{}
    edge [pre] (t1);

    \node [transition] (t3) [label=90:$t_3$, right of=p2, xshift=0mm, yshift=5mm] {}
    edge [pre] (p2);
    \node [transition, double] (t4) [label=270:$t_4$, below of=t3, xshift=0mm, yshift=0mm] {}
    edge [pre] (p2);

    \node [place] (p3) [label=90:$p_3$, right of=t3, xshift=0mm, yshift=-5mm, fill=yellow!50]{}
    edge [pre] (t3);

    \node [place] (p4) [label=90:$p_{n+1}$, right of=p3, xshift=20mm, yshift=0mm, fill=yellow!50]{};

    \node at ($(p3)!.5!(p4)$) {\ldots};

    \node [transition] (t5) [label=90:$t_{2n+1}$, right of=p4, xshift=0mm, yshift=5mm] {}
    edge [pre] (p4);
    \node [transition, double] (t6) [label=270:$t_{2n+2}$, below of=t5, xshift=0mm, yshift=0mm] {}
    edge [pre] (p4);

    \node [place] (p5) [label=90:$p_{n+2}$, right of=t5, xshift=0mm, yshift=-5mm, fill=yellow!50]{}
    edge [pre] (t5);

  \end{scope}
  \end{tikzpicture}
  \caption{The family $\mathcal{N}_2$ family of free-choice occurrence SDPNs. Yellow
  places are being rewarded (or punished).}\label{fig:sequential}
\end{figure}
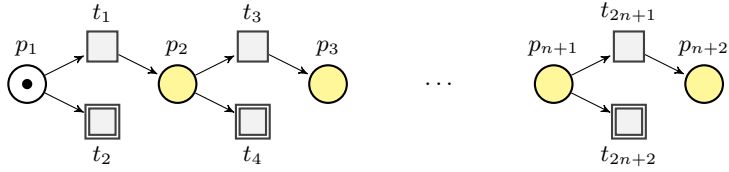

  The third and final family $\mathcal{N}_3$ also consists of $n$
  branching cells as the ones above.
  However, both transitions of one cell mark the initial place of the
  next cell (while, again, only the non-controllable transition marks
  the rewarded place).
  This ensures that all cells are fired (as in the concurrent family
  $\mathcal{N}_1$) but in sequence and, most importantly, with all but
  the first initial place having backward-conflicts.
  Formally, a net of this family is thus given by
  $P=\{p_1,\dots,p_{2n}\},T=(t_1,\dots,t_{2n})$ with
  $\pre{t_{2k-1}}=\pre{t_{2k}}=\{p_{2k-1}\}$ and
  $\post{t_{2k-1}}=\{p_{2k},p_{2k+1}\},\post{t_{2k}}=\{p_{2k+1}\}$
  for $k=1,\dots,n$, $m_0=\{p_1\}$, $\Lambda\equiv1$,
  $C=\{t_{2k}\mid k=1,\dots,n\}$, and $R$ only non-zero for
  $\{p_{2k}\}$ (randomly generated according to standard normal
  distribution) for $k=1,\dots,n$ (see also Figure~\ref{fig:backward-conflicts}).
  While an optimal strategy for this family of nets is the same as
  for the first one, deactivating exactly all controllable transitions
  in cells with positively rewarded places, the backward-conflicts
  result in an exponentially sized rewritten reward function $[R]$
  on the transitions with each of the $2^n$ possible configurations
  being rewarded.

  \begin{figure} [h]
  \centering
  \begin{tikzpicture}[node distance=1cm,>=stealth',bend angle=45,auto,every label/.style={align=left}]
  \tikzstyle{place}=[circle,thick,draw=black,fill=white,minimum size=5mm]
  \tikzstyle{transition}=[rectangle,thick,draw=black!75,
                fill=black!5,minimum size=4mm]
  \begin{scope}
    \node [place, tokens=1] (p1) [label=90:$p_1$]{};

    \node [transition] (t1) [label=90:$t_1$, right of=p1, xshift=0mm, yshift=5mm] {}
    edge [pre] (p1);
    \node [transition, double] (t2) [label=270:$t_2$, below of=t1, xshift=0mm, yshift=0mm] {}
    edge [pre] (p1);

    \node [place] (p2) [label=90:$p_2$, right of=t1, xshift=0mm, yshift=5mm, fill=yellow!50]{}
    edge [pre] (t1);
    \node [place] (p3) [label=90:$p_3$, right of=t1, xshift=0mm, yshift=-5mm]{}
    edge [pre] (t1)
    edge [pre] (t2);

    \node [transition] (t3) [label=90:$t_3$, right of=p3, xshift=0mm, yshift=5mm] {}
    edge [pre] (p3);
    \node [transition, double] (t4) [label=270:$t_4$, below of=t3, xshift=0mm, yshift=0mm] {}
    edge [pre] (p3);

    \node [place] (p4) [label=90:$p_4$, right of=t3, xshift=0mm, yshift=5mm, fill=yellow!50]{}
    edge [pre] (t3);
    \node [place] (p5) [label=90:$p_5$, right of=t3, xshift=0mm, yshift=-5mm]{}
    edge [pre] (t3)
    edge [pre] (t4);

    \node [place] (p6) [label=90:$p_{2n+1}$, right of=p5, xshift=20mm, yshift=0mm]{};

    \node at ($(p5)!.5!(p6)$) {\ldots};

    \node [transition] (t5) [label=90:$t_{2n+1}$, right of=p6, xshift=0mm, yshift=5mm] {}
    edge [pre] (p6);
    \node [transition, double] (t6) [label=270:$t_{2n+2}$, below of=t5, xshift=0mm, yshift=0mm] {}
    edge [pre] (p6);

    \node [place] (p7) [label=90:$p_{2n+2}$, right of=t5, xshift=0mm, yshift=5mm, fill=yellow!50]{}
    edge [pre] (t5);

  \end{scope}
  \end{tikzpicture}
  \caption{The family $\mathcal{N}_3$ of safe and acyclic free-choice SDPNs with
  backward-/self-conflicts. Yellow places are being rewarded (or punished).}\label{fig:backward-conflicts}
\end{figure}
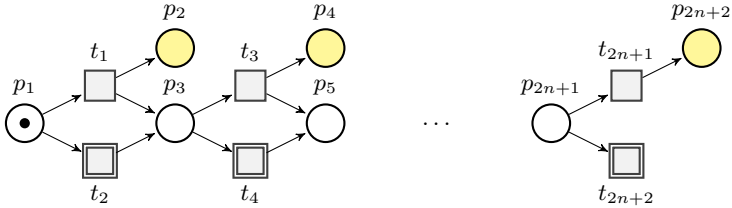

  For each of these families, we performed runtime tests on 25
  randomly generated nets (i.e., reward values) for each $n$ (as long
  as it was feasible).
  The tables and graphs below show the runtimes for the rewriting
  algorithm (i.e., solving the value problem), and solving the policy
  problem (again with randomly generated bound according to the
  standard normal distribution) both by iterating over all possible
  deactivation sets (brute force) or by using the z3 SMT solver.

  The runtimes of solving the value problem or rewriting the reward
  function using the algorithm as described in
  Section~\ref{sec:numerical} are as expected (see
  Figure~\ref{fig:rewriting}): For the family $\mathcal{N}_3$
  containing many backward-conflicts (in particular, a family of
  non-occurrence nets), the runtimes rise exponentially with the
  amount of branching cells (see also
  Table~\ref{tab:backward-rewriting}) while the algorithm performs
  much better on both families of free-choice occurrence nets
  $\mathcal{N}_1$ and $\mathcal{N}_2$ and, in particular, independent
  of the amount of concurrency present (in contrast to the expected
  time of solving the value problem using the corresponding MDP which
  would grow exponentially for family $\mathcal{N}_1$).  Furthermore,
  it is noteworthy that the performance of the rewriting algorithm
  mainly depends on the net structure, not on the randomly generated
  reward values which is reflected by the relatively small variance
  (see Tables~\ref{tab:concurrency-rewriting},
  \ref{tab:sequential-rewriting} and~\ref{tab:backward-rewriting}).

  \begin{table}
    \centering
    \begin{tabular}{r || r | r | r | r | }
      Size & Median & Mean & St.Dev. & 90\% quantile \\
      \hline\hline
      1   & 0     & 0.40      & 0.49    & 1     \\
      2   & 2     & 2.04      & 0.20    & 2     \\
      3   & 6     & 6.12      & 0.32    & 7     \\
      4   & 14    & 14.28     & 0.53    & 15    \\
      5   & 28    & 28.24     & 2.10    & 31    \\
      6   & 49    & 48.72     & 1.76    & 50    \\
      7   & 79    & 79.16     & 1.93    & 80    \\
      8   & 121   & 121.68    & 2.28    & 123   \\
      9   & 179   & 179.24    & 2.90    & 183   \\
      10  & 247   & 247.28    & 8.25    & 257   \\
      11  & 338   & 339.40    & 7.98    & 350   \\
      12  & 464   & 461.88    & 9.62    & 474   \\
      13  & 604   & 600.76    & 12.09   & 614   \\
      14  & 764   & 761.60    & 14.35   & 777   \\
      15  & 971   & 970.32    & 7.59    & 978   \\
      16  & 1196  & 1190.56   & 16.57   & 1208  \\
      17  & 1476  & 1471.68   & 21.47   & 1497  \\
      18  & 1799  & 1795.56   & 20.33   & 1817  \\
      19  & 2136  & 2145.96   & 46.15   & 2213  \\
      20  & 2589  & 2594.56   & 38.30   & 2640  \\
      21  & 3072  & 3069.04   & 30.38   & 3100  \\
      22  & 3632  & 3621.68   & 38.97   & 3665  \\
      23  & 4238  & 4230.68   & 30.44   & 4257  \\
      24  & 4956  & 4937.44   & 51.32   & 4985  \\
      25  & 5703  & 5698.40   & 53.57   & 5738  \\ \hline
    \end{tabular}
    \medskip
    \caption{Runtime results of the reward rewriting algorithm for family $\mathcal{N}_1$ in ms.}
    \label{tab:concurrency-rewriting}
  \end{table}
  \begin{table}
    \centering
    \begin{tabular}{r || r | r | r | r | }
      Size & Median & Mean & St.Dev. & 90\% quantile \\
      \hline\hline
      1   & 0     & 0.36      & 0.48    & 1     \\
      2   & 2     & 2.00      & 0.00    & 2     \\
      3   & 6     & 5.76      & 0.43    & 6     \\
      4   & 13    & 13.00     & 0.28    & 13    \\
      5   & 25    & 25.16     & 1.62    & 26    \\
      6   & 42    & 42.56     & 1.58    & 43    \\
      7   & 69    & 69.20     & 2.68    & 71    \\
      8   & 104   & 103.44    & 2.77    & 106   \\
      9   & 149   & 149.20    & 4.24    & 154   \\
      10  & 210   & 209.48    & 4.60    & 216   \\
      11  & 284   & 283.56    & 4.05    & 289   \\
      12  & 378   & 377.12    & 6.36    & 383   \\
      13  & 486   & 483.96    & 10.17   & 494   \\
      14  & 612   & 610.20    & 9.62    & 621   \\
      15  & 757   & 759.28    & 23.27   & 770   \\
      16  & 931   & 927.96    & 9.15    & 938   \\
      17  & 1133  & 1145.88   & 57.85   & 1189  \\
      18  & 1362  & 1362.92   & 27.37   & 1402  \\
      19  & 1632  & 1621.60   & 19.12   & 1639  \\
      20  & 1922  & 1916.00   & 18.51   & 1936  \\
      21  & 2260  & 2252.72   & 20.29   & 2269  \\
      22  & 2625  & 2618.68   & 28.96   & 2651  \\
      23  & 3140  & 3104.12   & 359.93  & 3299  \\
      24  & 3544  & 3535.00   & 44.97   & 3582  \\
      25  & 4061  & 4065.24   & 43.59   & 4118  \\ \hline
    \end{tabular}
     \medskip
   \caption{Runtime results of the reward rewriting algorithm for family $\mathcal{N}_2$ in ms.}
    \label{tab:sequential-rewriting}
  \end{table}
  \begin{table}
    \centering
    \begin{tabular}{r || r | r | r | r | }
      Size & Median & Mean & St.Dev. & 90\% quantile \\
      \hline\hline
      1   & 0     & 0.44      & 0.50    & 1     \\
      2   & 2     & 2.32      & 0.55    & 3     \\
      3   & 8     & 7.92      & 0.27    & 8     \\
      4   & 22    & 22.44     & 1.42    & 23    \\
      5   & 52    & 52.24     & 1.58    & 53    \\
      6   & 121   & 135.32    & 28.71   & 190   \\
      7   & 263   & 262.16    & 5.41    & 271   \\
      8   & 583   & 587.48    & 20.51   & 614   \\
      9   & 1281  & 1283.52   & 21.44   & 1314  \\
      10  & 2772  & 2762.00   & 29.15   & 2790  \\
      11  & 6018  & 6018.56   & 43.54   & 6069  \\
      12  & 13193 & 14144.36  & 1821.25 & 17375 \\
      13  & 41182 & 41155.08  & 478.37  & 41672 \\
      14  & 90170 & 90159.60  & 795.00  & 91114 \\ \hline
    \end{tabular}
    \medskip
    \caption{Runtime results of the reward rewriting algorithm for family $\mathcal{N}_3$ in ms.}
    \label{tab:backward-rewriting}
  \end{table}

  Turning our attention to the performance of solving the policy
  problem based on the rewritten reward function, we notice first and
  foremost that, in all three families, using an SMT solver produces
  highly varying runtimes.  This can be seen in both the standard
  deviations as seen in
  Tables~\ref{tab:concurrency-solving}, \ref{tab:sequential-solving},
  and~\ref{tab:backward-solving}, as well as in the boxplot diagrams
  in Figures~\ref{fig:smt-concurrency}, \ref{fig:smt-sequential},
  and~\ref{fig:smt-backward}.

  Boxplots are often used graphical representation to represent
  statistical results where, here, the \enquote*{box} describes the
  quartiles (i.e., 25\%-quantile to 75\%-quantile) with the yellow bar
  signifying the median.
  The whiskers have a maximal size of 1.5-times the length of the box
  (being smaller if no other values are present) and outliers (i.e.,
  values outside of the maximal whisker length) are marked as circles.

  Taking a look at the results for the highly concurrent family
  $\mathcal{N}_1$, we notice immediately that while using the brute
  force approach of comparing the values for all deactivation patterns
  produces exponential runtimes as expected, using the SMT solver has
  a much lower runtime where, despite growing variance, even the
  worst-case runtime seems to be polynomial apart from rare outliers
  (which are still much more performant than the brute force approach;
  see Table~\ref{tab:concurrency-solving} and
  Figure~\ref{fig:smt-concurrency}).  While this example is restricted
  to the simplest of concurrent SDPNs, this clearly reflects the
  strength of partial-order techniques to deal with concurrency where
  solving the corresponding MDP would necessarily produce exponential
  runtimes.

  The results for family $\mathcal{N}_2$ which lacks all concurrency
  but is still an occurrence net shows the extremely high variance in
  the runtime of the SMT solver (see
  Table~\ref{tab:sequential-solving} or
  Figure~\ref{fig:smt-sequential}).  As mentioned in the description
  of the family $\mathcal{N}_2$ above, finding an optimal deactivation
  pattern in this scenario is much more intricate than for the other
  two families and while, in the best case, the SMT solver is much
  faster than the brute force approach, most notably the unusually low
  median for $n=13$, it can also take a multitude longer in the
  \enquote*{more difficult} scenarios.

  Finally, while the results for family $\mathcal{N}_1$ showed the
  benefits of partial-order techniques, the results for family
  $\mathcal{N}_3$ reflect their drawbacks.  Note that the rewritten
  reward function $[R]$ on configurations already has an exponential
  support (w.r.t. $n$) which not only leads to a longer computation
  time of the value, explaining the much higher runtime of the brute
  force approach.  The exponential support also results in a much more
  complex SMT expression, the effect of which being that the z3 solver
  can only find answers efficiently for very small $n$ as can be seen
  in Table~\ref{tab:backward-solving} and
  Figure~\ref{fig:smt-backward}.

  \begin{table}
    \centering
    \begin{tabular}{ r||r | r | r | r || r | r | r | r |}
      & \multicolumn{4}{c||}{Brute Force} & \multicolumn{4}{c}{SMT Solver}\\
      Size & Median & Mean & St.Dev. & 90\% & Median & Mean & St. Dev. & 90\% \\
      \hline\hline
      1   & 0     & 0.00      & 0.00    & 0     & 4   & 4.20    & 0.40    & 5\\
      2   & 0     & 0.64      & 3.14    & 0     & 5   & 4.68    & 0.73    & 5\\
      3   & 0     & 0.60      & 2.94    & 0     & 5   & 5.00    & 0.89    & 6\\
      4   & 0     & 1.52      & 4.44    & 4     & 5   & 5.92    & 2.28    & 8\\
      5   & 15    & 8.88      & 7.73    & 16    & 5   & 6.56    & 2.35    & 10\\
      6   & 15    & 12.96     & 5.67    & 16    & 6   & 7.88    & 2.83    & 12\\
      7   & 31    & 30.28     & 4.58    & 33    & 8   & 9.60    & 5.36    & 15\\
      8   & 63    & 65.24     & 6.03    & 78    & 9   & 12.04   & 7.66    & 21\\
      9   & 149   & 148.76    & 8.02    & 157   & 8   & 10.52   & 5.95    & 18\\
      10  & 337   & 336.96    & 7.32    & 344   & 6   & 9.28    & 5.67    & 20\\
      11  & 780   & 774.00    & 16.96   & 791   & 10  & 16.12   & 13.01   & 29\\
      12  & 1701  & 1695.00   & 20.39   & 1718  & 9   & 13.20   & 10.71   & 24\\
      13  & 3759  & 3748.76   & 43.43   & 3796  & 8   & 18.36   & 18.04   & 49\\
      14  & 8264  & 8250.76   & 79.50   & 8331  & 7   & 14.12   & 12.10   & 31\\
      15  &       &           &         &       & 10  & 15.20   & 15.17   & 38\\
      16  &       &           &         &       & 10  & 18.84   & 20.87   & 45\\
      17  &       &           &         &       & 10  & 21.88   & 21.46   & 54\\
      18  &       &           &         &       & 15  & 32.00   & 33.81   & 85\\
      19  &       &           &         &       & 9   & 28.24   & 29.26   & 82\\
      20  &       &           &         &       & 22  & 39.68   & 38.46   & 99\\
      21  &       &           &         &       & 10  & 98.20   & 283.86  & 204\\
      22  &       &           &         &       & 15  & 90.64   & 241.91  & 157\\
      23  &       &           &         &       & 15  & 24.36   & 21.66   & 53\\
      24  &       &           &         &       & 21  & 227.88  & 602.75  & 230\\
      25  &       &           &         &       & 11  & 39.08   & 61.17   & 99\\
      \hline
    \end{tabular}
    \medskip
    \caption{Runtime results of solving the policy problem for family $\mathcal{N}_1$ based on its rewritten reward function in ms.}
    \label{tab:concurrency-solving}
  \end{table}
  \begin{table}
    \centering
    \begin{tabular}{ r||r | r | r | r || r | r | r | r |}
      & \multicolumn{4}{c||}{Brute Force} & \multicolumn{4}{c}{SMT Solver}\\
      Size & Median & Mean & St. Dev. & 90\% & Median & Mean & St. Dev. & 90\% \\
      \hline\hline
      1   & 0     & 0.04      & 0.20    & 0     & 3     & 3.24      & 0.81      & 4\\
      2   & 0     & 0.24      & 0.43    & 1     & 4     & 3.88      & 0.52      & 4\\
      3   & 1     & 0.64      & 0.48    & 1     & 4     & 4.68      & 0.93      & 6\\
      4   & 2     & 2.00      & 0.00    & 2     & 6     & 8.72      & 5.35      & 18\\
      5   & 6     & 5.60      & 0.49    & 6     & 10    & 14.92     & 12.64     & 33\\
      6   & 14    & 14.72     & 1.22    & 15    & 37    & 52.48     & 50.65     & 114\\
      7   & 35    & 35.44     & 1.13    & 36    & 68    & 166.88    & 187.83    & 477\\
      8   & 86    & 85.92     & 1.41    & 87    & 104   & 335.16    & 476.87    & 1207\\
      9   & 199   & 198.88    & 5.57    & 206   & 518   & 3064.64   & 4734.68   & 9352\\
      10  & 605   & 602.56    & 87.71   & 709   & 1506  & 6250.28   & 7470.27   & 14787\\
      11  & 1440  & 1415.28   & 101.87  & 1562  & 1291  & 10406.56  & 13460.78  & 32612\\
      12  & 2725  & 2824.12   & 248.15  & 3182  & 24194 & 28207.96  & 34708.02  & 62627\\
      13  & 6392  & 6301.04   & 360.13  & 6703  & 31    & 27871.40  & 48568.88  & 80423\\
      14  & 14295 & 14354.60  & 507.15  & 14898 & 23752 & 80045.08  & 110595.76 & 285170\\
      15  & 32547 & 32540.48  & 923.48  & 33695 & 94357 & 242712.04 & 386185.05 & 880569\\ \hline
    \end{tabular}
    \medskip
    \caption{Runtime results of solving the policy problem for family $\mathcal{N}_2$ based on its rewritten reward function in ms.}
    \label{tab:sequential-solving}
  \end{table}
  \begin{table}
    \centering
    \begin{tabular}{ r||r | r | r | r || r | r | r | r |}
      & \multicolumn{4}{c||}{Brute Force} & \multicolumn{4}{c}{SMT Solver}\\
      Size & Median & Mean & St. Dev. & 90\% & Median & Mean & St. Dev. & 90\% \\
      \hline\hline
      1   & 0     & 0.64      & 3.14    & 0     & 7   & 7.15    & 1.46    & 8\\
      2   & 0     & 1.96      & 5.19    & 10    & 9   & 9.95    & 5.13    & 18\\
      3   & 15    & 9.24      & 7.66    & 16    & 33  & 61.4    & 61.10   & 163\\
      4   & 32    & 36.40     & 6.98    & 48    & 237 & 724.25  & 915.76  & 1683\\
      5   & 160   & 160.64    & 11.89   & 175   &&&& \\
      6   & 634   & 637.80    & 16.86   & 653   &&&& \\
      7   & 2675  & 2677.64   & 17.46   & 2702  &&&& \\
      8   & 11441 & 11453.40  & 68.32   & 11543 &&&& \\
      9   & 49029 & 52364.28  & 4530.39 & 59670 &&&& \\ \hline
    \end{tabular}
    \medskip
    \caption{Runtime results of solving the policy problem for family $\mathcal{N}_{3}$ based on its rewritten reward function in ms.}
    \label{tab:backward-solving}
  \end{table}

  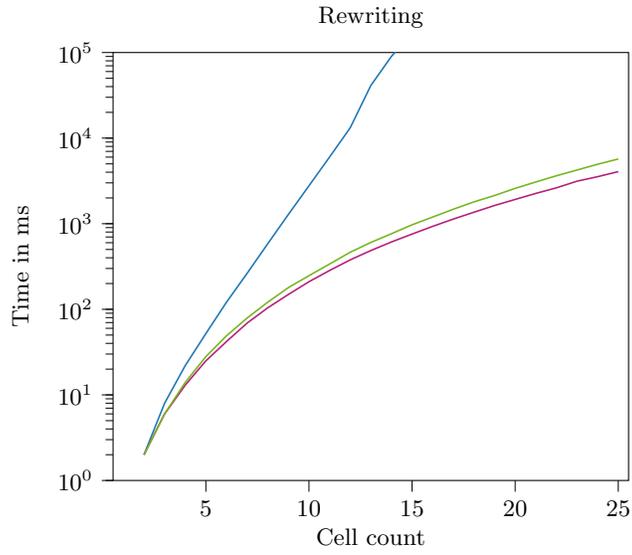
\begin{figure}
   \centering
\begin{tikzpicture}

\definecolor{darkgray176}{RGB}{176,176,176}
\definecolor{blue}{RGB}{31,119,180}
\definecolor{red}{RGB}{180,31,119}
\definecolor{green}{RGB}{119,180,31}

\begin{axis}[
log basis y={10},
tick align=outside,
tick pos=left,
title={Rewriting},
x grid style={darkgray176},
xmin=0.5, xmax=25.5,
xlabel={Cell count},
xtick style={color=black},
y grid style={darkgray176},
ymin=1, ymax=100000,
ymode=log,
ylabel={Time in ms},
ytick style={color=black}
]
\addplot [semithick, blue]
table {%
1 0
2 2
3 8
4 22
5 52
6 121
7 263
8 583
9 1281
10 2772
11 6018
12 13193
13 41182
14 90170
15 170000
};
\addplot [semithick, red]
table {%
0 0
1 0
2 2
3 6
4 13
5 25
6 42
7 69
8 104
9 149
10 210
11 284
12 378
13 486
14 612
15 757
16 931
17 1133
18 1362
19 1632
20 1922
21 2260
22 2625
23 3140
24 3544
25 4061
};
\addplot [semithick, green]
table {%
0 0
1 0
2 2
3 6
4 14
5 28
6 49
7 79
8 121
9 179
10 247
11 338
12 464
13 604
14 764
15 971
16 1196
17 1476
18 1799
19 2136
20 2589
21 3072
22 3632
23 4238
24 4956
25 5703
};
\end{axis}

\end{tikzpicture}
   \caption{Runtime of the rewriting of the reward function on
     $\mathcal{N}_1$ (green), $\mathcal{N}_2$ (red), and
     $\mathcal{N}_3$ (blue).}
   \label{fig:rewriting}
  \end{figure}

  \begin{figure}
   \centering
\begin{tikzpicture}

\definecolor{darkgray176}{RGB}{176,176,176}
\definecolor{blue}{RGB}{31,119,180}
\definecolor{red}{RGB}{180,31,119}
\definecolor{green}{RGB}{119,180,31}

\begin{axis}[
log basis y={10},
tick align=outside,
tick pos=left,
title={Brute Force},
xlabel={Cell count},
x grid style={darkgray176},
xmin=-0.75, xmax=15.75,
xtick style={color=black},
y grid style={darkgray176},
ymin=1, ymax=5000,
ymode=log,
ylabel={Time in ms},
ytick style={color=black}
]
\addplot [semithick, green]
table {%
0 0
1 0
2 0
3 0
4 0
5 15
6 15
7 31
8 63
9 149
10 337
11 780
12 1701
13 3759
14 8264
};
\addplot [semithick, blue]
table {%
0 0
1 0
2 0
3 15
4 32
5 160
6 634
7 2675
8 11441
9 49029
};
\addplot [semithick, red]
table {%
0 0
1 0
2 0
3 1
4 2
5 6
6 14
7 35
8 86
9 199
10 605
11 1440
12 2725
13 6392
14 14295
15 32547
};
\end{axis}

\end{tikzpicture}
   \caption{Runtime of the solving the policy problem by brute force
     (iterating over all possible deactivation patterns) on
     $\mathcal{N}_1$ (green), $\mathcal{N}_2$ (red), and
     $\mathcal{N}_3$ (blue).}
   \label{fig:brute-force}
  \end{figure}
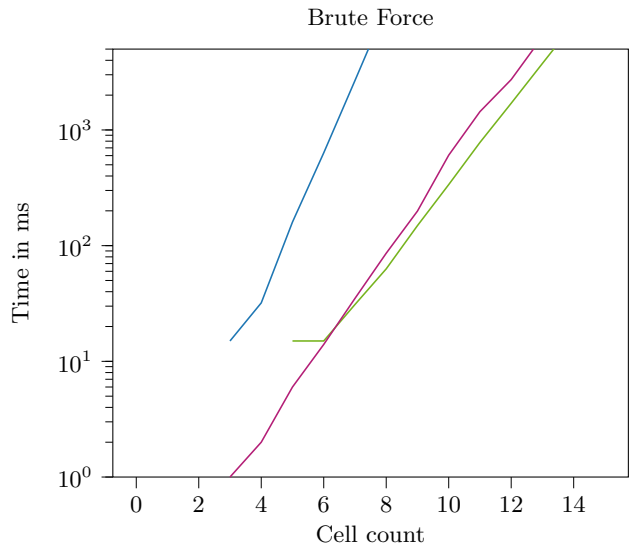

  \begin{figure}
   \centering
   \includegraphics[width=\textwidth]{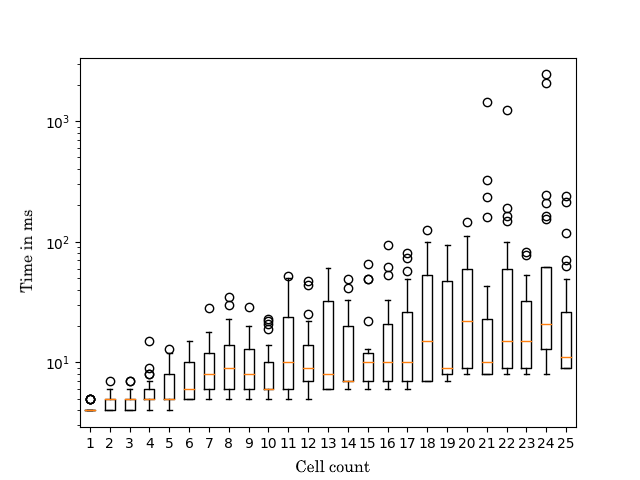}
   \caption{Runtime of the solving the policy problem using the z3
    SMT solver on the rewritten reward function on $\mathcal{N}_1$.}
   \label{fig:smt-concurrency}
  \end{figure}

  \begin{figure}
   \centering
   \includegraphics[width=\textwidth]{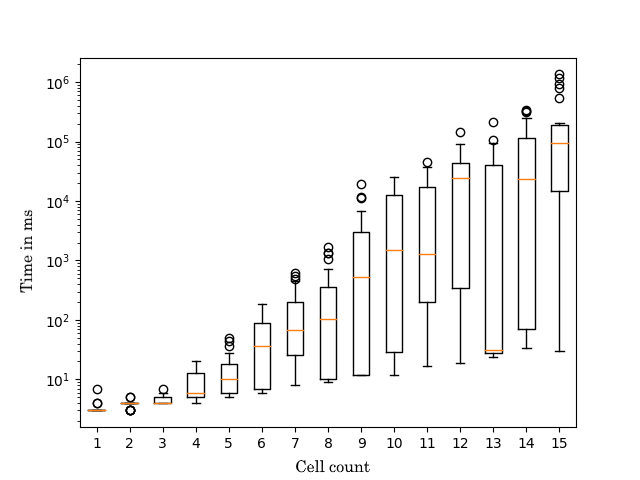}
   \caption{Runtime of the solving the policy problem using the z3
    SMT solver on the rewritten reward function on $\mathcal{N}_2$.}
   \label{fig:smt-sequential}
  \end{figure}

  \begin{figure}
   \centering
   \includegraphics[width=\textwidth]{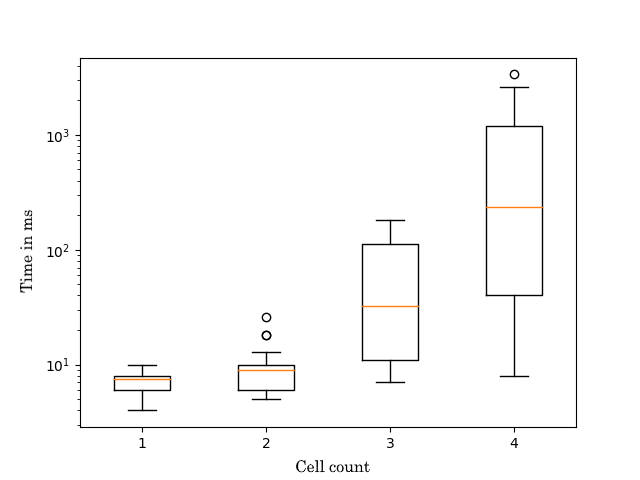}
   \caption{Runtime of the solving the policy problem using the z3
    SMT solver on the rewritten reward function on $\mathcal{N}_3$.}
   \label{fig:smt-backward}
  \end{figure}
\end{toappendix}

\section{Conclusion}\label{sec:conclusion}

We have introduced the formalism of stochastic decision Petri nets and
defined its semantics via an encoding into Markov decision
processes.
It turns out that finding optimal policies for a model that
incorporates concurrency, probability and decisions, is a non-trivial
task.
It is computationally hard even for restricted classes of nets and
constant policies.
However, we remark that workflow nets are often SAFC nets and a
constant deactivation policy is not unreasonable, given that one
cannot monitor and control a system all the time.
We have also presented an algorithm for the studied subproblem, which
we view as a step towards efficient partial-order techniques for
stochastic (decision) Petri nets.

\paragraph*{Related Work:}

Petri nets~\cite{r:petri-nets} are a well-known and widely studied
model of concurrent systems based on consumption and generation of
resources.
Several subclasses of Petri nets have received attention, among them
free-choice nets~\cite{DeselEsparza95} and occurrence nets, where the
latter are obtained by unfolding Petri nets for verification
purposes~\cite{eh:unfoldings-book}.

Our notion of stochastic decision Petri nets is an extension of the
well-known model of stochastic Petri nets~\cite{Marsan88}.
This model and a variety of generalizations are used for the
quantitative analyses of concurrent systems.
Stochastic Petri nets come in a continuous-time and in a discrete-time
variant, as treated in this paper.
That is, using the terminology of~\cite{Tolver2016}, we consider the
corresponding Markov chain of jumps, while in the continuous-time
case, firing rates determine not only the probability which transition
fires next, but also how fast a transition will fire dependent on the
marking.
These firing times are exponentially distributed, a distribution that
is memoryless, meaning that the probability of a transition firing is
independent on its waiting time.\todo{B: remove this sentence if we need space.}

Our approach was motivated by extending the probabilistic model of
stochastic Petri nets by a mechanism for decision making, as in the
extension of Markov chains~\cite{Tolver2016} to Markov decision
processes (MDPs)~\cite{bellman1957markovian}.
Since the size of a stochastic Petri net might be exponentially
smaller than the Markov chain that it generates, the challenge is to
provide efficient methods for determining optimal strategies,
preferably partial order methods that avoid the explicit
representation of concurrent events in an interleaving semantics.
Our complexity results show that the quest for such methods is
non-trivial, but some results can be achieved by suitably restricting
the considered Petri nets.

A different approach to include decision-making in Petri nets was\todo[color=green!50]{R2: If possible, discuss how the models compare to each other}
described by Beccuti et al.\ as Markov decision Petri
nets~\cite{Beccuti07,badsbf:markov-decision-petri-nets}.
Their approach, based on a notion of well-formed Petri nets,
distinguishes explicitly between a probabilistic part and a
non-deterministic part of the Petri net as well as a set of components
that control the transitions.
They use such nets to model concurrent systems and obtain experimental
results.
In a similar vein, graph transformation systems -- another model of
concurrent systems into which Petri nets can be encoded -- have been
extended to probabilistic graph transformation systems, including
decisions in the MDP sense~\cite{kg:prob-gts}.
The decision is to choose a set of rules with the same left-hand side
graph and a match, then a randomized choice is made among these rules.
Again, the focus is on modelling and to our knowledge neither of these
approaches provides complexity results.

Another problem related to the ones considered in this paper is the
computation of the expected execution time of a timed probabilistic
Petri net as described in~\cite{MeyerEO19}.
The authors treated timed probabilistic workflow nets (TPWNs) which
assumes that every transition requires a fixed duration to fire,
separate from the firing probability.
They showed that approximating the expected time of a sound SAFC TPWN
is $\hashp$-hard which is the functional complexity class
corresponding to $\pp$.
While the problems studied in their paper and in our paper are
different, the fact that both papers consider SAFC nets and obtain a
$\hashp$- respectively $\pp$-hardness result seems interesting and
deserves further study.

Our complexity results are closely connected with the analysis of
Bayesian networks~\cite{Pearl00}, which are a well-known graphical
formalism to represent conditional dependencies among random variables
and can be employed to reason about and compactly represent
probability distributions.  The close relation between Bayesian
networks and occurrence nets was observed in~\cite{BruniMM20}, which
gives a Bayesian network semantics for occurrence nets, based on the
notion of branching cells
from~\cite{ab:true-concurrency-probabilistic} that were introduced in
order to reconcile partial order methods -- such as unfoldings -- and
probability theory.  We took inspiration from this reduction in
Proposition~\ref{fig:bn-safc-reduction} and another of our reductions
(Proposition~\ref{prop:safc-np-pp-hard}) -- encoding Petri nets as
Bayesian networks -- is a transformation going into the other
direction, from Bayesian networks to SAFC nets.

In our own
work~\cite{chhk:update-ce-nets-bayesian,bchk:uncertainty-reasoning} we
considered a technique for uncertainty reasoning, combining both Petri
nets and Bayesian networks, albeit in a rather different
setting.
There we considered Petri nets with uncertainty, where one has only
probabilistic knowledge about the current marking of the net.
In this setting Bayesian networks are used to compactly store this
probabilistic knowledge and the main challenge is to update
respectively rewrite Bayesian networks representing such knowledge
whenever the Petri net fires.

\paragraph*{Future Work:}

As future work we plan to consider more general classes of Petri nets,
lifting some of the restrictions imposed in this paper.
In particular, it would be interesting to extend the method from
Section~\ref{sec:numerical} to nets that allow infinite runs.
Furthermore, dropping the free-choice requirement is desirable, but
problematic.
While the notion of branching cells does exist for stochastic nets
(see~\cite{ab:true-concurrency-probabilistic,BruniMM20}), it does not
accurately reflect the semantics of stochastic nets (see e.g.\ the
discussion on confusion in the introduction of~\cite{BruniMM20}).

As already detailed in the introduction, partial-order methods for
analyzing probabilistic systems, modelled for instance by stochastic
Petri nets, are in general poorly understood.
Hence, it would already be a major result to obtain scalable methods
for computing payoffs values for a stochastic net without decisions,
but with a high degree of concurrency.

In addition we plan to use the encoding of Petri nets into Bayesian
networks from~\cite{BruniMM20} (on which we based the proof of
Proposition~\ref{prop:safc-bn-reduction}) and exploit it to analyze
such nets by using dedicated methods for reasoning on Bayesian
networks.

Naturally, it would be interesting to extend analysis techniques in
such a way that they can deal with uncertainty and derive policies
when we have only partial knowledge, as in partially observable Markov
decision process (POMDPs), first studied
in~\cite{a:mdp-incomplete-state}.
However, this seems complex, given the fact that determining the best
strategy for POMDPs is a non-trivial problem in
itself~\cite{c:algorithms-mdps}.

Similarly, it is interesting to introduce a notion of time as in
continuous-time Markov chains~\cite{Tolver2016}, enabling us to
compute expected execution times as in~\cite{MeyerEO19}.

Last but not least, our complexity analysis and algorithm focus on
finding optimal constant policies.
A natural step would be to instead consider the problem of finding
optimal positional strategies as defined in Chapter~\ref{sec:sdpns},
which is the focus of most works on Markov decision processes (see for
example~\cite{c:algorithms-mdps}).

\bibliographystyle{plain}
\bibliography{references}

\begin{thebibliography}{10}

\bibitem{ab:true-concurrency-probabilistic}
Samy Abbes and Albert Benveniste.
\newblock True-concurrency probabilistic models: branching cells and
  distributed probabilities for event structures.
\newblock {\em Information and Computation}, 204(2):231--274, 2006.

\bibitem{ab:computational-complexity-modern}
Sanjeev Arora and Boaz Barak.
\newblock {\em Computational Complexity: A Modern Approach}.
\newblock Cambridge University Press, 2009.

\bibitem{a:mdp-incomplete-state}
Karl~Johan Astrom.
\newblock Optimal control of {Markov} decision processes with incomplete state
  estimation.
\newblock {\em J. Math. Anal. Applic.}, 10:174--205, 1965.

\bibitem{badsbf:markov-decision-petri-nets}
Marco Beccuti, Elvio~Gilberto Amparore, Susanna Donatelli, Dimitri
  Scheftelowitsch, Peter Buchholz, and Giuliana Franceschinis.
\newblock Markov decision {P}etri nets with uncertainty.
\newblock In {\em Prof. of {EPEW} '15}, volume 9272 of {\em {LNCS}}, pages
  177--192. Springer, 2015.

\bibitem{Beccuti07}
Marco Beccuti, Giuliana Franceschinis, and Serge Haddad.
\newblock Markov decision {P}etri net and {M}arkov decision well-formed net
  formalisms.
\newblock In {\em Petri Nets and Other Models of Concurrency}, volume 4546 of
  {\em LNCS}, pages 43--62. Springer, 2007.

\bibitem{bellman1957markovian}
Richard Bellman.
\newblock A {M}arkovian decision process.
\newblock {\em Journal of mathematics and mechanics}, 6(5):679--684, 1957.

\bibitem{bchk:uncertainty-reasoning}
Rebecca Bernemann, Benjamin Cabrera, Reiko Heckel, and Barbara K{\"o}nig.
\newblock Uncertainty reasoning for probabilistic {P}etri nets via {B}ayesian
  networks.
\newblock In {\em Proc. of FSTTCS~'20}, volume 182 of {\em {LIPIcs}}, pages
  38:1--38:17. Schloss Dagstuhl -- Leibniz Center for Informatics, 2020.

\bibitem{BruniMM20}
Roberto Bruni, Hern{\'{a}}n~C. Melgratti, and Ugo Montanari.
\newblock Bayesian network semantics for {P}etri nets.
\newblock {\em Theor. Comput. Sci.}, 807:95--113, 2020.

\bibitem{chhk:update-ce-nets-bayesian}
Benjamin Cabrera, Tobias Heindel, Reiko Heckel, and Barbara K{\"o}nig.
\newblock Updating probabilistic knowledge on {Condition/Event} nets using
  {Bayesian} networks.
\newblock In {\em Proc. of CONCUR '18}, volume 118 of {\em {LIPIcs}}, pages
  27:1--27:17. Schloss Dagstuhl -- Leibniz Center for Informatics, 2018.

\bibitem{c:algorithms-mdps}
Anthony~R. Cassandra.
\newblock {\em Exact and Approximate Algorithms for Markov Decision Processes}.
\newblock PhD thesis, Brown University, {USA}, 1998.

\bibitem{c:new-complexity-map}
Cassio~P. de~Campos.
\newblock New complexity results for {MAP} in {Bayesian} networks.
\newblock In {\em Proc. of {IJCAI} '11}, pages 2100--2106. {IJCAI/AAAI}, 2011.

\bibitem{z3}
Leonardo de~Moura and Nikolaj Bj{\o}rner.
\newblock Z3: An efficient {SMT} solver.
\newblock In {\em Proc. of TACAS '08}, pages 337--340. Springer, 2008.

\bibitem{DeselEsparza95}
J{\"o}rg Desel and Javier Esparza.
\newblock {\em Free choice {P}etri nets}.
\newblock Number~40 in Cambridge Tracts in Theoretical Computer Science.
  Cambridge University Press, 1995.

\bibitem{eh:unfoldings-book}
Javier Esparza and Keijo Heljanko.
\newblock {\em Unfoldings: A Partial Order Approach to Model Checking}.
\newblock Springer, 2008.

\bibitem{fs:handbook-mdps}
E.A. Feinberg and A.~Shwartz, editors.
\newblock {\em Handbook of {Markov} Decision Processes}.
\newblock Kluwer, Boston, MA, 2002.

\bibitem{gj:intractability}
M.~R. Garey and D.~S. Johnson.
\newblock {\em Computers and Intractability}.
\newblock Freeman, 1979.

\bibitem{gs:markov-chains}
Charles Grinstead and Laurie Snell.
\newblock Markov chains.
\newblock In {\em Introduction to Probability}, chapter~11, pages 405--470.
  American Mathematical Society, second edition, 1997.

\bibitem{kg:prob-gts}
Christian Krause and Holger Giese.
\newblock Probabilistic graph transformation systems.
\newblock In {\em Proc. of ICGT '12}, volume 7562 of {\em {LNCS}}, pages
  311--325. Springer, 2012.

\bibitem{ldk:complexity-mdps}
Michael~L. Littman, Thomas~L. Dean, and Leslie~Pack Kaelbling.
\newblock On the complexity of solving {M}arkov decision problems.
\newblock In {\em Proc. of {UAI} '95}, pages 394--402. Morgan Kaufmann, 1995.

\bibitem{littman2001stochastic}
Michael~L. Littman, Stephen~M. Majercik, and Toniann Pitassi.
\newblock Stochastic boolean satisfiability.
\newblock {\em Journal of Automated Reasoning}, 27(3):251--296, 2001.

\bibitem{Marsan88}
Marco~Ajmone Marsan.
\newblock Stochastic {P}etri nets: an elementary introduction.
\newblock In {\em Advances in Petri Nets 1989}, volume 424 of {\em LNCS}, pages
  1--29. Springer, 1989.

\bibitem{MeyerEO19}
Philipp~J. Meyer, Javier Esparza, and Philip Offtermatt.
\newblock Computing the expected execution time of probabilistic workflow nets.
\newblock In {\em Proc. of TACAS '19}, volume 11428 of {\em {LNCS}}, pages
  154--171. Springer, 2019.

\bibitem{Papadimitriou94}
Christos~H. Papadimitriou.
\newblock {\em Computational complexity}.
\newblock Addison-Wesley, 1994.

\bibitem{park2004complexity}
James~D. Park and Adnan Darwiche.
\newblock Complexity results and approximation strategies for {MAP}
  explanations.
\newblock {\em Journal of Artificial Intelligence Research}, 21:101--133, 2004.

\bibitem{Pearl00}
Judea Pearl.
\newblock {\em Causality: Models, {R}easoning, and {I}nference}.
\newblock Cambridge University Press, 2000.

\bibitem{r:petri-nets}
Wolfgang Reisig.
\newblock {\em {Petri Nets: An Introduction}}.
\newblock EATCS Monographs on Theoretical Computer Science. Springer-Verlag,
  Berlin, Germany, 1985.

\bibitem{t:todas-theorem}
Seinosuke Toda.
\newblock {PP} is as hard as the polynomial-time hierarchy.
\newblock {\em SIAM Journal on Computing}, 20(5):865--877, 1991.

\bibitem{Tolver2016}
Anders Tolver.
\newblock An introduction to {M}arkov chains.
\newblock {\em Department of Mathematical Sciences, University of Copenhagen},
  2016.

\end{thebibliography}

\end{document}